\theoremstyle{plain}
\newtheorem{proposition}{Proposition}
\newtheorem{axiom}{Axiom}
\theoremstyle{definition}
\newtheorem{definition}{Definition}
\newtheorem{theorem}{Theorem}
\newtheorem{conjecture}{Conjecture}
\newtheorem{algorithm}{Algorithm}
\newtheorem{lemma}{Lemma}
\newtheorem{notation}{Notation}
\DeclareFontFamily{U}{rcjhbltx}{}
\DeclareFontShape{U}{rcjhbltx}{m}{n}{<->rcjhbltx}{}
\DeclareSymbolFont{hebrewletters}{U}{rcjhbltx}{m}{n}
\newcommand{\lamed}{\mathfrak{t}}
\newcommand{\tfrak}{\mathfrak{t}}
\newcommand{\ayin}{\mathfrak{l}}
\newcommand{\lfrak}{\mathfrak{l}}
\newcommand{\tsadi}{\mathfrak{s}}
\newcommand{\sfrak}{\mathfrak{s}}
\newcommand{\shin}{\mathfrak{s}_{\mathrm{tot}}}
\newcommand{\stotfrak}{\mathfrak{s}_{\mathrm{tot}}}
\providecommand\sgn{\text{sign}}
\newcommand{\acd}[1]{{\color{black} #1}}
\newcommand{\vik}[1]{{\color{black} #1}}
\newcommand{\sam}[1]{{\color{black} #1}}
\def\pprec{\mathrel{\scalebox{.9}[1]{$\prec$}\mkern-5mu%
  \scalebox{.4}[1]{$\prec$}\mkern-5.5mu\scalebox{.4}[1]{$\prec$}}}
\begin{document}

\title{Knot invariants and indefinite causal order}
\author{Samuel Fedida}
\affiliation{Centre for Quantum Information and Foundations, DAMTP, Centre for Mathematical Sciences, University of Cambridge, Wilberforce Road, Cambridge CB3 0WA, UK}
\author{Anne-Catherine de la Hamette}
\affiliation{University of Vienna, Faculty of Physics, Vienna Doctoral School in Physics, and Vienna Center for Quantum Science and Technology (VCQ), Boltzmanngasse 5, A-1090 Vienna, Austria}
\affiliation{Institute for Quantum Optics and Quantum Information (IQOQI), Austrian Academy of Sciences, Boltzmanngasse 3, A-1090 Vienna, Austria}
\author{Viktoria Kabel}
\affiliation{Institute for Quantum Optics and Quantum Information (IQOQI), Austrian Academy of Sciences, Boltzmanngasse 3, A-1090 Vienna, Austria}
\affiliation{University of Vienna, Faculty of Physics, Vienna Doctoral School in Physics, and Vienna Center for Quantum Science and Technology (VCQ), Boltzmanngasse 5, A-1090 Vienna, Austria}
\author{\v{C}aslav Brukner}
\affiliation{University of Vienna, Faculty of Physics, Vienna Doctoral School in Physics, and Vienna Center for Quantum Science and Technology (VCQ), Boltzmanngasse 5, A-1090 Vienna, Austria}
\affiliation{Institute for Quantum Optics and Quantum Information (IQOQI), Austrian Academy of Sciences, Boltzmanngasse 3, A-1090 Vienna, Austria}

\newcommand{\BH}{\mathbf{\mathcal{B}}(\mathcal{H})}
\newcommand{\snabla}{\slashed{\nabla}}
\newcommand{\LH}{\mathbf{\mathcal{L}}(\mathcal{H})}
\newcommand{\R}{\mathbb{R}}
\newcommand{\C}{\mathbb{C}}
\newcommand{\supp}{\text{supp}}
\newcommand{\Wlog}{\text{Without loss of generality }}
\newcommand{\Counter}{\text{Counter}}
\newcommand{\Diff}{\text{Diff}}

\begin{abstract}
   We explore indefinite causal order between events in the context of quasiclassical spacetimes in superposition. We introduce several new quantifiers to measure the degree of indefiniteness of the causal order for an arbitrary finite number of events and spacetime configurations in superposition. By constructing diagrammatic and knot-theoretic representations of the causal order between events, we find that the definiteness or maximal indefiniteness of the causal order is topologically invariant. This reveals an intriguing connection between the field of quantum causality and knot theory. Furthermore, we provide an operational encoding of indefinite causal order and discuss how to incorporate a measure of quantum coherence into our classification.
\end{abstract}

\maketitle

\section{Introduction}
Combining the indefiniteness of quantum theory with the dynamical causal structure of general relativity gives rise to the concept of \emph{indefinite causal order}. This idea was first explored in \cite{Hardy_2007} and later developed into a concrete process by \cite{Chiribella_2013, Oreshkov_2012}. Processes that exhibit indefinite causal order are those to which one cannot even assign a probabilistic, classical order. The most studied process with indefinite causal order is the \emph{quantum switch}, a process wherein two operations are applied in a quantum controlled superposition of opposite orders. The quantum switch has been implemented in optical experiments by preparing a photon in a superposition of paths \cite{procopio_experimental_2015,Rubino2017,rubino_experimental_2022} or polarizations \cite{Goswami2018}, which function as a control for the order of gate applications. On the theoretical side, several causal witnesses \cite{Araujo_2015, Branciard_2016} and causal inequalities \cite{Oreshkov_2012, Branciard_2015, Oreshkov_Giarmatzi_2016, Abbott_2016} have since been developed to certify different types of indefiniteness of the causal order. The process matrix formalism (e.g.~\cite{Oreshkov_2012, Araujo_2015}), which offers an abstract representation of processes in terms of supermaps, as well as several other diagrammatic representations such as \cite{Kissinger_2017a, Kissinger_2017b, Pinzani_2020,Barrett2021rev, Vanrietvelde2021routedquantum, Wechs2021, Vilasini_2022PRL, Vilasini_2022PRA, Ormrod2023causalstructurein, Vanrietvelde2023,Pinzani_2023} have proven to be useful tools to capture these ideas on a formal level.

While these works focus on the information theoretic aspects of indefinite causal structures, more recent work has taken a closer look at how such structures might arise in the context of indefinite gravitational fields \cite{zych_2019, SMoller2024gravitational, Paunkovic_2020} and to what extent they can be embedded in a fixed spacetime background \cite{Vilasini_2022PRL, Vilasini_2022PRA, Ormrod2023causalstructurein, Delahamette2022}. In \cite{Delahamette2022}, some of us formalised indefinite causal order between two events in a general relativistic language, extending the notion of event to superpositions of two quasiclassical spacetimes and defining indefinite causal order in terms of proper time differences between two such events.

In the present work, we develop a powerful method for quantifying the causal order between an arbitrary finite number of events in a superposition of an arbitrary finite number of quasiclassical spacetimes. Notably, we establish an intriguing connection between indefinite causal order and knot theory. This includes a representation of indefinite causal order in terms of different types of knots and their Alexander-Conway polynomials. We find that some of the quantifiers of indefinite causal order can be related to knot invariants. \vik{More concretely, }\sam{we show that this holds for extremal cases, and conjecture with support from explicit examples that this generalises}. The topological protection of these quantities then immediately implies the invariance of indefinite causal order under quantum-controlled topology-preserving transformations, reproducing and connecting to various results from the literature \cite{CastroRuiz2018, Delahamette2022}. More generally, we believe that this connection provides a promising tool for analysing and categorising further structural properties of indefinite causal order using the well-studied framework of knot theory.

This paper is organised as follows. In Sec.~\ref{sec: events causal order}, following \cite{Delahamette2022}, we review the notions of quasiclassical spacetimes in superposition, events and their localisation, and causal order as defined through proper time differences, extending these concepts to $N$ events and $M$ spacetimes in superposition. Next, we introduce several different quantifiers for causal order in Sec.~\ref{sec:quantifiers}. Sec.~\ref{sec:knots} establishes the connection of causal order to knot theory by providing a diagrammatic representation of indefinite causal order between events. In Sec.~\ref{sec:operational}, we extend the operational encoding of causal order first presented in \cite{Delahamette2022} to the general case and in Sec.~~\ref{sec:quantumcoherence}, we introduce several quantum mechanical quantifiers  that take measures of coherence into account. Finally, in Sec.~\ref{sec:discussion} we discuss our results, connect them to existing literature, and provide an outlook on open questions and future research directions.

Readers mainly interested in the connection between causal order and knot theory rather than the physical understanding behind causal order in a superposition of spacetimes, may choose to skip most of Sec.~\ref{sec: events causal order}, referring only to definitions \ref{def:causalorder} and \ref{def:orderedcollevents}, as well as Sec.~\ref{sec:quantumcoherence}.
 
\section{Events and Causal Order in a Superposition of Spacetimes} \label{sec: events causal order}

\subsection{Superposition of quasiclassical spacetimes}

We work with superpositions of \textit{quasiclassical spacetimes}. Quasiclassical spacetimes, sometimes called semiclassical spacetimes in the literature, are spacetimes that are assigned a quantum state of the metric field peaked around a classical solution. We would expect such states to arise, at least in an approximation, in an appropriate limit to a full theory of quantum gravity. More generally, their existence is implied by two assumptions, the linearity of quantum theory and the validity of general relativity in the regime under investigation, since they can be regarded as linear superpositions of quasiclassical solutions of Einstein's field equations. We do not, however, want to make any concrete assumptions about the underlying theory but rather take the existence of such states as a starting point whose consequences we explore below.

This requires us to start from the following physical axioms that ought to be verified in the appropriate regime of generic theories of quantum gravity:
\begin{axiom}
    \label{Classical gravitational fields}
    Classical gravitational fields are described by general relativity (GR), i.e.~$\exists$ pseudo-Riemannian manifolds $\{(\mathcal{M}_i,g^{(i)})\}_{i=1}^{N}$ for $N \in \mathbb{N} \cup \{\infty\}$ such that each $g^{(i)}$ is a solution of the Einstein Field Equations (EFE).
\end{axiom}
\begin{axiom}
    \label{Quasiclassical states}
    There exist quasiclassical quantum states $\ket{g^{(i)}}$, associated to the classical metrics $g^{(i)}$.
\end{axiom}
We can now formally work with quantum superpositions of quasiclassical gravitational fields. Such a scenario could conceivably arise, for example, when the gravitational source is placed in a quantum superposition of macroscopically distinct locations \cite{Bose2017,Marletto2017, Christodoulou_2019, delaHamette2021falling, aspelmeyer2021, Foo_2021}. Taking this as our underlying assumption, we can formalise the superposition of different gravitational fields as follows.
\begin{axiom}
    For a quantum system $M$ with Hilbert space $\mathcal{H}_M$ and a position basis labelled $\{\ket{x^{(i)}}_M\}$, which sources a gravitational field, the combined state of metric and quantum system takes the form
\begin{equation}
    \label{eq: Gravitational superposition}
    \ket{\Phi} = \sum_i \alpha_i \ket{\tilde{g}^{(i)}} = \sum_i \alpha_i \ket{x^{(i)}}_M \ket{g^{(i)}}
\end{equation}
with $\alpha_i \in \mathbb{C}$, such that $g^{(i)}$ is fully fixed by the matter distribution localised at $x^{(i)}$.    
\end{axiom}
The states $\ket{x^{(i)}}$ of the position basis of the source $M$ should be regarded as coherent states, to which one can approximately assign not only the position $x^{(i)}$ but also fixed, say zero, momentum. To simplify the notation, we use $\ket{x^{(i)}}$ for these states, however, they should not mistaken for states in which the position is infinitely sharp. The states \eqref{eq: Gravitational superposition} are, of course, not the most general case for a quantum superposition of gravitational fields. For example, one might want to consider superpositions of different vacuum solutions as well. Nevertheless, for the purpose of this work, it will be enough to restrict ourselves to states of the form \eqref{eq: Gravitational superposition}, which allows us to state more clearly the following axiom.
\begin{axiom}
    \label{Orthogonal gravitational fields}
    Macroscopically distinguishable gravitational fields are assigned orthogonal quantum states, that is,
    \begin{equation}
        \braket{\tilde{g}^{(i)}}{\tilde{g}^{(j)}} = 0 \text{ if } x^{(i)} \neq x^{(j)}.
    \end{equation} 
\end{axiom}
This axiom draws its relevance from the fact that different classical solutions of the Einstein field equations are macroscopically distinguishable.

\subsection{Events and causal order in a superposition of quasiclassical spacetimes}

Let us next consider the notion of event in such a superposition of quasiclassical spacetimes. In order to define events in a coordinate-independent way, we follow \cite{Delahamette2022} and define them in terms of worldline coincidences. More precisely, let us consider two worldlines on a spacetime $(\mathcal{M},g)$. We assume that these worldlines cross exactly once: this defines an event $\mathcal{E}$.

\begin{definition}[Event]
    The crossing of two worldlines within a spacetime $(\mathcal{M},g)$ at a spacetime point $p_{\mathcal{E}}\in \mathcal{M}$ defines an \textit{event} $\mathcal{E}$.
\end{definition}

Next, let us consider a superposition of two spacetimes $(\mathcal{M}_{\mathcal{A}},g_{\mathcal{A}})$ and  $(\mathcal{M}_{\mathcal{B}},g_{\mathcal{B}})$. We denote by $\mathcal{E}^{\mathcal{A}}$ and $\mathcal{E}^{\mathcal{B}}$ the same physical event -- that is, the crossing of the same worldlines -- in the respective spacetimes. {This definition of event is consistent with (although not equivalent to) the notion of an event as a CP map in the context of the process matrix formalism \cite{Oreshkov_2012, Araujo_2015}. To fully establish the connection, one would have to understand the worldlines as the worldlines of laboratories, at the intersections of which a CP map is implemented in a way that does not distinguish between the different branches of the superposition. Moreover,} note that this operational identification of events is independent of their location on the spacetime manifold. Nevertheless, one may want to find a representation  such that the same events are also located at the same spacetime points $p_\mathcal{E}^\mathcal{A}\in \mathcal{M}_\mathcal{A}$ and $p_\mathcal{E}^\mathcal{B}\in \mathcal{M}_\mathcal{B}$ -- that they are \emph{localised} across the branches of the superposition. In the presence of diffeomorphism-invariance, however, the question of whether two points across the branches of a superposition are \enquote{the same} or \enquote{different} is non-trivial. In this case, spacetime points do not have any intrinsic physical meaning {\cite{sepholearg, anandan1997classical,Hardy2020,Adlam_2022spacetime,Kabel2024}}. In particular, they can be arbitrarily reshuffled by applying, in general, different diffeomorphisms in each branch of the superposition. As a consequence, whether two events are associated to the same or different manifold points across the superposition can change under a such a transformation. To resolve this issue, we follow \cite{Kabel2024} and identify spacetime points across a superposition by physical field coincidences. Let us therefore assume the existence of sufficiently many functionally independent physical fields in the following axiom.

\begin{axiom}
    \label{Field axiom}
    For each $d$-dimensional spacetime $(\mathcal{M}_\mathcal{A},g^{\mathcal{A}})$ $\exists$ d functionally independent fields $\chi_{(A)}^{\mathcal{A}}$ for $A = 1, ..., d$ such that they define a bijective map $\mathcal{U}_{\mathcal{A}} \to \mathbb{R}^d$ in some open subregion $\mathcal{U}_\mathcal{A} \subset \mathcal{M}_{\mathcal{A}}$ of interest. 
\end{axiom}

The bijectivity condition of axiom \ref{Field axiom} requires that these fields do not repeat their values, i.e.~that they are neither homogeneous in space nor periodic in time within the region of interest. We can now use the $d$ independent fields to construct a map between two different spacetimes and thereby identify spacetime points in a coordinate-independent manner.
\begin{definition}[Comparison map]
    The \textit{comparison map} $C_\chi^{\mathcal{A}\mathcal{B}}: \mathcal{U}_\mathcal{A} \to \mathcal{U}_\mathcal{B}$ associates to each point $p \in \mathcal{U}_\mathcal{A} \subset \mathcal{M}_\mathcal{A}$ a unique counterpart $q = C^{\mathcal{A}\mathcal{B}}_\chi(p) \in \mathcal{U}_\mathcal{B} \subset \mathcal{M}_\mathcal{B}$ where
    \begin{equation}
        C^{\mathcal{A}\mathcal{B}}_\chi := (\chi^{\mathcal{B}})^{-1} \circ \chi^{\mathcal{A}}.
    \end{equation}
\end{definition}
Given a superposition of an arbitrary number $M$ of spacetimes, we can define a comparison map between any pair of spacetimes. The relation between these comparison maps is depicted in Fig.~\ref{fig:Comparison map}. In general, the $\chi$-fields can be chosen freely as long as they satisfy the above bijectivity condition. Here, we choose the $\chi$-fields such that for each event, the points associated to the spacetimes are identified with one another. This way, we ensure that the same events, in the operational sense, are also associated to the same spacetime points across the branches in superposition.
\begin{figure*}
    \centering
    \begin{tikzpicture}[x=0.75pt,y=0.75pt,yscale=-1,xscale=1]
\draw  [fill={rgb, 255:red, 155; green, 155; blue, 155 }  ,fill opacity=0.2 ] (313.1,12.33) -- (460.33,12.33) -- (397.23,64.67) -- (250,64.67) -- cycle ;
\draw  [fill={rgb, 255:red, 155; green, 155; blue, 155 }  ,fill opacity=0.2 ] (313.1,111) -- (460.33,111) -- (397.23,163.33) -- (250,163.33) -- cycle ;
\draw  [fill={rgb, 255:red, 155; green, 155; blue, 155 }  ,fill opacity=0.2 ] (313.1,213) -- (460.33,213) -- (397.23,265.33) -- (250,265.33) -- cycle ;
\draw  [fill={rgb, 255:red, 0; green, 0; blue, 0 }  ,fill opacity=1 ] (356.67,45.5) .. controls (356.67,44.3) and (357.64,43.33) .. (358.83,43.33) .. controls (360.03,43.33) and (361,44.3) .. (361,45.5) .. controls (361,46.7) and (360.03,47.67) .. (358.83,47.67) .. controls (357.64,47.67) and (356.67,46.7) .. (356.67,45.5) -- cycle ;
\draw  [fill={rgb, 255:red, 0; green, 0; blue, 0 }  ,fill opacity=1 ] (332.67,129.5) .. controls (332.67,128.3) and (333.64,127.33) .. (334.83,127.33) .. controls (336.03,127.33) and (337,128.3) .. (337,129.5) .. controls (337,130.7) and (336.03,131.67) .. (334.83,131.67) .. controls (333.64,131.67) and (332.67,130.7) .. (332.67,129.5) -- cycle ;
\draw  [fill={rgb, 255:red, 0; green, 0; blue, 0 }  ,fill opacity=1 ] (346,242.83) .. controls (346,241.64) and (346.97,240.67) .. (348.17,240.67) .. controls (349.36,240.67) and (350.33,241.64) .. (350.33,242.83) .. controls (350.33,244.03) and (349.36,245) .. (348.17,245) .. controls (346.97,245) and (346,244.03) .. (346,242.83) -- cycle ;
\draw    (447.33,39.33) .. controls (489.03,34.41) and (614.49,28.51) .. (618.25,92.69) ;
\draw [shift={(618.33,95.67)}, rotate = 270] [fill={rgb, 255:red, 0; green, 0; blue, 0 }  ][line width=0.08]  [draw opacity=0] (8.93,-4.29) -- (0,0) -- (8.93,4.29) -- cycle    ;
\draw    (447.33,261.33) .. controls (485.28,262.98) and (619.61,278.68) .. (621.65,205.9) ;
\draw [shift={(621.67,203.67)}, rotate = 89.49] [fill={rgb, 255:red, 0; green, 0; blue, 0 }  ][line width=0.08]  [draw opacity=0] (8.93,-4.29) -- (0,0) -- (8.93,4.29) -- cycle    ;
\draw    (450.67,145.33) -- (524.67,145.01) ;
\draw [shift={(527.67,145)}, rotate = 179.75] [fill={rgb, 255:red, 0; green, 0; blue, 0 }  ][line width=0.08]  [draw opacity=0] (8.93,-4.29) -- (0,0) -- (8.93,4.29) -- cycle    ;
\draw  [draw opacity=0] (547.33,101) -- (647.67,101) -- (647.67,201.67) -- (547.33,201.67) -- cycle ; \draw   (547.33,101) -- (547.33,201.67)(567.33,101) -- (567.33,201.67)(587.33,101) -- (587.33,201.67)(607.33,101) -- (607.33,201.67)(627.33,101) -- (627.33,201.67)(647.33,101) -- (647.33,201.67) ; \draw   (547.33,101) -- (647.67,101)(547.33,121) -- (647.67,121)(547.33,141) -- (647.67,141)(547.33,161) -- (647.67,161)(547.33,181) -- (647.67,181)(547.33,201) -- (647.67,201) ; \draw    ;
\draw  [fill={rgb, 255:red, 0; green, 0; blue, 0 }  ,fill opacity=1 ] (595.33,146.17) .. controls (595.33,144.97) and (596.3,144) .. (597.5,144) .. controls (598.7,144) and (599.67,144.97) .. (599.67,146.17) .. controls (599.67,147.36) and (598.7,148.33) .. (597.5,148.33) .. controls (596.3,148.33) and (595.33,147.36) .. (595.33,146.17) -- cycle ;
\draw    (247,46.33) .. controls (179.36,46.99) and (179.64,133.03) .. (244.01,135.61) ;
\draw [shift={(247,135.67)}, rotate = 180] [fill={rgb, 255:red, 0; green, 0; blue, 0 }  ][line width=0.08]  [draw opacity=0] (8.93,-4.29) -- (0,0) -- (8.93,4.29) -- cycle    ;
\draw    (245.67,145.67) .. controls (178.03,146.32) and (178.31,232.36) .. (242.68,234.94) ;
\draw [shift={(245.67,235)}, rotate = 180] [fill={rgb, 255:red, 0; green, 0; blue, 0 }  ][line width=0.08]  [draw opacity=0] (8.93,-4.29) -- (0,0) -- (8.93,4.29) -- cycle    ;
\draw    (243,21) .. controls (17.67,20.33) and (17.67,260.33) .. (241.67,261.67) ;
\draw [shift={(241.67,261.67)}, rotate = 180.34] [fill={rgb, 255:red, 0; green, 0; blue, 0 }  ][line width=0.08]  [draw opacity=0] (8.93,-4.29) -- (0,0) -- (8.93,4.29) -- cycle    ;

\draw (307.33,71.07) node [anchor=north west][inner sep=0.75pt]  {$\left(\mathcal{M}_{\mathcal{A}} ,g^{\mathcal{A}} ,\chi _{( A)}^{\mathcal{A}}\right)$};
\draw (313.33,169.73) node [anchor=north west][inner sep=0.75pt]  {$\left(\mathcal{M}_{\mathcal{B}} ,g^{\mathcal{B}} ,\chi _{( A)}^{\mathcal{B}}\right)$};
\draw (312.67,270.4) node [anchor=north west][inner sep=0.75pt]  {$\left(\mathcal{M}_{\mathcal{C}} ,g^{\mathcal{C}} ,\chi _{( A)}^{\mathcal{C}}\right)$};
\draw (355.33,27.33) node [anchor=north west][inner sep=0.75pt] [align=left] {p};
\draw (320.67,115.33) node [anchor=north west][inner sep=0.75pt]  [align=left] {q};
\draw (348.67,223.33) node [anchor=north west][inner sep=0.75pt] [align=left] {r};
\draw (568.67,71.73) node [anchor=north west][inner sep=0.75pt]  {$\mathbb{R}^{d}$};
\draw (593.33,125.73) node [anchor=north west][inner sep=0.75pt] {$\chi $};
\draw (488,128.4) node [anchor=north west][inner sep=0.75pt]  {$\chi ^{\mathcal{B}}$};
\draw (537.33,243.73) node [anchor=north west][inner sep=0.75pt]  {$\chi ^{\mathcal{C}}$};
\draw (535.33,45.07) node [anchor=north west][inner sep=0.75pt]  {$\chi ^{\mathcal{A}}$};
\draw (203.33,84.73) node [anchor=north west][inner sep=0.75pt]   {$C_{\chi }^{\mathcal{AB}}$};
\draw (200,183.4) node [anchor=north west][inner sep=0.75pt]  {$C_{\chi }^{\mathcal{BC}}$};
\draw (75.33,132.73) node [anchor=north west][inner sep=0.75pt]  {$C_{\chi }^{\mathcal{AC}} =C_{\chi }^{\mathcal{BC}} \circ C_{\chi }^{\mathcal{AB}}$};
\end{tikzpicture}
    \caption{\justifying A strategy to identify points $p\in\mathcal{M_A}$, $q \in \mathcal{M_B}$, and  $r\in\mathcal{M_C}$ across a superposition of three spacetimes. Following  \cite{Kabel2024}, we identify those points at which a chosen set of reference fields $\chi$ takes on the same values, that is, points for which $\chi^\mathcal{A}(p) = \chi^\mathcal{B}(q) = \chi^\mathcal{C}(r) = \chi \in \mathbb{R}^d$.}
    \label{fig:Comparison map}
\end{figure*}

Next, let us define the causal order between two events in a given spacetime. To this end, consider now three timelike future-directed worldlines $\gamma_0, \gamma_a, a = 1,2$, where $\gamma_0$ can be seen as the worldline of some test particle and intersects $\gamma_1$ and $\gamma_2$ respectively exactly once. We denote the event defined by the crossing of $\gamma_0$ with $\gamma_a$ by $\mathcal{E}_a$. We parameterise $\gamma_0$ by the proper time of the test particle and define the proper time $\tau_a$ associated to the event $\mathcal{E}_a$ by
\begin{equation}
    \tau_a = \int_{p_{\mathrm{init}}}^{p_{\mathcal{E}_a}} \frac{ds}{c} = \int_{p_{\mathrm{init}}}^{p_{\mathcal{E}_a}} \frac{\sqrt{-g_{\mu\nu}dx^\mu dx^\nu}}{c},
\end{equation}
where $p_{\mathrm{init}}$ denotes an arbitrary initial point of the test particle's worldline.

For two distinct points $p_{\mathcal{E}_1}$ and $p_{\mathcal{E}_2}$ we have that $\tau_1 \neq \tau_2$. Let
\begin{equation}
    \Delta \tau = \tau_2 - \tau_1 = s\abs{\tau_2 - \tau_1}
\end{equation}
so that 
\begin{equation}
    s = \sgn(\tau_2 - \tau_1)
\end{equation}
is the causal order of the events, with $s=1$ if $\mathcal{E}_2$ lies in the future of $\mathcal{E}_1$ and $s=-1$ if it lies in the past.\\

Given $N$ timelike future-oriented worldlines $\gamma_a$, $a=1,\dots,N$ that each cross $\gamma_0$ exactly once, defining $N$ events $\mathcal{E}_a$, this notion of causal order straightforwardly extends to all pairs of any finite number of events.

\begin{definition}[Causal order between two events]
    The \textit{causal order} between events $\mathcal{E}_a$ and $\mathcal{E}_b$ of a spacetime $(\mathcal{M},g)$ is 
    \begin{equation}
        s^\mathcal{M}_{ab} := \sgn(\tau_b - \tau_a)
    \end{equation}
    where $\tau_a$ and $\tau_b$ are the proper times corresponding to the crossing points $\gamma_0(\tau_a)=p_{\mathcal{E}_a}$ and $\gamma_0(\tau_b)=p_{\mathcal{E}_b}$, respectively. \label{def:causalorder}
\end{definition}

\begin{proposition}
    \begin{equation}
        s^\mathcal{M}_{ab} = -s^\mathcal{M}_{ba} = s^\mathcal{M}_{[ba]}
    \end{equation}
    where brackets denote antisymmetrisation.
\end{proposition}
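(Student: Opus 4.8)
The plan is to derive both equalities directly from Definition~\ref{def:causalorder}, $s^\mathcal{M}_{ab} = \sgn(\tau_b - \tau_a)$, using only the algebraic properties of the sign function together with the definition of antisymmetrisation; no geometric input beyond the definition is required.

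First I would establish the antisymmetry $s^\mathcal{M}_{ab} = -s^\mathcal{M}_{ba}$. The single ingredient here is that the sign function is odd, $\sgn(-x) = -\sgn(x)$ for every $x \in \mathbb{R}$. Evaluating the definition with the indices swapped gives $s^\mathcal{M}_{ba} = \sgn(\tau_a - \tau_b) = \sgn\!\left(-(\tau_b - \tau_a)\right) = -\sgn(\tau_b - \tau_a) = -s^\mathcal{M}_{ab}$, which is exactly the first equality. Note that this step uses only that $\tau_a$ and $\tau_b$ are real numbers; the condition $\tau_a \neq \tau_b$ for distinct crossing points merely guarantees $s^\mathcal{M}_{ab} \neq 0$ and is not needed for the identity itself.

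Second, I would unfold the antisymmetrisation bracket. For a two-index object, antisymmetrisation over the index pair reads $s^\mathcal{M}_{[ba]} = \tfrac{1}{2}\left(s^\mathcal{M}_{ab} - s^\mathcal{M}_{ba}\right)$ once the normalisation and reference ordering of the subtraction are fixed. Substituting the antisymmetry relation $s^\mathcal{M}_{ab} = -s^\mathcal{M}_{ba}$ obtained above collapses the right-hand side to $\tfrac{1}{2}\left(s^\mathcal{M}_{ab} + s^\mathcal{M}_{ab}\right) = s^\mathcal{M}_{ab} = -s^\mathcal{M}_{ba}$, closing the chain of equalities. Conceptually this merely records the familiar fact that an already-antisymmetric quantity is left invariant by antisymmetrisation. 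The proposition therefore presents no genuine obstacle, being a direct algebraic identity; the only point meriting care is to fix the antisymmetrisation convention --- both the $\tfrac{1}{2}$ normalisation and the order in which the two index permutations are subtracted --- consistently from the start, since these choices determine the overall sign and must be aligned with the sign displayed in the statement.
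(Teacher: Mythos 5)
Your proposal is correct and its core is exactly the paper's proof: the antisymmetry $s^\mathcal{M}_{ab}=-s^\mathcal{M}_{ba}$ via the oddness of $\sgn$, which is all the paper's one-line proof establishes. Your additional unfolding of the bracket is a sensible completion of what the paper leaves implicit, and your caveat about fixing the subtraction order is well taken: with the more common convention $s^\mathcal{M}_{[ba]}=\tfrac{1}{2}\bigl(s^\mathcal{M}_{ba}-s^\mathcal{M}_{ab}\bigr)$ the stated chain would in fact need the bracket to read $[ab]$, so the convention you adopt, $s^\mathcal{M}_{[ba]}=\tfrac{1}{2}\bigl(s^\mathcal{M}_{ab}-s^\mathcal{M}_{ba}\bigr)$, is the one under which the proposition as displayed is literally true.
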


\begin{proof}
    \begin{equation}
        s^\mathcal{M}_{ab} = \sgn(\tau_b - \tau_a) = - \sgn(\tau_a - \tau_b) = - s^\mathcal{M}_{ba}
    \end{equation}
\end{proof}

This can be extended straightforwardly to an arbitrary finite number $M$ of spacetimes in superposition. We denote these events by $\mathcal{E}_a^\mathcal{X}$ where $\mathcal{X} = \mathcal{A},\mathcal{B}, \dots$ denotes the different branches of the superposition, while $a = 1,\dots,N$ labels the different events in each of the branches respectively. Note that for given $a$, all instances $\mathcal{E}_a^{\mathcal{A}}, \mathcal{E}_a^{\mathcal{B}}, \dots $ refer to the \emph{same} physical event -- that is, the crossing of the worldlines $\gamma_0$ with $\gamma_a$ -- in different branches of the superposition. Given the causal order between any two events in a given spacetime, we can arrange the events in an ordered set.

\begin{definition}[Ordered collection of events]
    The \textit{ordered collection of events of a spacetime} $(\mathcal{M}_{\mathcal{X}},g_\mathcal{X})$ is the totally ordered (countable) set 
    \begin{equation}
        \mathcal{S}_\mathcal{X} := \{\mathcal{E}_1^\mathcal{X} \prec \mathcal{E}_2^\mathcal{X} \prec ...\}\label{eq: orderedCollection}
    \end{equation}
    where the total order $\prec$ is given by $\mathcal{E}_a^\mathcal{X} \prec \mathcal{E}_b^\mathcal{X}$ if $s_{ab}^{\mathcal{X}} = 1$. 
    \label{def:orderedcollevents}
\end{definition}

The information of causal orderings is thus entirely encapsulated in the collection of all ordered collections of events over all spacetimes, that is, in $\mathcal{S}:= \{\mathcal{S}_\mathcal{X}\}_\mathcal{X}$.
In general, $S_\mathcal{X}$ can differ across different spacetimes $(\mathcal{M}_\mathcal{X}, g^\mathcal{X})$ -- although the number of events contained in this set will always be the same, their ordering can vary. This will, in general, give rise to an \emph{indefinite} causal order. In the next section, we look at different ways of quantifying this indefiniteness of causal order. {Let us note that the following discussion, while motivated by the understanding of events as crossings of worldlines in a superposition of quasiclassical spacetimes, also applies more generally to processes with quantum-controlled causal order, which give rise to a superposition of ordered collections of events $\mathcal{S}_\mathcal{X}$. In this case, the index $\mathcal{X}$ could refer to any other (ancillary) degree of freedom that distinguishes the branches of the superposition, such as the path \cite{procopio_experimental_2015,Rubino2017,rubino_experimental_2022} or polarization \cite{Goswami2018} of a photon.}

\section{Causal order quantifiers}
\label{sec:quantifiers}

\subsection{$M=2$ spacetimes, $N<\infty$ events}

For now, we consider a superposition of two spacetimes $(\mathcal{M}_\mathcal{A},g_{\mathcal{A}})$ and $(\mathcal{M}_\mathcal{B},g_{\mathcal{B}})$, i.e.
\begin{equation}
    \ket{\Psi} = c_1 \ket{\Tilde{g}^\mathcal{A}} + c_2 \ket{\Tilde{g}^\mathcal{B}}
\end{equation}
where $c_1,c_2 \in \mathbb{C}, \abs{c_1}^2 + \abs{c_2}^2 = 1$. We start with the case of finitely many events, the same number $N$ in both spacetimes.

\begin{definition}[Pairwise causal order]
   For two spacetimes $(\mathcal{M}_\mathcal{A},g_{\mathcal{A}})$ and $(\mathcal{M}_\mathcal{B},g_{\mathcal{B}})$, we define the \textit{pairwise causal order} between two events $\mathcal{E}_a$ and $\mathcal{E}_b$ to be
\begin{equation}
    \tsadi^{\mathcal{A}\mathcal{B}}_{a b} := s^{\mathcal{A}}_{a  b} s^{\mathcal{B}}_{a b}
\end{equation}
where the identification of events across the different spacetimes has been made through the construction of Sec.~\ref{sec: events causal order}. We say that the pairwise causal order between two events $\mathcal{E}_a$ and $\mathcal{E}_b$ is \textit{definite} if $\tsadi^{\mathcal{A}\mathcal{B}}_{a b} = 1$. In that case, the order of $\mathcal{E}_a$ and $\mathcal{E}_b$ is the same in both spacetimes. We say that the pairwise causal order is \textit{indefinite} if $\tsadi^{\mathcal{A}\mathcal{B}}_{a b} = -1$. In that case, the order of $\mathcal{E}_a$ and $\mathcal{E}_b$ is opposite across both spacetimes.
\end{definition}

\begin{proposition}
    \begin{equation}
    \tsadi^{\mathcal{A}\mathcal{B}}_{a b} = \tsadi^{\mathcal{B}\mathcal{A}}_{a b} = \tsadi^{(\mathcal{A}\mathcal{B})}_{a b}
    \end{equation}
    where parentheses denote symmetrisation.
\end{proposition}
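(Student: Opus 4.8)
The plan is to read off both equalities directly from the defining relation $\tsadi^{\mathcal{A}\mathcal{B}}_{ab} = s^{\mathcal{A}}_{ab}\, s^{\mathcal{B}}_{ab}$, in which each factor $s^{\mathcal{X}}_{ab} \in \{-1,+1\}$ is an ordinary real scalar. First I would establish the invariance under exchange of the spacetime labels using commutativity of multiplication of reals: since $s^{\mathcal{A}}_{ab}\, s^{\mathcal{B}}_{ab} = s^{\mathcal{B}}_{ab}\, s^{\mathcal{A}}_{ab}$, we obtain $\tsadi^{\mathcal{A}\mathcal{B}}_{ab} = \tsadi^{\mathcal{B}\mathcal{A}}_{ab}$ immediately, with no reference to the actual values of the individual order factors.

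For the second equality I would unpack the symmetrisation bracket over the spacetime superscripts in the standard way,
\begin{equation}
\tsadi^{(\mathcal{A}\mathcal{B})}_{ab} = \tfrac{1}{2}\left(\tsadi^{\mathcal{A}\mathcal{B}}_{ab} + \tsadi^{\mathcal{B}\mathcal{A}}_{ab}\right),
\end{equation}
and then substitute the equality just derived. The two summands coincide, so the factor $1/2$ cancels against the factor of two, leaving $\tsadi^{(\mathcal{A}\mathcal{B})}_{ab} = \tsadi^{\mathcal{A}\mathcal{B}}_{ab}$, which closes the chain of equalities.

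I expect no genuine obstacle here: the content reduces to the observation that a product of two scalars is automatically symmetric in those scalars, so symmetrising over the spacetime labels has no effect. The only point requiring a modicum of care is notational bookkeeping — confirming that the parentheses symmetrise over the spacetime superscripts $\mathcal{A},\mathcal{B}$ rather than over the event subscripts $a,b$, and that the symmetrisation carries the conventional $1/2$ normalisation so that the cancellation goes through exactly.
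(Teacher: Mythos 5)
Your proof is correct and follows essentially the same route as the paper, which likewise rests on the observation that the $s^{\mathcal{X}}_{ab}$ are real numbers so their product commutes under exchange of the spacetime labels. The only difference is that you spell out the symmetrisation identity $\tsadi^{(\mathcal{A}\mathcal{B})}_{ab} = \tfrac{1}{2}\bigl(\tsadi^{\mathcal{A}\mathcal{B}}_{ab} + \tsadi^{\mathcal{B}\mathcal{A}}_{ab}\bigr)$ explicitly, a step the paper leaves implicit.
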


\begin{proof}
    The causal order between two events is a real number and thus commutes under multiplication.
\end{proof}

\begin{proposition}
\begin{equation}
    \tsadi^{\mathcal{A}\mathcal{B}}_{ab} = \tsadi^{\mathcal{A}\mathcal{B}}_{ba} = \tsadi^{\mathcal{A}\mathcal{B}}_{(ab)}
\end{equation}
\end{proposition}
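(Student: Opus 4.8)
The plan is to reduce the claimed symmetry of the pairwise causal order to the antisymmetry of the single-spacetime causal order already established in the first proposition of this section. By definition the pairwise quantity is the product $\tsadi^{\mathcal{A}\mathcal{B}}_{ab} = s^{\mathcal{A}}_{ab}\,s^{\mathcal{B}}_{ab}$ of two factors, each of which individually flips sign under the exchange $a \leftrightarrow b$. Since two sign flips cancel, the product is invariant, which is exactly the symmetry under swapping the event labels that we wish to prove.

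Concretely, I would first write out $\tsadi^{\mathcal{A}\mathcal{B}}_{ba} = s^{\mathcal{A}}_{ba}\,s^{\mathcal{B}}_{ba}$ straight from the definition of the pairwise causal order. Next I would invoke the earlier relation $s^{\mathcal{M}}_{ab} = -\,s^{\mathcal{M}}_{ba}$ separately in each branch of the superposition, giving $s^{\mathcal{A}}_{ba} = -\,s^{\mathcal{A}}_{ab}$ and $s^{\mathcal{B}}_{ba} = -\,s^{\mathcal{B}}_{ab}$. Substituting both into the expression for $\tsadi^{\mathcal{A}\mathcal{B}}_{ba}$, the two signs combine to $(-1)^2 = 1$, so that $\tsadi^{\mathcal{A}\mathcal{B}}_{ba} = (-\,s^{\mathcal{A}}_{ab})(-\,s^{\mathcal{B}}_{ab}) = s^{\mathcal{A}}_{ab}\,s^{\mathcal{B}}_{ab} = \tsadi^{\mathcal{A}\mathcal{B}}_{ab}$.

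Finally, having established $\tsadi^{\mathcal{A}\mathcal{B}}_{ab} = \tsadi^{\mathcal{A}\mathcal{B}}_{ba}$, the symmetrised object $\tsadi^{\mathcal{A}\mathcal{B}}_{(ab)} = \tfrac{1}{2}\!\left(\tsadi^{\mathcal{A}\mathcal{B}}_{ab} + \tsadi^{\mathcal{A}\mathcal{B}}_{ba}\right)$ reduces to $\tsadi^{\mathcal{A}\mathcal{B}}_{ab}$ itself, completing the chain of equalities. There is essentially no obstacle here: the statement is an immediate corollary of the single-spacetime antisymmetry, and the only point demanding care is to apply that antisymmetry in \emph{both} branches $\mathcal{A}$ and $\mathcal{B}$, so that the cancellation of signs is complete rather than leaving a residual factor of $-1$.
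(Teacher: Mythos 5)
Your proof is correct and follows essentially the same route as the paper's: write out $\tsadi^{\mathcal{A}\mathcal{B}}_{ba}$ from the definition, apply the antisymmetry $s^{\mathcal{M}}_{ab} = -s^{\mathcal{M}}_{ba}$ in each of the two branches, and let the two sign flips cancel. The only addition is your explicit treatment of the symmetrised notation $\tsadi^{\mathcal{A}\mathcal{B}}_{(ab)}$, which the paper leaves implicit.
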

\begin{proof}
    \begin{equation}
        \tsadi^{\mathcal{A}\mathcal{B}}_{a b} = s^{\mathcal{A}}_{a  b} s^{\mathcal{B}}_{a b} = - s^{\mathcal{A}}_{b  a} s^{\mathcal{B}}_{a b} = + s^{\mathcal{A}}_{b  a} s^{\mathcal{B}}_{b a} = \tsadi^{\mathcal{A}\mathcal{B}}_{b a}
    \end{equation}
\end{proof}

We can naturally generalise the pairwise causal order between two events to the case of an arbitrary number $N$ of events as
\begin{align}
\begin{cases}
    \tsadi^{\mathcal{A}\mathcal{B}}_{12} \equiv s^{\mathcal{A}}_{1  2} s^{\mathcal{B}}_{12} &= \pm 1 \\
    \tsadi^{\mathcal{A}\mathcal{B}}_{23} \equiv s^{\mathcal{A}}_{23} s^{\mathcal{B}}_{23} &= \pm 1 \\
    \tsadi^{\mathcal{A}\mathcal{B}}_{13} \equiv s^{\mathcal{A}}_{13} s^{\mathcal{B}}_{13}&= \pm 1 \\
    &\;\;\vdots \nonumber \\
    \tsadi^{\mathcal{A}\mathcal{B}}_{N-1 N} \equiv s^{\mathcal{A}}_{N-1 N} s^{\mathcal{B}}_{N-1 N} &= \pm 1
\end{cases}
\end{align}

\begin{definition}[Longitudinal causal order]
    The \textit{longitudinal causal order} between two spacetimes $(\mathcal{M}_\mathcal{A},g_{\mathcal{A}})$ and $(\mathcal{M}_\mathcal{B},g_{\mathcal{B}})$ is 
\begin{equation}
    \ayin^{\mathcal{A}\mathcal{B}} := \sum_{1 \leq i < j}^N \tsadi^{\mathcal{A}\mathcal{B}}_{ij}
\end{equation}
with, once again,
\begin{equation}
    \ayin^{\mathcal{A}\mathcal{B}} = \ayin^{\mathcal{B}\mathcal{A}} = \ayin^{(\mathcal{A}\mathcal{B})}.
\end{equation}
\end{definition}

A visual representation of pairwise and longitudinal causal orders as measures of indefinite causal order is given in Fig.~\ref{fig:Causal orders}.

\begin{figure}[b!]
    \centering
    \includegraphics[scale=0.18]{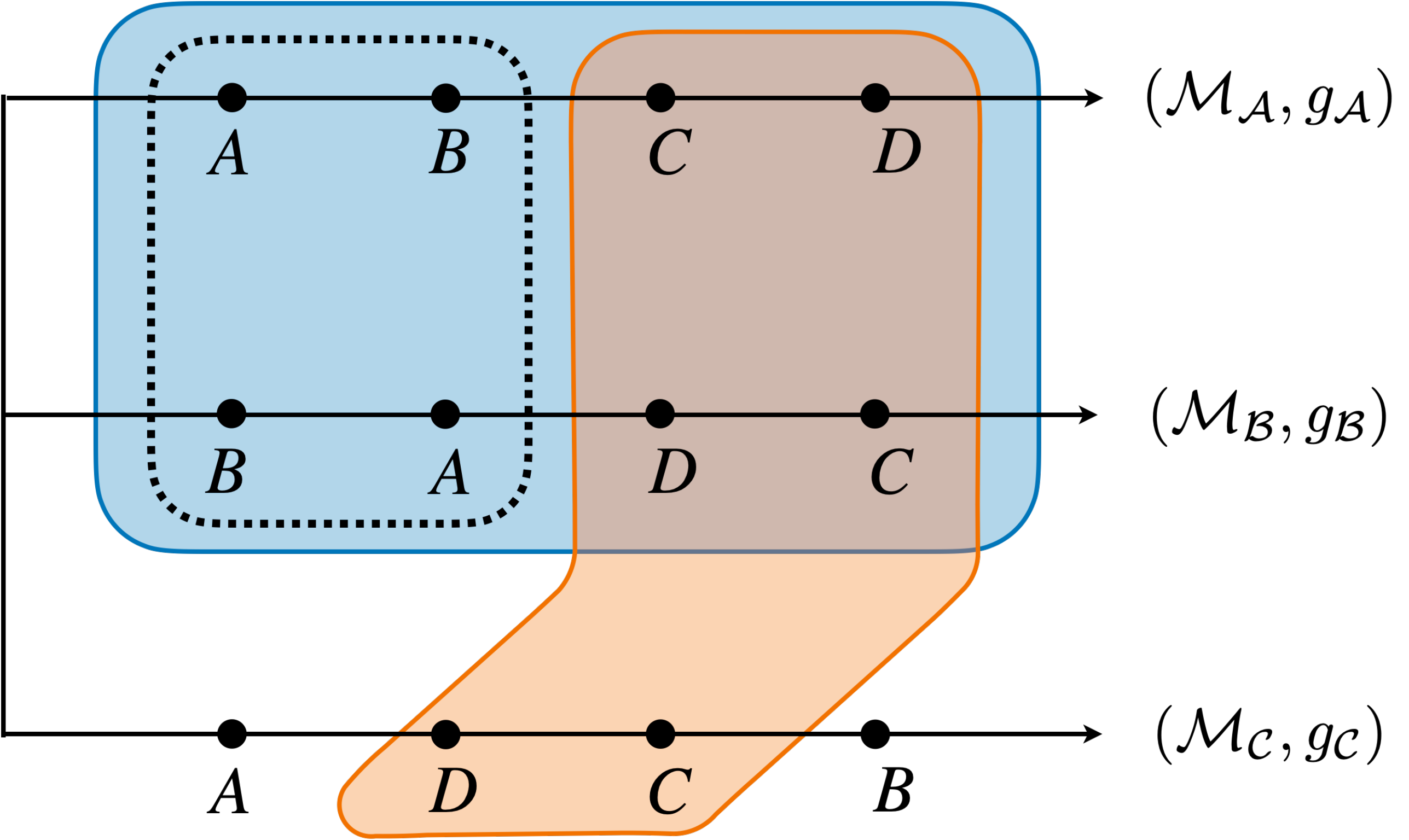}
    \caption{\justifying Measures of indefinite causal order for $N=4$ events across a superposition of $M=3$ spacetimes. The dashed black line describes \emph{pairwise causal order} $\sfrak_{12}^{\mathcal{AB}}=-1$; the blue rectangle captures the \emph{longitudinal causal order} $\lfrak^{\mathcal{A}\mathcal{B}}=2$, as well the causal indefiniteness $\delta(\mathcal{A},\mathcal{B})=2$; the orange shape represents the \emph{transverse causal order} $\tfrak_{34}=-1$. The \emph{total causal order} is $\stotfrak = 2$ and the total causal indefiniteness is $\Delta = 8$.}
    \label{fig:Causal orders}
\end{figure}

\begin{proposition}
    For two spacetimes $(\mathcal{M}_\mathcal{A},g_{\mathcal{A}})$ and $(\mathcal{M}_\mathcal{B},g_{\mathcal{B}})$ with $N$ events, 
    \begin{equation}
        -\begin{pmatrix} N \\ 2 \end{pmatrix}\leq\ayin^{\mathcal{A}\mathcal{B}}\leq \begin{pmatrix} N \\ 2 \end{pmatrix}
    \end{equation}
\end{proposition}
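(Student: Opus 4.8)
The plan is to recognise that $\ayin^{\mathcal{A}\mathcal{B}}$ is, by its very definition, a finite sum of pairwise causal orders, each of which is constrained to take the value $\pm 1$, and then to bound the sum simply by counting its terms and applying the triangle inequality. No structural or combinatorial input about the relative arrangement of the events is needed for the bound itself.

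First I would recall that $\ayin^{\mathcal{A}\mathcal{B}} = \sum_{1 \leq i < j}^N \tsadi^{\mathcal{A}\mathcal{B}}_{ij}$, where each summand $\tsadi^{\mathcal{A}\mathcal{B}}_{ij} = s^{\mathcal{A}}_{ij}\, s^{\mathcal{B}}_{ij}$ is a product of two causal orders. By Definition~\ref{def:causalorder} each $s^{\mathcal{X}}_{ij} = \sgn(\tau_j - \tau_i)$ is the sign of a nonzero proper-time difference — nonzero because distinct events have distinct proper times — and hence lies in $\{-1, +1\}$. Consequently each product satisfies $\tsadi^{\mathcal{A}\mathcal{B}}_{ij} \in \{-1, +1\}$, so $\bigl|\tsadi^{\mathcal{A}\mathcal{B}}_{ij}\bigr| = 1$. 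Next I would count the summands: the index set $\{(i,j) : 1 \leq i < j \leq N\}$ has exactly $\binom{N}{2}$ elements. The triangle inequality then yields
\[
\bigl| \ayin^{\mathcal{A}\mathcal{B}} \bigr| \;\leq\; \sum_{1 \leq i < j}^N \bigl| \tsadi^{\mathcal{A}\mathcal{B}}_{ij} \bigr| \;=\; \binom{N}{2},
\]
which is precisely the asserted two-sided bound.

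There is essentially no obstacle here; the statement is a direct counting argument, and the only care required is the observation that the summands are genuinely $\pm 1$ rather than possibly vanishing, which is guaranteed by the distinctness of proper times. The one point worth flagging — though not needed for the inequality — is the question of \emph{attainability} of the extremes: the upper bound $+\binom{N}{2}$ is saturated exactly when every pair is definitely ordered (all $\tsadi^{\mathcal{A}\mathcal{B}}_{ij} = +1$), corresponding to fully definite causal order between the branches, whereas saturating the lower bound $-\binom{N}{2}$ would demand $\tsadi^{\mathcal{A}\mathcal{B}}_{ij} = -1$ for every pair. Whether such a uniformly indefinite configuration is realisable by actual timelike worldlines is a separate geometric question, since the signs $s^{\mathcal{X}}_{ij}$ within a single spacetime are not independent — they must be consistent with a total ordering, as in Definition~\ref{def:orderedcollevents} — but this constraint does not affect the bound on $\ayin^{\mathcal{A}\mathcal{B}}$ proved above.
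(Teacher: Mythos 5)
Your proof is correct, and its core is the same elementary counting fact the paper uses: each $\tsadi^{\mathcal{A}\mathcal{B}}_{ij}$ lies in $\{-1,+1\}$ and there are exactly $\binom{N}{2}$ pairs. The difference is in framing, and it is not entirely cosmetic. You bound $\bigl|\ayin^{\mathcal{A}\mathcal{B}}\bigr|$ abstractly via the triangle inequality and explicitly defer the question of attainability, whereas the paper proves the proposition by exhibiting the extremal configurations: all pairs definite ($\tsadi^{\mathcal{A}\mathcal{B}}_{ij}=+1$ for every $i<j$) for the upper bound, and the complete reversal of the ordering, $ab\ldots mn \to nm\ldots ba$, for the lower bound. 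This buys sharpness for free, which matters downstream: the subsequent definition of \emph{maximally indefinite} longitudinal causal order is precisely the condition $\ayin^{\mathcal{A}\mathcal{B}} = -\binom{N}{2}$, so the paper needs that value to be realisable, not merely an abstract bound. Incidentally, the geometric question you flag at the end has a simple answer that resolves it in the paper's favour: $\tsadi^{\mathcal{A}\mathcal{B}}_{ij}=-1$ for \emph{every} pair is exactly the statement that the total order of $\mathcal{S}_\mathcal{B}$ is the reverse of that of $\mathcal{S}_\mathcal{A}$, which is perfectly consistent with Definition~\ref{def:orderedcollevents} (it is again a total order) and is realised by worldlines whose crossings with $\gamma_0$ occur in opposite proper-time sequence in the two branches; the internal consistency constraint you mention rules out many sign patterns, but not this one.
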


\begin{proof}
    For the case in which the pairwise causal order is definite for each pair of events, which is the case that maximises the longitudinal causal order, we have that $\forall 1\leq i<j\leq N, \tsadi^{\mathcal{A}\mathcal{B}}_{ij} = +1$, that is, 
    \begin{equation}
        \max_{\tsadi} \ayin^{\mathcal{A}\mathcal{B}} = \sum_{1 \leq i < j}^N 1 = \begin{pmatrix} N \\ 2 \end{pmatrix}.
    \end{equation}
     On the other hand, the longitudinal causal order is minimised when the pairwise causal order is indefinite for every event $ab...mn \to nm...ba$, i.e.~$\forall 1\leq i<j\leq N, \tsadi^{\mathcal{A}\mathcal{B}}_{ij} = -1$ and the result follows.
\end{proof}

\begin{definition}[Indefinite longitudinal causal order]
    The longitudinal causal order between $N$ events of a superposition of two spacetimes $(\mathcal{M}_\mathcal{A},g_{\mathcal{A}})$ and $(\mathcal{M}_\mathcal{B},g_{\mathcal{B}})$ is \textit{definite} if and only if $\ayin^{\mathcal{A}\mathcal{B}} = \begin{pmatrix} N \\ 2 \end{pmatrix}$. Otherwise we say that it is \textit{indefinite}.
\end{definition}

\begin{definition}[Maximally indefinite longitudinal causal order]
    The longitudinal causal order between $N$ events of a superposition of two spacetimes $(\mathcal{M}_\mathcal{A},g_{\mathcal{A}})$ and $(\mathcal{M}_\mathcal{B},g_{\mathcal{B}})$ is \textit{maximally indefinite} if and only if $\ayin^{\mathcal{A}\mathcal{B}} = -\begin{pmatrix} N \\ 2 \end{pmatrix}$. If it is indefinite but not maximally indefinite we say that it is \textit{braided}.
\end{definition}

This naming already reflects knot theoretic properties of the causal order that will be explored below. A natural quantifier of the indefiniteness of the causal ordering is defined as follows.

\begin{definition}[Causal indefiniteness]
    The \textit{causal indefiniteness} of a superposition of two spacetimes $(\mathcal{M}_\mathcal{A},g_{\mathcal{A}})$ and $(\mathcal{M}_\mathcal{B},g_{\mathcal{B}})$ with $N$ events is
        \begin{equation}
            \delta(\mathcal{A},\mathcal{B}) := \sum_{1 \leq i < j}^N {|s_{ij}^{\mathcal{A}} - s_{ij}^{\mathcal{B}}|/2.}
        \end{equation}
\end{definition}
Indeed, this indicates how many pairs of events have opposite causal order. As we shall see, the causal indefiniteness will also be related to a knot invariant.
\begin{proposition}
    \label{Causal indefiniteness and pairwise causal order}
    \begin{equation}
    \delta(\mathcal{A},\mathcal{B}) = \sum_{1 \leq i < j}^N \frac{1-\tsadi^{\mathcal{A}\mathcal{B}}_{ij}}{2} = \frac{1}{2} \Bigg[\begin{pmatrix}
        N \\ 2
    \end{pmatrix} - \ayin^{\mathcal{A}\mathcal{B}} \Bigg]
    \end{equation}
\end{proposition}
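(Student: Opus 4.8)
The plan is to reduce both equalities to a single pointwise identity holding for each unordered pair $(i,j)$ with $i<j$, and then to sum over all such pairs using linearity. The essential observation is that, by Definition~\ref{def:causalorder}, each causal order $s_{ij}^{\mathcal{A}}, s_{ij}^{\mathcal{B}}$ takes values in $\{+1,-1\}$, so the difference $s_{ij}^{\mathcal{A}} - s_{ij}^{\mathcal{B}}$ can only equal $0$ (when the two orders agree) or $\pm 2$ (when they disagree). I would therefore argue by these two cases directly, treating the whole statement as a consequence of counting pairs on which the orders differ.

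First I would establish, for each pair, the pointwise identity
\begin{equation}
    \abs{s_{ij}^{\mathcal{A}} - s_{ij}^{\mathcal{B}}}^0 = \frac{1 - \tsadi^{\mathcal{A}\mathcal{B}}_{ij}}{2}.
\end{equation}
When the orders agree, the left-hand side is $0^0 = 0$ by the stated convention, while $\tsadi^{\mathcal{A}\mathcal{B}}_{ij} = (s_{ij}^{\mathcal{A}})^2 = +1$ makes the right-hand side vanish. When they disagree, the left-hand side is $2^0 = 1$, while $\tsadi^{\mathcal{A}\mathcal{B}}_{ij} = -1$ gives $\tfrac{1-(-1)}{2} = 1$. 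The identity thus holds in both cases, and the first equality follows by summing over all pairs with $1 \leq i < j \leq N$ and recalling the definition of $\delta(\mathcal{A},\mathcal{B})$.

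For the second equality I would split the sum linearly,
\begin{equation}
    \sum_{1 \leq i < j}^N \frac{1 - \tsadi^{\mathcal{A}\mathcal{B}}_{ij}}{2} = \frac{1}{2}\left[ \sum_{1 \leq i < j}^N 1 \;-\; \sum_{1 \leq i < j}^N \tsadi^{\mathcal{A}\mathcal{B}}_{ij} \right],
\end{equation}
and then identify the two pieces: the first sum counts the unordered pairs drawn from $N$ events, which is $\binom{N}{2}$, and the second sum is precisely the definition of the longitudinal causal order $\ayin^{\mathcal{A}\mathcal{B}}$. Substituting these yields the claimed expression.

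The computation is essentially routine, and the only point requiring genuine care is the $0^0 = 0$ convention: it is exactly what allows $\abs{s_{ij}^{\mathcal{A}} - s_{ij}^{\mathcal{B}}}^0$ to act as an indicator that equals $1$ if and only if the two orders disagree, and without it the pointwise identity would fail on the agreeing pairs. I do not anticipate any obstacle beyond confirming this boundary case.
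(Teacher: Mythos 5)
Your proposal is correct and follows essentially the same route as the paper's own proof: a case analysis showing the pointwise identity $\abs{s_{ij}^{\mathcal{A}} - s_{ij}^{\mathcal{B}}}^0 = \tfrac{1}{2}(1-\tsadi^{\mathcal{A}\mathcal{B}}_{ij})$, followed by summation over all pairs, with the $\binom{N}{2}$ term and $\ayin^{\mathcal{A}\mathcal{B}}$ identified from the definition of the longitudinal causal order. Your treatment is, if anything, slightly more explicit than the paper's in spelling out the linear splitting of the sum and the role of the $0^0=0$ convention.
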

\begin{proof}
    We have that $1-\tsadi^{\mathcal{A}\mathcal{B}}_{ij}=0$ for definite pairwise causal order and $1-\tsadi^{\mathcal{A}\mathcal{B}}_{ij}=2$ for indefinite pairwise causal order. Thus, $\abs{s_{ij}^{\mathcal{A}} - s_{ij}^{\mathcal{B}}}^0=\frac{1}{2}(1-\tsadi^{\mathcal{A}\mathcal{B}}_{ij})$. Summing over all pairs of events, the result follows.
\end{proof}

\begin{lemma}
    \label{Causal indefiniteness maximally indefinite}
    The longitudinal causal order is definite for $\delta(\mathcal{A},\mathcal{B}) = 0$ and maximally indefinite for $\delta(\mathcal{A},\mathcal{B}) = N(N-1)/2$.
\end{lemma}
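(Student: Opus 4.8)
The plan is to reduce the statement to the already-established linear relation between causal indefiniteness and longitudinal causal order and then simply read off the two extremal cases. Concretely, Proposition~\ref{Causal indefiniteness and pairwise causal order} gives the identity
\begin{equation}
    \delta(\mathcal{A},\mathcal{B}) = \frac{1}{2}\Bigg[\begin{pmatrix} N \\ 2 \end{pmatrix} - \ayin^{\mathcal{A}\mathcal{B}}\Bigg],
\end{equation}
which I would invert to express $\ayin^{\mathcal{A}\mathcal{B}}$ in terms of $\delta(\mathcal{A},\mathcal{B})$ and then compare against the defining values of definite and maximally indefinite longitudinal causal order.

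First I would treat the case $\delta(\mathcal{A},\mathcal{B}) = 0$. Substituting into the identity forces $\binom{N}{2} - \ayin^{\mathcal{A}\mathcal{B}} = 0$, i.e.\ $\ayin^{\mathcal{A}\mathcal{B}} = \binom{N}{2}$. By the Definition of indefinite longitudinal causal order, $\ayin^{\mathcal{A}\mathcal{B}} = \binom{N}{2}$ is precisely the condition for the longitudinal causal order to be definite, which settles the first half. Next I would treat $\delta(\mathcal{A},\mathcal{B}) = N(N-1)/2$. Recalling that $\binom{N}{2} = N(N-1)/2$, the identity becomes $\tfrac{1}{2}\big[\binom{N}{2} - \ayin^{\mathcal{A}\mathcal{B}}\big] = \binom{N}{2}$, hence $\ayin^{\mathcal{A}\mathcal{B}} = -\binom{N}{2}$, which is exactly the defining value of maximally indefinite longitudinal causal order.

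The only point requiring any care---and it is the nearest thing to an obstacle, though a mild one---is to confirm that $\delta = 0$ and $\delta = N(N-1)/2$ really are the extreme values being matched, rather than interior ones. Since each $\tsadi^{\mathcal{A}\mathcal{B}}_{ij} \in \{+1,-1\}$, the quantity $\ayin^{\mathcal{A}\mathcal{B}}$ ranges over $[-\binom{N}{2}, \binom{N}{2}]$ by the Proposition bounding $\ayin^{\mathcal{A}\mathcal{B}}$, so by the displayed identity $\delta(\mathcal{A},\mathcal{B})$ ranges over the integers in $[0, \binom{N}{2}]$; thus the two values in the statement are its minimum and maximum, attained exactly at the definite and maximally indefinite configurations respectively. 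No nontrivial estimation is involved: the entire content is the bookkeeping that aligns the two extremal values of $\delta(\mathcal{A},\mathcal{B})$ with the two defining values $\ayin^{\mathcal{A}\mathcal{B}} = \pm\binom{N}{2}$.
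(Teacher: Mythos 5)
Your proposal is correct and takes essentially the same route the paper intends: the paper states this lemma without an explicit proof, treating it as an immediate consequence of Proposition~\ref{Causal indefiniteness and pairwise causal order}, which is exactly the identity you invert. Your substitution of $\delta(\mathcal{A},\mathcal{B}) = 0$ and $\delta(\mathcal{A},\mathcal{B}) = N(N-1)/2$ to obtain $\ayin^{\mathcal{A}\mathcal{B}} = \binom{N}{2}$ and $\ayin^{\mathcal{A}\mathcal{B}} = -\binom{N}{2}$, matched against the definitions of definite and maximally indefinite longitudinal causal order, is precisely the intended bookkeeping.
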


\begin{lemma}
    \label{Additivity of indefiniteness}
    The causal indefiniteness is additive under the concatenation of sequences.
\end{lemma}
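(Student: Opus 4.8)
The plan is to first unpack what \enquote{concatenation of sequences} means at the level of the ordered collections of events, and then to exploit the fact that the causal indefiniteness is defined as a sum over \emph{pairs} of events. By concatenation of two sequences I mean the following: we are given two disjoint blocks of events, a first block $\{\mathcal{E}_1,\dots,\mathcal{E}_{N_1}\}$ and a second block $\{\mathcal{E}_{N_1+1},\dots,\mathcal{E}_{N_1+N_2}\}$, such that in \emph{every} branch of the superposition the entire first block precedes the entire second block, i.e.\ $s^{\mathcal{X}}_{ij}=+1$ whenever $i\leq N_1<j$, for $\mathcal{X}=\mathcal{A},\mathcal{B}$. The within-block orderings are allowed to differ across branches; only the block boundary is shared, and the concatenated sequence is the ordered union of the two blocks. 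Fixing this definition precisely is, in fact, the step I would be most careful about, since the whole argument rests on the inter-block pairs being definite in \emph{both} spacetimes.

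Granting that block structure, the argument is essentially a bookkeeping identity. First I would split the defining sum of $\delta(\mathcal{A},\mathcal{B})$ over the $\binom{N_1+N_2}{2}$ pairs into three groups: pairs lying entirely in the first block, pairs lying entirely in the second block, and \emph{inter-block} pairs $(i,j)$ with $i\leq N_1<j$. Next I would show that every inter-block pair contributes zero: by the concatenation hypothesis such a pair has $s^{\mathcal{A}}_{ij}=s^{\mathcal{B}}_{ij}=+1$, so $\abs{s^{\mathcal{A}}_{ij}-s^{\mathcal{B}}_{ij}}^0=0^0=0$ under the stated convention. Equivalently, $\tsadi^{\mathcal{A}\mathcal{B}}_{ij}=+1$ for every inter-block pair, so by Proposition~\ref{Causal indefiniteness and pairwise causal order} these terms drop out of the sum $\sum(1-\tsadi^{\mathcal{A}\mathcal{B}}_{ij})/2$.

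Finally, the two remaining groups are, by definition, exactly the causal indefinitenesses $\delta_1(\mathcal{A},\mathcal{B})$ and $\delta_2(\mathcal{A},\mathcal{B})$ of the two sub-sequences, because the restriction of the $s^{\mathcal{X}}_{ij}$ to either block coincides with the intrinsic causal order of that sub-collection. Summing the three groups gives $\delta(\mathcal{A},\mathcal{B})=\delta_1(\mathcal{A},\mathcal{B})+\delta_2(\mathcal{A},\mathcal{B})$, which is the claimed additivity, and a trivial induction extends it to any finite number of concatenated blocks. The conceptual point, which I would emphasise, is that $\delta$ only ever counts discordant \emph{pairs}, and concatenation introduces no new indefinite pairs across the junction — all junction pairs are definite by construction — so the count splits cleanly across blocks.
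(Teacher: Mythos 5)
Your proof is correct and takes essentially the same route as the paper's: both decompose the sum defining $\delta(\mathcal{A},\mathcal{B})$ into within-block and inter-block pairs and observe that every inter-block pair contributes zero because its causal order agrees across the two spacetimes (the paper annotates this as $s_{im}^{\mathcal{A}}=s_{im}^{\mathcal{B}}$, while you use the slightly stronger but natural condition $s^{\mathcal{A}}_{ij}=s^{\mathcal{B}}_{ij}=+1$ at the junction). The only cosmetic difference is that the paper handles $n$ blocks in a single decomposition, whereas you treat two blocks and extend by induction.
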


\begin{proof}
    If we have two ordered collections of $N$ events, which can both be split into $n$ subsequences of respective lengths $l_1,...,l_{n-1},l_n = N-(l_1,...,l_{n-1})$ -- that is into $n$ ordered subsets containing $l_i, i =1,\dots, n$ elements each -- we have
    \begin{align}
        \delta_{(N)}(\mathcal{A},\mathcal{B}) &= \sum_{1 \leq i < j}^N \abs{s_{ij}^{\mathcal{A}} - s_{ij}^{\mathcal{B}}}^0 \\
        &= \sum_{1\leq i < j}^{l_1} \abs{s_{ij}^{\mathcal{A}} - s_{ij}^{\mathcal{B}}}^0  \underbrace{ + 0 + ...}_{s_{im}^\mathcal{A} = s_{im}^\mathcal{B} \forall m \in \{l_1+1,...,N\}} \nonumber \\ &+ \sum_{l_1 \leq i < j}^{l_2} \abs{s_{ij}^{\mathcal{A}} - s_{ij}^{\mathcal{B}}}^0 + 0 + ... \nonumber \\ &+ \sum_{l_{n-1} \leq i < j}^N \abs{s_{ij}^{\mathcal{A}} - s_{ij}^{\mathcal{B}}}^0 \\
        &= \delta_{(l_1)}(\mathcal{A},\mathcal{B}) + ... + \delta_{(l_n)}(\mathcal{A},\mathcal{B})
    \end{align}
    which concludes the proof.
\end{proof}

\subsection{$M<\infty$ spacetimes, $N<\infty$ events}

We now consider a superposition of finitely many spacetimes $\{(\mathcal{M}_\mathcal{X},g_{\mathcal{X}})\}_{\mathcal{X}=1}^{M}$, i.e.
\begin{equation}
    \ket{\Psi} = \sum_{\mathcal{X}=1}^M c_\mathcal{X} \ket{\Tilde{g}^\mathcal{X}}
\end{equation}
where $\forall \mathcal{X}, c_\mathcal{X} \in \mathbb{C}, \sum_{\mathcal{X}=1}^M\abs{c_\mathcal{X}}^2 = 1$. We continue with the case of finitely many events, the same number $N$ in all spacetimes (see Fig.~\ref{fig:indefinite causal order superposition of spacetimes} for an example).

\begin{figure}[b!]
    \centering
    \begin{tikzpicture}[x=0.75pt,y=0.75pt,yscale=-0.75,xscale=0.75]
\draw  [fill={rgb, 255:red, 155; green, 155; blue, 155 }  ,fill opacity=0.14 ] (288.02,10.33) -- (455.67,10.33) -- (416.98,94.33) -- (249.33,94.33) -- cycle ;
\draw  [fill={rgb, 255:red, 155; green, 155; blue, 155 }  ,fill opacity=0.19 ] (288.69,109.67) -- (456.33,109.67) -- (417.65,193.67) -- (250,193.67) -- cycle ;
\draw  [fill={rgb, 255:red, 155; green, 155; blue, 155 }  ,fill opacity=0.13 ] (288.69,209) -- (456.33,209) -- (417.65,293) -- (250,293) -- cycle ;
\draw [color={rgb, 255:red, 208; green, 2; blue, 27 }  ,draw opacity=1 ]   (285,81.67) .. controls (308.33,49.67) and (323,59.67) .. (344.33,18.33) ;
\draw    (485,293.67) -- (485,229.5) ;
\draw [shift={(485,227.5)}, rotate = 89.42] [color={rgb, 255:red, 0; green, 0; blue, 0 }  ][line width=0.75]    (10.93,-3.29) .. controls (6.95,-1.4) and (3.31,-0.3) .. (0,0) .. controls (3.31,0.3) and (6.95,1.4) .. (10.93,3.29)   ;
\draw [color={rgb, 255:red, 65; green, 117; blue, 5 }  ,draw opacity=1 ]   (323.67,83) .. controls (347,51) and (361.67,61) .. (383,19.67) ;
\draw [color={rgb, 255:red, 74; green, 144; blue, 226 }  ,draw opacity=1 ]   (362.33,85) .. controls (385.67,53) and (400.33,63) .. (421.67,21.67) ;
\draw [color={rgb, 255:red, 74; green, 144; blue, 226 }  ,draw opacity=1 ]   (366.33,178.33) .. controls (281.67,115) and (369.67,123.67) .. (382.33,121) .. controls (395,118.33) and (398.33,119.67) .. (411.67,113.67) ;
\draw [color={rgb, 255:red, 208; green, 2; blue, 27 }  ,draw opacity=1 ]   (303.67,179.67) .. controls (341.67,153) and (377,159.67) .. (343,114.33) ;
\draw [color={rgb, 255:red, 65; green, 117; blue, 5 }  ,draw opacity=1 ]   (339,178.33) .. controls (409.67,151) and (393.67,138.33) .. (375.67,115) ;
\draw [color={rgb, 255:red, 65; green, 117; blue, 5 }  ,draw opacity=1 ]   (338.33,281.67) .. controls (291,194.33) and (357,227.67) .. (375,218.33) ;
\draw [color={rgb, 255:red, 208; green, 2; blue, 27 }  ,draw opacity=1 ]   (305.67,278.33) .. controls (366.33,247.67) and (432.33,267) .. (345,213) ;
\draw [color={rgb, 255:red, 74; green, 144; blue, 226 }  ,draw opacity=1 ]   (383,283.67) .. controls (298.33,220.33) and (386.33,229) .. (399,226.33) .. controls (411.67,223.67) and (415,225) .. (428.33,219) ;
\draw    (318.2,20.6) .. controls (328.2,31.8) and (348.6,41.4) .. (351.4,47.8) .. controls (354.2,54.2) and (335,59.8) .. (358.2,67) .. controls (381.4,74.2) and (363.4,69) .. (389.8,76.6) ;
\draw [shift={(335.06,34.32)}, rotate = 0] [color={rgb, 255:red, 0; green, 0; blue, 0 }  ][fill={rgb, 255:red, 0; green, 0; blue, 0 }  ][line width=0.75]      (0, 0) circle [x radius= 1.34, y radius= 1.34]   ;
\draw [shift={(346.74,60.21)}, rotate = 0] [color={rgb, 255:red, 0; green, 0; blue, 0 }  ][fill={rgb, 255:red, 0; green, 0; blue, 0 }  ][line width=0.75]      (0, 0) circle [x radius= 1.34, y radius= 1.34]   ;
\draw [shift={(373.95,72.01)}, rotate = 0] [color={rgb, 255:red, 0; green, 0; blue, 0 }  ][fill={rgb, 255:red, 0; green, 0; blue, 0 }  ][line width=0.75]      (0, 0) circle [x radius= 1.34, y radius= 1.34]   ;
\draw    (320.75,117.13) .. controls (326.5,122.88) and (340.75,133.63) .. (347,140.63) .. controls (353.25,147.63) and (359.5,156.88) .. (361.5,158.88) .. controls (363.5,160.88) and (366.25,163.88) .. (377.5,171.88) ;
\draw [shift={(334.14,128.79)}, rotate = 0] [color={rgb, 255:red, 0; green, 0; blue, 0 }  ][fill={rgb, 255:red, 0; green, 0; blue, 0 }  ][line width=0.75]      (0, 0) circle [x radius= 1.34, y radius= 1.34]   ;
\draw [shift={(354.39,149.71)}, rotate = 0] [color={rgb, 255:red, 0; green, 0; blue, 0 }  ][fill={rgb, 255:red, 0; green, 0; blue, 0 }  ][line width=0.75]      (0, 0) circle [x radius= 1.34, y radius= 1.34]   ;
\draw [shift={(369.26,165.76)}, rotate = 0] [color={rgb, 255:red, 0; green, 0; blue, 0 }  ][fill={rgb, 255:red, 0; green, 0; blue, 0 }  ][line width=0.75]      (0, 0) circle [x radius= 1.34, y radius= 1.34]   ;
\draw    (327.25,214.63) .. controls (329.75,221.88) and (329.5,222.63) .. (333.75,228.13) .. controls (338,233.63) and (351.5,252.88) .. (363.25,253.88) .. controls (375,254.88) and (373.5,253.38) .. (385,266.88) ;
\draw [shift={(329.73,221.76)}, rotate = 0] [color={rgb, 255:red, 0; green, 0; blue, 0 }  ][fill={rgb, 255:red, 0; green, 0; blue, 0 }  ][line width=0.75]      (0, 0) circle [x radius= 1.34, y radius= 1.34]   ;
\draw [shift={(346.66,243.69)}, rotate = 0] [color={rgb, 255:red, 0; green, 0; blue, 0 }  ][fill={rgb, 255:red, 0; green, 0; blue, 0 }  ][line width=0.75]      (0, 0) circle [x radius= 1.34, y radius= 1.34]   ;
\draw [shift={(376.08,257.02)}, rotate = 0] [color={rgb, 255:red, 0; green, 0; blue, 0 }  ][fill={rgb, 255:red, 0; green, 0; blue, 0 }  ][line width=0.75]      (0, 0) circle [x radius= 1.34, y radius= 1.34]   ;
\draw    (250.6,68.6) .. controls (199.77,69.78) and (194.74,130.73) .. (250.41,129.12) ;
\draw [shift={(253,129)}, rotate = 176.47] [fill={rgb, 255:red, 0; green, 0; blue, 0 }  ][line width=0.08]  [draw opacity=0] (8.93,-4.29) -- (0,0) -- (8.93,4.29) -- cycle    ;
\draw    (251.8,179) .. controls (200.97,180.18) and (195.94,241.13) .. (251.61,239.52) ;
\draw [shift={(254.2,239.4)}, rotate = 176.47] [fill={rgb, 255:red, 0; green, 0; blue, 0 }  ][line width=0.08]  [draw opacity=0] (8.93,-4.29) -- (0,0) -- (8.93,4.29) -- cycle    ;
\draw    (251.4,33.4) .. controls (145.27,33.4) and (162.64,279.21) .. (234.8,281.8) ;
\draw [shift={(237,281.8)}, rotate = 178.14] [fill={rgb, 255:red, 0; green, 0; blue, 0 }  ][line width=0.08]  [draw opacity=0] (8.93,-4.29) -- (0,0) -- (8.93,4.29) -- cycle    ;
\draw    (485,196) -- (485,131.83) ;
\draw [shift={(485,129.83)}, rotate = 89.42] [color={rgb, 255:red, 0; green, 0; blue, 0 }  ][line width=0.75]    (10.93,-3.29) .. controls (6.95,-1.4) and (3.31,-0.3) .. (0,0) .. controls (3.31,0.3) and (6.95,1.4) .. (10.93,3.29)   ;
\draw    (485,94) -- (485,29.83) ;
\draw [shift={(485,27.83)}, rotate = 89.42] [color={rgb, 255:red, 0; green, 0; blue, 0 }  ][line width=0.75]    (10.93,-3.29) .. controls (6.95,-1.4) and (3.31,-0.3) .. (0,0) .. controls (3.31,0.3) and (6.95,1.4) .. (10.93,3.29)   ;

\draw (482,215) node [anchor=north west][inner sep=0.75pt]  [xscale=0.5,yscale=0.5]  {$\tau $};
\draw (421.33,80.07) node [anchor=north west][inner sep=0.75pt]  [font=\footnotesize,xscale=0.5,yscale=0.5]  {$\left(\mathcal{M}_{\mathcal{A}} ,g^{\mathcal{A}} ,s^{\mathcal{A}}\right)$};
\draw (423.33,176.73) node [anchor=north west][inner sep=0.75pt]  [font=\footnotesize,xscale=0.5,yscale=0.5]  {$\left(\mathcal{M}_{\mathcal{B}} ,g^{\mathcal{B}} ,s^{\mathcal{B}}\right)$};
\draw (428,267.4) node [anchor=north west][inner sep=0.75pt]  [font=\footnotesize,xscale=0.5,yscale=0.5]  {$\left(\mathcal{M}_{\mathcal{C}} ,g^{\mathcal{C}} ,s^{\mathcal{C}}\right)$};
\draw (370,62) node [anchor=north west][inner sep=0.75pt]  [xscale=0.5,yscale=0.5] [align=left] {{\tiny A}};
\draw (330,120) node [anchor=north west][inner sep=0.75pt]  [xscale=0.5,yscale=0.5] [align=left] {{\tiny A}};
\draw (350,240) node [anchor=north west][inner sep=0.75pt]  [xscale=0.5,yscale=0.5] [align=left] {{\tiny A}};
\draw (340,50) node [anchor=north west][inner sep=0.75pt]  [xscale=0.5,yscale=0.5] [align=left] {{\tiny B}};
\draw (330,23) node [anchor=north west][inner sep=0.75pt]  [xscale=0.5,yscale=0.5] [align=left] {{\tiny C}};
\draw (356.25,145) node [anchor=north west][inner sep=0.75pt]  [xscale=0.5,yscale=0.5] [align=left] {{\tiny C}};
\draw (373,247) node [anchor=north west][inner sep=0.75pt]  [xscale=0.5,yscale=0.5] [align=left] {{\tiny C}};
\draw (367,155) node [anchor=north west][inner sep=0.75pt]  [xscale=0.5,yscale=0.5] [align=left] {{\tiny B}};
\draw (319.75,215) node [anchor=north west][inner sep=0.75pt]  [xscale=0.5,yscale=0.5] [align=left] {{\tiny B}};
\draw (214.67,92.07) node [anchor=north west][inner sep=0.75pt]  [font=\small,xscale=0.5,yscale=0.5]  {$C^{\mathcal{AB}}$};
\draw (216.67,204.07) node [anchor=north west][inner sep=0.75pt]  [font=\small,xscale=0.5,yscale=0.5]  {$C^{\mathcal{BC}}$};
\draw (179.33,146.73) node [anchor=north west][inner sep=0.75pt]  [font=\small,xscale=0.5,yscale=0.5]  {$C^{\mathcal{AC}}$};
\draw (482,117) node [anchor=north west][inner sep=0.75pt]  [xscale=0.5,yscale=0.5]  {$\tau $};
\draw (482,15) node [anchor=north west][inner sep=0.75pt]  [xscale=0.5,yscale=0.5]  {$\tau $};
\end{tikzpicture}
    \caption{\justifying Indefinite causal order of three events in a superposition of $M=3$ spacetimes : $ABC-BCA-CAB$.}
    \label{fig:indefinite causal order superposition of spacetimes}
\end{figure}

\begin{definition}[Transverse causal order]
    The \textit{transverse causal order} between two events $\mathcal{E}_a$ and $\mathcal{E}_b$ in a collection of $M$ spacetimes $\{(\mathcal{M}_\mathcal{X},g_{\mathcal{X}})\}_{\mathcal{X}=1}^M$ is defined as
    \begin{equation}
        \lamed_{ab} := \sum_{1 \leq \mathcal{X} < \mathcal{Y}}^M \tsadi_{ab}^{\mathcal{X}\mathcal{Y}}
    \end{equation}
    with $\lamed_{ab} = \lamed_{ba} = \lamed_{(ab)}$.
\end{definition}

An intuitive depiction of transverse causal order as a measure of indefinite causal order is given in Fig.~\ref{fig:Causal orders}.

\begin{proposition}
    \label{Lamed bounds}
    For $M$ spacetimes,
    \begin{equation}
        \begin{pmatrix}
            M \\ 2
        \end{pmatrix} - 2\Big\lfloor \frac{M}{2} \Big\rfloor \Big\lceil \frac{M}{2} \Big\rceil \leq\lamed_{ab}\leq \begin{pmatrix}
            M \\ 2
        \end{pmatrix}
    \end{equation}
    where $\lfloor \cdot \rfloor$ and $\lceil \cdot \rceil$ denote the floor and ceiling functions, respectively.
\end{proposition}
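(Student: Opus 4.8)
The plan is to fix the pair of events $\mathcal{E}_a,\mathcal{E}_b$ throughout and reduce the statement to an elementary combinatorial optimisation in a single integer variable. First I would abbreviate $\sigma_\mathcal{X} := s_{ab}^\mathcal{X} \in \{+1,-1\}$, so that the pairwise causal order is $\tsadi_{ab}^{\mathcal{X}\mathcal{Y}} = \sigma_\mathcal{X}\sigma_\mathcal{Y}$ and the transverse causal order becomes $\lamed_{ab} = \sum_{1\le \mathcal{X}<\mathcal{Y}}^M \sigma_\mathcal{X}\sigma_\mathcal{Y}$. The key observation is that this quantity depends only on how many of the $M$ spacetimes carry each sign: let $p := \#\{\mathcal{X} : \sigma_\mathcal{X}=+1\}$ and $q := M - p = \#\{\mathcal{X} : \sigma_\mathcal{X}=-1\}$.

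Next I would obtain a closed form for $\lamed_{ab}$ in terms of $p$ and $q$ by classifying the unordered pairs $(\mathcal{X},\mathcal{Y})$. A pair with equal signs (both $+1$ or both $-1$) contributes $\sigma_\mathcal{X}\sigma_\mathcal{Y}=+1$, and there are $\binom{p}{2}+\binom{q}{2}$ such pairs; a pair with opposite signs contributes $-1$, and there are $pq$ of these. Since the three counts sum to the total $\binom{M}{2}$, this gives
\begin{equation}
    \lamed_{ab} = \binom{p}{2} + \binom{q}{2} - pq = \binom{M}{2} - 2pq.
\end{equation}
Equivalently, one may square $\sum_\mathcal{X}\sigma_\mathcal{X} = p-q$ and use $\sigma_\mathcal{X}^2 = 1$ to obtain $\lamed_{ab} = \tfrac12[(p-q)^2 - M]$, which yields the same expression.

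It then remains to extremise $\lamed_{ab} = \binom{M}{2} - 2pq$ over the admissible values $p \in \{0,1,\dots,M\}$, $q = M-p$, which amounts to extremising the product $pq$ subject to $p+q=M$. The maximum of $\lamed_{ab}$ corresponds to the minimum of $pq$, attained at $p=0$ or $p=M$ where $pq=0$, giving the upper bound $\binom{M}{2}$. The lower bound corresponds to the maximum of $pq$; this is the one point requiring a small argument, namely that over the integers the product $p(M-p)$ is maximised when the two factors are as balanced as possible, i.e.~at $p=\lfloor M/2\rfloor$, $q=\lceil M/2\rceil$, so that $\max pq = \lfloor M/2\rfloor\lceil M/2\rceil$. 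Substituting gives the lower bound $\binom{M}{2} - 2\lfloor M/2\rfloor\lceil M/2\rceil$. The only (mild) obstacle is this integer optimisation step, which follows from the concavity of the map $p\mapsto p(M-p)$ together with the observation that the integers nearest the real maximiser $M/2$ are exactly $\lfloor M/2\rfloor$ and $\lceil M/2\rceil$; since every value of $p$ is realised by an explicit sign assignment, both bounds are tight.
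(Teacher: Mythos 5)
Your proof is correct, and it takes a genuinely different route from the paper. The paper first asserts that the minimising configuration is the one in which the event orderings alternate between spacetimes (justifying this only with the remark that any other combination would give a positive contribution ``due to the overlap''), and then verifies the value of $\lamed_{ab}$ on that alternating configuration by induction on $M$, with separate even/odd case analysis. You instead observe that $\lamed_{ab}$ depends only on the number $p$ of spacetimes with $s^{\mathcal{X}}_{ab}=+1$, derive the closed form $\lamed_{ab} = \binom{M}{2} - 2p(M-p)$ by counting equal-sign and opposite-sign pairs, and reduce both bounds to the elementary integer optimisation of $p(M-p)$. Your approach buys something real: it proves optimality over \emph{all} sign assignments at once, which is exactly the step the paper leaves informal, and it additionally characterises the full spectrum of attainable values of $\lamed_{ab}$ (namely $\binom{M}{2}-2pq$ for $p+q=M$), not just its extremes. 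The paper's induction, by contrast, only computes the value of a candidate minimiser and never rigorously rules out other configurations, so your argument is both shorter and more complete; the identity $\lamed_{ab} = \tfrac12\bigl[(p-q)^2 - M\bigr]$ also makes the lower bound transparent, since minimising $\lamed_{ab}$ is visibly equivalent to making $|p-q|$ as small as integrality permits.
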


\begin{proof}
    If the pairwise causal order between events $\mathcal{E}_a$ and $\mathcal{E}_b$ is \sam{the same in all spacetimes}, which is the case that maximises the transverse causal order, we have that $\forall 1\leq \mathcal{X} < \mathcal{Y} \leq M, \tsadi^{\mathcal{X}\mathcal{Y}}_{ij} = +1$, that is, 
    \begin{equation}
        \max_{\tsadi} \lamed_{ab} = \sum_{1 \leq \mathcal{X} < \mathcal{Y}}^M 1 = \binom{M}{2}.
    \end{equation}
    The case that minimises the transverse causal order is the one in which the pairwise causal order is \sam{different} between  \sam{any} pair of spacetimes. Indeed, any other combination would end up giving a total positive contribution due to the overlap. For such an alternating case, we proceed by induction.\\
  
    \noindent \emph{Induction hypothesis:}\\
    \begin{equation*}
    \min_\tsadi \lamed_{ab}^{(M)} = \binom{M}{2} - 2\Big\lfloor \frac{M}{2} \Big\rfloor \Big\lceil \frac{M}{2} \Big\rceil \text{ for } M \geq 2\end{equation*}
    
    \noindent \emph{Base case ($M=2$):}
        \begin{equation*}
            \min_\tsadi \lamed_{ab}^{(2)} = -1 = \binom{2}{2} - 2\cdot1\cdot1.
    \end{equation*}

    \noindent \emph{Inductive step ($M\to M+1)$: }\\
       
    Suppose the induction hypothesis holds, where we alternate the order of each ordered collection of events between spacetimes to minimise $\lamed_{ab}^{(M)}$. Then, adding another ordered collection of events, whose order alternates with the last, increases the transverse causal order by $\Big\lfloor \frac{M}{2} \Big\rfloor$ and decreases it by $\Big\lceil \frac{M}{2} \Big\rceil$. Thus, 
    \begin{equation*}
    \min_\tsadi \lamed_{ab}^{(M+1)} = \binom{M}{2} - 2\Big\lfloor \frac{M}{2} \Big\rfloor \Big\lceil \frac{M}{2} \Big\rceil + \Big\lfloor \frac{M}{2} \Big\rfloor - \Big\lceil \frac{M}{2} \Big\rceil.
    \end{equation*}
    Next, we want to show that this is equal to
    \begin{equation*}
        f^{(M+1)} \equiv \binom{M+1}{2} - 2\Big\lfloor \frac{M+1}{2} \Big\rfloor \Big\lceil \frac{M+1}{2} \Big\rceil.
    \end{equation*}
    We do so by case analysis and using $\binom{M}{2} = (M^2-M)/2$ as well as $\binom{M+1}{2} = (M^2+M)/2$. For $M$ \emph{even}, we have
    \begin{align*}
    \min_\tsadi \lamed_{ab}^{(M+1)} &= \binom{M}{2} - \frac{M^2}{2} = -\frac{M}{2},\\
    f^{(M+1)} &= \binom{M+1}{2} - \frac{M(M+2)}{2} = -\frac{M}{2}.
    \end{align*}
    Similarly, for $M$ \emph{odd}, we have
    \begin{align*}
    \min_\tsadi \lamed_{ab}^{(M+1)} &= \binom{M}{2} - \frac{(M-1)(M+1)}{2}-1 = -\frac{(M+1)}{2},\\
    f^{(M+1)} &= \binom{M+1}{2} - \frac{(M+1)^2}{2} = -\frac{(M+1)}{2}.
    \end{align*}
    This proves the inductive step for any $M > 2$.
    Thus, by induction, we have that $\forall M\geq2$
    \begin{equation}
        \min \lamed_{ab}^{(M)} = \begin{pmatrix}
            M \\ 2
        \end{pmatrix} - 2 \Big\lfloor \frac{M}{2} \Big\rfloor \Big\lceil \frac{M}{2} \Big\rceil,
    \end{equation}
    which provides the lower bound of $\lamed_{ab}$.
\end{proof}

\begin{definition}[Indefinite transverse causal order]
    Transverse causal order between $N$ events of a superposition of two spacetimes $(\mathcal{M}_\mathcal{A},g_{\mathcal{A}})$ and $(\mathcal{M}_\mathcal{B},g_{\mathcal{B}})$ is \textit{definite} if and only if $\lamed_{ab} = \begin{pmatrix}
            M \\ 2
        \end{pmatrix}$. Otherwise we say that it is \textit{indefinite}.
\end{definition}

\begin{definition}[Maximally indefinite transverse causal order]
    We say that the transverse causal order between $N$ events of a superposition of two spacetimes $(\mathcal{M}_\mathcal{A},g_{\mathcal{A}})$ and $(\mathcal{M}_\mathcal{B},g_{\mathcal{B}})$ is \textit{maximally indefinite} if and only if $\lamed_{ab} = \begin{pmatrix}
            M \\ 2
        \end{pmatrix} - 2\Big\lfloor \frac{M}{2} \Big\rfloor \Big\lceil \frac{M}{2} \Big\rceil$. If it is indefinite but not maximally indefinite we say that it is \textit{braided}.
\end{definition}

\begin{definition}[Total causal indefiniteness]
    The \textit{total causal indefiniteness} of a collection of $N$ events in a superposition of $M$ spacetimes $\{(\mathcal{M}_\mathcal{X},g_{\mathcal{X}})\}_{\mathcal{X}=1}^M$ is
    \begin{equation}
        \Delta := \sum_{1 \leq \mathcal{X} < \mathcal{Y}}^M \delta(\mathcal{X},\mathcal{Y})
    \end{equation}
\end{definition}

The total causal indefiniteness counts how many events in total have an indefinite pairwise causal order across the entire superposition of spacetimes. 

\begin{definition}[Total causal order]
    The \textit{total causal order} between a collection of $M$ spacetimes $\{(\mathcal{M}_\mathcal{X},g_{\mathcal{X}})\}_{\mathcal{X}=1}^M$ is 
\begin{equation}
    \shin := \sum_{1 \leq \mathcal{X} < \mathcal{Y}}^M \ayin^{\mathcal{X}\mathcal{Y}} \equiv \sum_{1 \leq i < j}^N \lamed_{ij}
\end{equation}
\end{definition}
Note that for $M=2$ spacetimes the longitudinal causal order is the total causal order of the collection of spacetimes, i.e.~$\shin = \ayin^{\mathcal{A}\mathcal{B}}$, whilst for $N=2$ events the transverse causal order is the total causal order of the collection of spacetimes, i.e.~$\shin = \lamed_{12}$. The difference between these measures of the indefiniteness of causal order is depicted in Fig.~\ref{fig:Causal orders}.

\begin{proposition}
    \label{max Delta}
    \begin{equation}
    0 \leq \Delta \leq \Big\lfloor \frac{M}{2} \Big\rfloor \Big\lceil \frac{M}{2} \Big\rceil \begin{pmatrix} N \\ 2 \end{pmatrix}
\end{equation}
\end{proposition}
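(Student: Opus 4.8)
The lower bound $\Delta\geq 0$ is immediate, as $\Delta$ is a sum of terms $\abs{s_{ij}^{\mathcal{X}}-s_{ij}^{\mathcal{Y}}}^0$, each equal to $0$ or $1$ under the convention $0^0\equiv 0$. For the upper bound, my plan is to interchange the two sums in the definition of $\Delta$ so that the outer sum runs over pairs of events and the inner sum over pairs of spacetimes:
\begin{equation}
\Delta = \sum_{1\leq i<j}^N\ \sum_{1\leq\mathcal{X}<\mathcal{Y}}^M\abs{s_{ij}^{\mathcal{X}}-s_{ij}^{\mathcal{Y}}}^0.
\end{equation}
Because each $s_{ij}^{\mathcal{X}}\in\{+1,-1\}$, the term $\abs{s_{ij}^{\mathcal{X}}-s_{ij}^{\mathcal{Y}}}^0$ is just the indicator that the two spacetimes disagree on the order of the pair $(i,j)$.

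I would then fix a pair of events $(i,j)$ and let $k_{ij}$ be the number of spacetimes for which $s_{ij}^{\mathcal{X}}=+1$, the remaining $M-k_{ij}$ carrying the value $-1$. A disagreeing unordered spacetime-pair is obtained by picking one spacetime from each group, so the inner sum equals exactly $k_{ij}(M-k_{ij})$. The crux is then the elementary maximisation of $k\mapsto k(M-k)$ over integers $k\in\{0,\dots,M\}$: as a downward parabola with vertex at $k=M/2$, its integer maximum is attained at $k=\lfloor M/2\rfloor$ and equals $\lfloor M/2\rfloor\lceil M/2\rceil$. I would verify this by splitting into the even case, where it reads $(M/2)^2$, and the odd case, where it reads $(M^2-1)/4$, both equal to the floor--ceiling product.

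Bounding each inner sum by $\lfloor M/2\rfloor\lceil M/2\rceil$ and summing over the $\binom{N}{2}$ pairs of events then yields
\begin{equation}
\Delta\leq\Big\lfloor\frac{M}{2}\Big\rfloor\Big\lceil\frac{M}{2}\Big\rceil\binom{N}{2},
\end{equation}
as claimed. No genuine difficulty arises; the only step meriting care is confirming that the integer maximiser of $k(M-k)$ reproduces the floor--ceiling product $\lfloor M/2\rfloor\lceil M/2\rceil$ rather than the continuous optimum $M^2/4$, a distinction that matters only when $M$ is odd.
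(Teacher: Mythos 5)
Your proof is correct, but it takes a genuinely different route from the paper's. The paper argues by exhibiting an extremal configuration: it takes $\lfloor M/2 \rfloor$ of the spacetimes to carry some fixed ordering of the events and the remaining $\lceil M/2 \rceil$ the reversed ordering, and asserts -- by analogy with the induction used for the transverse causal order bound in Proposition~\ref{Lamed bounds} -- that this alternating arrangement maximises the indefiniteness. You instead prove the upper bound directly for \emph{every} configuration: interchange the two sums, fix an event pair $(i,j)$, observe that the number of disagreeing spacetime pairs is exactly $k_{ij}(M-k_{ij})$ where $k_{ij}$ counts the spacetimes with $s_{ij}^{\mathcal{X}}=+1$, and bound the integer quadratic by $\lfloor M/2\rfloor \lceil M/2 \rceil$ before summing over the $\binom{N}{2}$ event pairs. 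Two remarks on what each approach buys. First, your relaxation -- treating the signs $s_{ij}^{\mathcal{X}}$ as independent across event pairs, ignoring that they must arise from total orderings -- is legitimate for an upper bound and costs nothing, since the bound is attained by genuine orderings (one permutation and its reverse). Second, your argument is arguably the more complete one: the paper's construction establishes that the stated value is \emph{attained} but leans on an analogy for the claim that no configuration exceeds it, whereas your per-pair counting establishes the inequality outright; conversely, the paper's construction supplies the tightness that your argument does not address, though the proposition as stated only requires the inequality.
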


\begin{proof}
    The lower bound for $\Delta$ is straightforward: just assume definite causal order between every pair of events across the whole superposition of spacetimes. The upper bound for $\Delta$ follows from a similar argument as in proposition \ref{Lamed bounds}: take $\big\lfloor \frac{M}{2} \big\rfloor$ of the orderings to be some fixed permutation of event orderings, and the rest $\big\lceil \frac{M}{2} \big\rceil$ to be the opposite ordering so as to have alternating maximally indefinite causal order, which maximises the indefiniteness.
\end{proof}

\begin{proposition}
    \label{Delta shin}
    \begin{equation}
        \Delta = \frac{1}{2} \Bigg[\begin{pmatrix}
            M \\ 2
        \end{pmatrix} \begin{pmatrix}
            N \\ 2
        \end{pmatrix} - \shin\Bigg]
    \end{equation}
\end{proposition}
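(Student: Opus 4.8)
The plan is to derive this identity directly by summing the pairwise relation established in Proposition~\ref{Causal indefiniteness and pairwise causal order} over all pairs of spacetimes. That proposition already gives, for any single pair $(\mathcal{X},\mathcal{Y})$, the local version
\begin{equation}
    \delta(\mathcal{X},\mathcal{Y}) = \frac{1}{2}\Bigg[\binom{N}{2} - \ayin^{\mathcal{X}\mathcal{Y}}\Bigg],
\end{equation}
so the global statement is nothing but the result of aggregating this over the $\binom{M}{2}$ unordered pairs of branches.

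Concretely, I would start from the definition $\Delta := \sum_{1 \leq \mathcal{X} < \mathcal{Y}}^M \delta(\mathcal{X},\mathcal{Y})$ and substitute the pairwise identity inside the sum. Linearity of the summation then splits the expression into a constant piece and a piece involving the longitudinal orders:
\begin{equation}
    \Delta = \frac{1}{2}\Bigg[\sum_{1 \leq \mathcal{X} < \mathcal{Y}}^M \binom{N}{2} - \sum_{1 \leq \mathcal{X} < \mathcal{Y}}^M \ayin^{\mathcal{X}\mathcal{Y}}\Bigg].
\end{equation}

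The two remaining steps are purely bookkeeping. First, the summand $\binom{N}{2}$ is independent of the pair $(\mathcal{X},\mathcal{Y})$, and there are exactly $\binom{M}{2}$ such pairs, so the first sum evaluates to $\binom{M}{2}\binom{N}{2}$. Second, the sum $\sum_{1 \leq \mathcal{X} < \mathcal{Y}}^M \ayin^{\mathcal{X}\mathcal{Y}}$ is, by the definition of total causal order, precisely $\shin$. Assembling these yields
\begin{equation}
    \Delta = \frac{1}{2}\Bigg[\binom{M}{2}\binom{N}{2} - \shin\Bigg],
\end{equation}
which is the claim. I do not anticipate any genuine obstacle here: the content is entirely carried by Proposition~\ref{Causal indefiniteness and pairwise causal order}, and the only point requiring a moment's care is recognising that the constant term accrues a factor of $\binom{M}{2}$ from the number of spacetime pairs rather than appearing just once.
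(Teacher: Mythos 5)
Your proof is correct and follows exactly the paper's route: the paper's own proof is the one-line remark that the identity ``follows directly from Proposition~\ref{Causal indefiniteness and pairwise causal order}'', and your argument simply spells out that summation over the $\binom{M}{2}$ spacetime pairs, using the definitions of $\Delta$ and $\stotfrak$. No gaps; you have merely made explicit the bookkeeping the paper leaves implicit.
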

\begin{proof}
    This follows directly from proposition \ref{Causal indefiniteness and pairwise causal order}.
\end{proof}

\begin{proposition}
    \label{shin bounds}
    For a collection of $M$ spacetimes $\{(\mathcal{M}_\mathcal{X},g_{\mathcal{X}})\}_{\mathcal{X}=1}^M$ with $N$ events,
    \begin{equation}
        \begin{pmatrix}
        N \\ 2
    \end{pmatrix}.\Bigg[\begin{pmatrix}
        M \\ 2
    \end{pmatrix} - 2\Big\lfloor \frac{M}{2} \Big\rfloor \Big\lceil \frac{M}{2} \Big\rceil \Bigg] \leq\shin\leq  \begin{pmatrix}
            N \\ 2
        \end{pmatrix}\begin{pmatrix}
            M \\ 2
        \end{pmatrix}. 
    \end{equation}
\end{proposition}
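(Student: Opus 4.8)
The plan is to derive both inequalities directly from the results already established for the transverse causal order $\lamed_{ab}$ and the total causal indefiniteness $\Delta$, since $\shin$ is assembled out of exactly these quantities. Recall from the definition of the total causal order that $\shin = \sum_{1 \leq i < j}^N \lamed_{ij}$, a sum of precisely $\binom{N}{2}$ transverse causal orders, one for each pair of events.

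The most economical route is a termwise application of Proposition~\ref{Lamed bounds}. Each summand obeys $\binom{M}{2} - 2\lfloor M/2\rfloor\lceil M/2\rceil \leq \lamed_{ij} \leq \binom{M}{2}$, and adding these $\binom{N}{2}$ inequalities immediately gives
\[
\binom{N}{2}\Bigg[\binom{M}{2} - 2\Big\lfloor\frac{M}{2}\Big\rfloor\Big\lceil\frac{M}{2}\Big\rceil\Bigg] \leq \shin \leq \binom{N}{2}\binom{M}{2},
\]
which is exactly the asserted pair of bounds. No new estimate is needed; the whole content is packaged in Proposition~\ref{Lamed bounds}.

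Equivalently, and perhaps more transparently, I would route the argument through Proposition~\ref{Delta shin}, rearranging it as $\shin = \binom{M}{2}\binom{N}{2} - 2\Delta$, and then substitute the bounds $0 \leq \Delta \leq \lfloor M/2\rfloor\lceil M/2\rceil\binom{N}{2}$ supplied by Proposition~\ref{max Delta}. Taking $\Delta = 0$ reproduces the upper bound, while taking $\Delta$ at its maximum reproduces the lower bound. This makes manifest that the two extremes of $\shin$ correspond to the fully definite and the maximally (alternating) indefinite configurations, respectively.

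Because the inequality itself is a near-immediate corollary, the only genuine subtlety — and the place I would spend real effort if one also wants the bounds to be \emph{tight} — is verifying that the extremal value of $\lamed_{ij}$ can be attained by every pair $(i,j)$ simultaneously. For the upper bound this is clear: all $M$ orderings identical makes each $\lamed_{ij}=\binom{M}{2}$ at once. For the lower bound, the extremal configuration of Proposition~\ref{Lamed bounds} depends only on the pattern of spacetime orderings, not on the specific pair of events; choosing $\lfloor M/2\rfloor$ spacetimes with one fixed ordering and the remaining $\lceil M/2\rceil$ with the globally reversed ordering makes the alternating (minimising) pattern hold for \emph{every} pair $(i,j)$ at the same time, so the lower bound is saturated. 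I expect this simultaneous-realisability check to be the main, though modest, obstacle.
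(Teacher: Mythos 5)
Your proposal is correct and matches the paper's own argument: the paper obtains the upper bound by summing the maximal longitudinal orders $\ayin^{\mathcal{X}\mathcal{Y}} = \binom{N}{2}$ over spacetime pairs (the dual decomposition of your termwise use of Proposition~\ref{Lamed bounds}), and it derives the lower bound exactly via Propositions~\ref{max Delta} and \ref{Delta shin}, or equivalently the alternating argument of Proposition~\ref{Lamed bounds} --- the two routes you give. Your added remark on simultaneous attainability (splitting the spacetimes into $\lfloor M/2\rfloor$ with a fixed ordering and $\lceil M/2\rceil$ with the reversed ordering) is the same construction the paper uses in Proposition~\ref{max Delta} and is a welcome clarification of tightness.
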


\begin{proof}
    If the pairwise causal order is definite for each pair of events, which is the case that maximises the longitudinal causal order, we have that $\forall \mathcal{X}\ \forall \mathcal{Y}, \shin^{\mathcal{X}\mathcal{Y}} = \begin{pmatrix} N \\ 2 \end{pmatrix}$ so that
    \begin{align}
        \max \shin &= \sum_{1 \leq \mathcal{X} < \mathcal{Y}}^M \begin{pmatrix} N \\ 2 \end{pmatrix} \\ &= \begin{pmatrix}
            M \\ 2
        \end{pmatrix} \begin{pmatrix}
            N \\ 2
        \end{pmatrix}.
    \end{align}
    The case which minimises the total causal order follows from propositions \ref{max Delta} and \ref{Delta shin}, or equivalently by maximising the longitudinal causal order and applying the same argument as in proposition \ref{Lamed bounds}.
\end{proof}

We can finally define a proper notion of indefinite causal order for arbitrary superpositions of metrics and events.
\begin{definition}[Indefinite total causal order]
    We say that the total causal order between $N$ events of a collection of $M$ spacetimes $\{(\mathcal{M}_\mathcal{X},g_{\mathcal{X}})\}_{\mathcal{X}=1}^M$ in superposition is \textit{indefinite} iff $\shin \neq \begin{pmatrix}
            M \\ 2
        \end{pmatrix} \begin{pmatrix}
            N \\ 2
        \end{pmatrix}$ or, equivalently, iff $\Delta \neq 0$. Otherwise, we say that it is \textit{definite}.
\end{definition}

\begin{definition}[Maximally indefinite total causal order]
    The total causal order between $N$ events of a collection of $M$ spacetimes $\{(\mathcal{M}_\mathcal{X},g_{\mathcal{X}})\}_{\mathcal{X}=1}^M$ in superposition is \textit{maximally indefinite} iff $\shin =\begin{pmatrix}
        N \\ 2
    \end{pmatrix}.\Bigg[\begin{pmatrix}
        M \\ 2
    \end{pmatrix} - 2\Big\lfloor \frac{M}{2} \Big\rfloor \Big\lceil \frac{M}{2} \Big\rceil \Bigg]$ or, equivalently, iff $\Delta = \Big\lfloor \frac{M}{2} \Big\rfloor \Big\lceil \frac{M}{2} \Big\rceil \begin{pmatrix} N \\ 2 \end{pmatrix}$. If it is indefinite but not maximally indefinite we say that it is \textit{braided}.
\end{definition}

\begin{proposition}
    \label{Diffeomorphism invariance}
    The pairwise, longitudinal, transverse, and total causal orders as well as the causal indefiniteness and total causal indefiniteness are invariant under quantum diffeomorphisms.
\end{proposition}

\begin{proof}
    It was shown \cite{Delahamette2022} that quantum diffeomorphisms cannot change the causal order of two events, i.e.~that the pairwise causal order is invariant under quantum diffeomorphisms. Thus, any linear combination of pairwise causal orders will be invariant under quantum diffeomorphisms. By proposition \ref{Causal indefiniteness and pairwise causal order} this extends to causal indefiniteness and so to total causal indefiniteness.
\end{proof}

A similar theorem was proven in the context of the process matrix formalism, where it was shown that causal order is invariant under continuous and reversible transformation \cite{CastroRuiz2018}.

\section{Diagrammatic representation and topology}
\label{sec:knots}

We want to visualise the above introduced concepts of (in-)definite causal order beyond combinatorial considerations. We may adopt some Everettian-like intuition of the form of Fig.~\ref{fig:Everettian representation}. Indeed, this reduces the complexity to a two-dimensional problem: the ordering of events and the superposition of spacetimes.
\begin{figure}[t!]
    \centering
    \begin{tikzpicture}[x=0.75pt,y=0.75pt,yscale=-0.8,xscale=0.8]
\draw    (170.5,165.5) -- (222.5,165) ;
\draw    (222.56,167.86) .. controls (223.68,213.34) and (228.8,197.31) .. (292.5,200.25) ;
\draw [shift={(222.5,165)}, rotate = 88.83] [color={rgb, 255:red, 0; green, 0; blue, 0 }  ][line width=0.75]      (0, 0) circle [x radius= 3.35, y radius= 3.35]   ;
\draw    (292.5,200.25) -- (359,199) ;
\draw [shift={(292.5,200.25)}, rotate = 358.92] [color={rgb, 255:red, 0; green, 0; blue, 0 }  ][fill={rgb, 255:red, 0; green, 0; blue, 0 }  ][line width=0.75]      (0, 0) circle [x radius= 3.35, y radius= 3.35]   ;
\draw    (359,199) -- (427,198.76) ;
\draw [shift={(429,198.75)}, rotate = 179.8] [color={rgb, 255:red, 0; green, 0; blue, 0 }  ][line width=0.75]    (10.93,-3.29) .. controls (6.95,-1.4) and (3.31,-0.3) .. (0,0) .. controls (3.31,0.3) and (6.95,1.4) .. (10.93,3.29)   ;
\draw [shift={(359,199)}, rotate = 359.8] [color={rgb, 255:red, 0; green, 0; blue, 0 }  ][fill={rgb, 255:red, 0; green, 0; blue, 0 }  ][line width=0.75]      (0, 0) circle [x radius= 3.35, y radius= 3.35]   ;
\draw    (222.39,162.21) .. controls (220.68,117.84) and (225.8,132.21) .. (289.5,130.25) ;
\draw [shift={(222.5,165)}, rotate = 267.6] [color={rgb, 255:red, 0; green, 0; blue, 0 }  ][line width=0.75]      (0, 0) circle [x radius= 3.35, y radius= 3.35]   ;
\draw    (289.5,130.25) -- (355.5,130.25) ;
\draw [shift={(289.5,130.25)}, rotate = 0] [color={rgb, 255:red, 0; green, 0; blue, 0 }  ][fill={rgb, 255:red, 0; green, 0; blue, 0 }  ][line width=0.75]      (0, 0) circle [x radius= 3.35, y radius= 3.35]   ;
\draw    (355.5,130.25) -- (423,130.25) ;
\draw [shift={(425,130.25)}, rotate = 180] [color={rgb, 255:red, 0; green, 0; blue, 0 }  ][line width=0.75]    (10.93,-3.29) .. controls (6.95,-1.4) and (3.31,-0.3) .. (0,0) .. controls (3.31,0.3) and (6.95,1.4) .. (10.93,3.29)   ;
\draw [shift={(355.5,130.25)}, rotate = 0] [color={rgb, 255:red, 0; green, 0; blue, 0 }  ][fill={rgb, 255:red, 0; green, 0; blue, 0 }  ][line width=0.75]      (0, 0) circle [x radius= 3.35, y radius= 3.35]   ;
\draw (177.5,104.9) node [anchor=north west][inner sep=0.75pt]    {$(\mathcal{M}_{\mathcal{A}} ,g_{\mathcal{A}})$};
\draw (184.5,203.9) node [anchor=north west][inner sep=0.75pt]    {$(\mathcal{M}_{\mathcal{B}} ,g_{\mathcal{B}})$};
\draw (425.5,118.9) node [anchor=north west][inner sep=0.75pt]    {$\tau _{\mathcal{A}}$};
\draw (430,186.9) node [anchor=north west][inner sep=0.75pt]    {$\tau _{\mathcal{B}}$};
\draw (283.5,110) node [anchor=north west][inner sep=0.75pt]   [align=left] {A};
\draw (353.5,204.5) node [anchor=north west][inner sep=0.75pt]   [align=left] {A};
\draw (350.5,110) node [anchor=north west][inner sep=0.75pt]   [align=left] {B};
\draw (287,204.5) node [anchor=north west][inner sep=0.75pt]   [align=left] {B};
\end{tikzpicture}
    \caption{\justifying A representation of indefinite causal order for two events $\mathcal{E}_1 = A$ and $\mathcal{E}_2 = B$ in a superposition of two spacetimes $(\mathcal{M}_\mathcal{X}, g_\mathcal{X})$, where $\mathcal{X} = \mathcal{A}, \mathcal{B}$. In each spacetime $(\mathcal{M}_{\mathcal{X}}, g_{\mathcal{X}})$, $\tau_\mathcal{X}$ denotes the proper time of a specified timelike worldline. The events are ordered according to the pairwise causal order $s_{12}^{\mathcal{X}}$ between them such that $A \prec B$ iff $s_{12}^\mathcal{X}=\text{sign}(\Delta \tau_\mathcal{X}) = 1$ (that is, iff $A$ occurs at an earlier proper time $\tau_\mathcal{X}$ than $B$.)}
    \label{fig:Everettian representation}
\end{figure}
We would like to stress, however, that, while it may be most naturally interpreted in terms of an Everettian viewpoint, using this representation does not commit us in any way to the many-worlds interpretation of quantum theory. Rather, we can view it as an interpretation-neutral representation of the different branches of the wavefunction in a particular basis, while leaving aside any questions about their metaphysical status. Moreover, this representation alone does not allow us to understand the \emph{relative} ordering of events. A natural way to analyse the relative ordering of events is to ``attach" one string of events onto the other spacetime, i.e.~compare and contrast the different strings on a similar footing.\\

One may imagine that some quantifiers of indefinite causal order can be reflected in braiding operations, and that some topological properties correspond in both the algebraic and diagrammatic pictures. This visualisation also has the advantage of being easily generalisable to arbitrary superpositions of $M$ spacetimes. However, although natural and potentially powerful, this construction runs into several issues. The first one is that there is a clear ambiguity in the braiding choice -- should one braid cross over or under another? More worrying is that such a braiding construction would depend on the choice of reference spacetime -- this is most clear in the case of $M\geq3$, as can be seen in Fig.~\ref{fig:Braiding problems}, but is already present for $M=2$ spacetimes, where setting a braiding convention (over/under) immediately differentiates between two choices of reference spacetimes.

\begin{figure}[h!]
    \centering
    \begin{tikzpicture}[x=0.75pt,y=0.75pt,yscale=-0.55,xscale=0.55]
\draw    (33.5,141.83) -- (85.5,141.33) ;
\draw    (85.56,144.2) .. controls (86.68,189.67) and (91.8,173.64) .. (155.5,176.58) ;
\draw [shift={(85.5,141.33)}, rotate = 88.83] [color={rgb, 255:red, 0; green, 0; blue, 0 }  ][line width=0.75]      (0, 0) circle [x radius= 3.35, y radius= 3.35]   ;
\draw    (155.5,176.58) -- (222,175.33) ;
\draw [shift={(155.5,176.58)}, rotate = 358.92] [color={rgb, 255:red, 0; green, 0; blue, 0 }  ][fill={rgb, 255:red, 0; green, 0; blue, 0 }  ][line width=0.75]      (0, 0) circle [x radius= 3.35, y radius= 3.35]   ;
\draw    (222,175.33) -- (290,175.09) ;
\draw [shift={(292,175.08)}, rotate = 179.8] [color={rgb, 255:red, 0; green, 0; blue, 0 }  ][line width=0.75]    (10.93,-3.29) .. controls (6.95,-1.4) and (3.31,-0.3) .. (0,0) .. controls (3.31,0.3) and (6.95,1.4) .. (10.93,3.29)   ;
\draw [shift={(222,175.33)}, rotate = 359.8] [color={rgb, 255:red, 0; green, 0; blue, 0 }  ][fill={rgb, 255:red, 0; green, 0; blue, 0 }  ][line width=0.75]      (0, 0) circle [x radius= 3.35, y radius= 3.35]   ;
\draw    (85.39,138.54) .. controls (83.68,94.18) and (88.8,108.54) .. (152.5,106.58) ;
\draw [shift={(85.5,141.33)}, rotate = 267.6] [color={rgb, 255:red, 0; green, 0; blue, 0 }  ][line width=0.75]      (0, 0) circle [x radius= 3.35, y radius= 3.35]   ;
\draw    (152.5,106.58) -- (218.5,106.58) ;
\draw [shift={(152.5,106.58)}, rotate = 0] [color={rgb, 255:red, 0; green, 0; blue, 0 }  ][fill={rgb, 255:red, 0; green, 0; blue, 0 }  ][line width=0.75]      (0, 0) circle [x radius= 3.35, y radius= 3.35]   ;
\draw    (218.5,106.58) -- (286,106.58) ;
\draw [shift={(288,106.58)}, rotate = 180] [color={rgb, 255:red, 0; green, 0; blue, 0 }  ][line width=0.75]    (10.93,-3.29) .. controls (6.95,-1.4) and (3.31,-0.3) .. (0,0) .. controls (3.31,0.3) and (6.95,1.4) .. (10.93,3.29)   ;
\draw [shift={(218.5,106.58)}, rotate = 0] [color={rgb, 255:red, 0; green, 0; blue, 0 }  ][fill={rgb, 255:red, 0; green, 0; blue, 0 }  ][line width=0.75]      (0, 0) circle [x radius= 3.35, y radius= 3.35]   ;
\draw    (219.67,140.33) -- (287.17,140.33) ;
\draw [shift={(289.17,140.33)}, rotate = 180] [color={rgb, 255:red, 0; green, 0; blue, 0 }  ][line width=0.75]    (10.93,-3.29) .. controls (6.95,-1.4) and (3.31,-0.3) .. (0,0) .. controls (3.31,0.3) and (6.95,1.4) .. (10.93,3.29)   ;
\draw [shift={(219.67,140.33)}, rotate = 0] [color={rgb, 255:red, 0; green, 0; blue, 0 }  ][fill={rgb, 255:red, 0; green, 0; blue, 0 }  ][line width=0.75]      (0, 0) circle [x radius= 3.35, y radius= 3.35]   ;
\draw    (87.85,140.91) .. controls (92.01,140.55) and (103,140.87) .. (153.67,140.33) ;
\draw [shift={(85.5,141.33)}, rotate = 344.05] [color={rgb, 255:red, 0; green, 0; blue, 0 }  ][line width=0.75]      (0, 0) circle [x radius= 3.35, y radius= 3.35]   ;
\draw    (153.67,140.33) -- (219.67,140.33) ;
\draw [shift={(153.67,140.33)}, rotate = 0] [color={rgb, 255:red, 0; green, 0; blue, 0 }  ][fill={rgb, 255:red, 0; green, 0; blue, 0 }  ][line width=0.75]      (0, 0) circle [x radius= 3.35, y radius= 3.35]   ;
\draw    (366.67,140.5) -- (418.67,140) ;
\draw    (418.73,142.86) .. controls (419.85,188.34) and (424.97,172.31) .. (488.67,175.25) ;
\draw [shift={(418.67,140)}, rotate = 88.83] [color={rgb, 255:red, 0; green, 0; blue, 0 }  ][line width=0.75]      (0, 0) circle [x radius= 3.35, y radius= 3.35]   ;
\draw    (488.67,175.25) -- (555.17,174) ;
\draw [shift={(488.67,175.25)}, rotate = 358.92] [color={rgb, 255:red, 0; green, 0; blue, 0 }  ][fill={rgb, 255:red, 0; green, 0; blue, 0 }  ][line width=0.75]      (0, 0) circle [x radius= 3.35, y radius= 3.35]   ;
\draw    (555.17,174) -- (623.17,173.76) ;
\draw [shift={(625.17,173.75)}, rotate = 179.8] [color={rgb, 255:red, 0; green, 0; blue, 0 }  ][line width=0.75]    (10.93,-3.29) .. controls (6.95,-1.4) and (3.31,-0.3) .. (0,0) .. controls (3.31,0.3) and (6.95,1.4) .. (10.93,3.29)   ;
\draw [shift={(555.17,174)}, rotate = 359.8] [color={rgb, 255:red, 0; green, 0; blue, 0 }  ][fill={rgb, 255:red, 0; green, 0; blue, 0 }  ][line width=0.75]      (0, 0) circle [x radius= 3.35, y radius= 3.35]   ;
\draw    (418.55,137.21) .. controls (416.85,92.84) and (421.97,107.21) .. (485.67,105.25) ;
\draw [shift={(418.67,140)}, rotate = 267.6] [color={rgb, 255:red, 0; green, 0; blue, 0 }  ][line width=0.75]      (0, 0) circle [x radius= 3.35, y radius= 3.35]   ;
\draw    (485.67,105.25) -- (551.67,105.25) ;
\draw [shift={(485.67,105.25)}, rotate = 0] [color={rgb, 255:red, 0; green, 0; blue, 0 }  ][fill={rgb, 255:red, 0; green, 0; blue, 0 }  ][line width=0.75]      (0, 0) circle [x radius= 3.35, y radius= 3.35]   ;
\draw    (551.67,105.25) -- (619.17,105.25) ;
\draw [shift={(621.17,105.25)}, rotate = 180] [color={rgb, 255:red, 0; green, 0; blue, 0 }  ][line width=0.75]    (10.93,-3.29) .. controls (6.95,-1.4) and (3.31,-0.3) .. (0,0) .. controls (3.31,0.3) and (6.95,1.4) .. (10.93,3.29)   ;
\draw [shift={(551.67,105.25)}, rotate = 0] [color={rgb, 255:red, 0; green, 0; blue, 0 }  ][fill={rgb, 255:red, 0; green, 0; blue, 0 }  ][line width=0.75]      (0, 0) circle [x radius= 3.35, y radius= 3.35]   ;
\draw    (552.83,139) -- (620.33,139) ;
\draw [shift={(622.33,139)}, rotate = 180] [color={rgb, 255:red, 0; green, 0; blue, 0 }  ][line width=0.75]    (10.93,-3.29) .. controls (6.95,-1.4) and (3.31,-0.3) .. (0,0) .. controls (3.31,0.3) and (6.95,1.4) .. (10.93,3.29)   ;
\draw [shift={(552.83,139)}, rotate = 0] [color={rgb, 255:red, 0; green, 0; blue, 0 }  ][fill={rgb, 255:red, 0; green, 0; blue, 0 }  ][line width=0.75]      (0, 0) circle [x radius= 3.35, y radius= 3.35]   ;
\draw    (421.02,139.58) .. controls (425.18,139.21) and (436.17,139.53) .. (486.83,139) ;
\draw [shift={(418.67,140)}, rotate = 344.05] [color={rgb, 255:red, 0; green, 0; blue, 0 }  ][line width=0.75]      (0, 0) circle [x radius= 3.35, y radius= 3.35]   ;
\draw    (486.83,139) -- (552.83,139) ;
\draw [shift={(486.83,139)}, rotate = 0] [color={rgb, 255:red, 0; green, 0; blue, 0 }  ][fill={rgb, 255:red, 0; green, 0; blue, 0 }  ][line width=0.75]      (0, 0) circle [x radius= 3.35, y radius= 3.35]   ;
\draw    (152.5,106.58) .. controls (159,121.25) and (220,126.75) .. (219.67,140.33) ;
\draw    (219.67,140.33) .. controls (215,149.67) and (207,151.92) .. (193,157.25) ;
\draw    (218.5,106.58) .. controls (213.83,115.92) and (205.4,116.87) .. (191.4,122.2) ;
\draw    (182.6,125) .. controls (170.3,124.48) and (150.5,127.25) .. (153.67,140.33) ;
\draw    (186.5,160.75) .. controls (178.2,164.1) and (154.33,165) .. (155.5,176.58) ;
\draw    (486.83,139) -- (488.67,175.25) ;
\draw    (552.83,139) -- (555.17,174) ;
\draw    (485.67,105.25) .. controls (492.17,119.92) and (553.17,125.42) .. (552.83,139) ;
\draw    (515.77,123.67) .. controls (503.47,123.15) and (483.67,125.92) .. (486.83,139) ;
\draw    (551.67,105.25) .. controls (547,114.58) and (538.57,115.53) .. (524.57,120.87) ;
\draw    (153.67,140.33) .. controls (160.17,155) and (222.33,161.75) .. (222,175.33) ;

\draw (30,81.23) node [anchor=north west][inner sep=0.75pt]    {\tiny $(\mathcal{M}_{\mathcal{A}},g_\mathcal{A})$};
\draw (30,177.57) node [anchor=north west][inner sep=0.75pt]    {\tiny $(\mathcal{M}_{\mathcal{C}},g_\mathcal{C})$};
\draw (289.83,93.9) node [anchor=north west][inner sep=0.75pt]    {$\tau _{\mathcal{A}}$};
\draw (291,128.57) node [anchor=north west][inner sep=0.75pt]    {$\tau _{\mathcal{B}}$};
\draw (146.5,82.83) node [anchor=north west][inner sep=0.75pt]   [align=left] {A};
\draw (216.5,180.83) node [anchor=north west][inner sep=0.75pt]   [align=left] {B};
\draw (213.5,82.33) node [anchor=north west][inner sep=0.75pt]   [align=left] {B};
\draw (150,179.83) node [anchor=north west][inner sep=0.75pt]   [align=left] {A};
\draw (5,118.23) node [anchor=north west][inner sep=0.75pt]    {\tiny $(\mathcal{M}_{\mathcal{B}},g_\mathcal{B})$};
\draw (292.33,163.9) node [anchor=north west][inner sep=0.75pt]    {$\tau _{\mathcal{C}}$};
\draw (224.5,120.83) node [anchor=north west][inner sep=0.75pt]   [align=left] {A};
\draw (138.17,121.67) node [anchor=north west][inner sep=0.75pt]   [align=left] {B};
\draw (370,82.57) node [anchor=north west][inner sep=0.75pt]    {\tiny $(\mathcal{M}_{\mathcal{B}},g_\mathcal{B})$};
\draw (370,174.9) node [anchor=north west][inner sep=0.75pt]    {\tiny $(\mathcal{M}_{\mathcal{C}},g_\mathcal{C})$};
\draw (623,92.57) node [anchor=north west][inner sep=0.75pt]    {$\tau _{\mathcal{B}}$};
\draw (624.17,127.23) node [anchor=north west][inner sep=0.75pt]    {$\tau _{\mathcal{A}}$};
\draw (479.67,81.5) node [anchor=north west][inner sep=0.75pt]   [align=left] {B};
\draw (549.67,179.5) node [anchor=north west][inner sep=0.75pt]   [align=left] {B};
\draw (546.67,81) node [anchor=north west][inner sep=0.75pt]   [align=left] {A};
\draw (483.17,178.5) node [anchor=north west][inner sep=0.75pt]   [align=left] {A};
\draw (340.33,118.23) node [anchor=north west][inner sep=0.75pt]    {\tiny $(\mathcal{M}_{\mathcal{A}},g_\mathcal{A})$};
\draw (625.5,162.57) node [anchor=north west][inner sep=0.75pt]    {$\tau _{\mathcal{C}}$};
\draw (557.67,119.5) node [anchor=north west][inner sep=0.75pt]   [align=left] {B};
\draw (471.33,120.33) node [anchor=north west][inner sep=0.75pt]   [align=left] {A};
\draw (316,123.07) node [anchor=north west][inner sep=0.75pt]  [font=\Large]  {$\neq $};
\end{tikzpicture}
    \caption{\justifying
    Braid diagram for three spacetimes in superposition and two events. A disadvantage of the braid diagram is that this construction depends explicitly on the choice of reference spacetime -- $(\mathcal{M}_\mathcal{B}, g_\mathcal{B})$ in the left and $(\mathcal{M}_\mathcal{A}, g_\mathcal{A})$ in the right diagram.}
    \label{fig:Braiding problems}
\end{figure}

A more fruitful relation can be established when considering \sam{graph and }knot theory. We now proceed to the diagrammatic construction, which we will use to represent the causal ordering of events as knot diagrams. Let the diagram representing the causal ordering of $\mathcal{A}$ and $\mathcal{B}$ be $\mathcal{D}_{\mathcal{A}\mathcal{B}}$. Consider two strings of events $\mathcal{S_A}$ and $\mathcal{S_B}$. Let 
\begin{align}
    \label{Bijection}
    \varphi: \mathcal{S}_\mathcal{A} \times \mathcal{S}_\mathcal{B} &\to \mathcal{S}_{\mathcal{A}\mathcal{B}} \nonumber \\
    \varphi(\mathcal{E}_i^{\mathcal{A}},
    \mathcal{E}_j^{\mathcal{B}}) &= \{(\mathcal{E}_i^{\mathcal{A}},\mathcal{E}_j^{\mathcal{B}})\}_{\delta_{ij}}
 \end{align}
 where $\{X\}_{\delta_{ij}} = \begin{cases}
     \{X\} \text{ if } i=j \\
     \varnothing \text{ if } i \neq j
\end{cases}$. Note that, in general, $\mathcal{S_{AB}} \neq \mathcal{S_{BA}}$. Let the drawing map be \acd{$\gamma: \mathcal{S}_{\mathcal{AB}} \times \mathcal{S}_{\mathcal{AB}} \to \mathcal{D}_{\mathcal{A}\mathcal{B}}$} such that
\begin{widetext}
    \centering
    \begin{tikzpicture}[x=0.75pt,y=0.75pt,yscale=-1,xscale=1]

\draw    (275,85) -- (325,85) ;
\draw [shift={(305,85)}, rotate = 179.61] [fill={rgb, 255:red, 0; green, 0; blue, 0 }  ][line width=0.08]  [draw opacity=0] (8.93,-4.29) -- (0,0) -- (8.93,4.29) -- cycle    ;
\draw    (275,105) -- (325,105) ;
\draw [shift={(305,105)}, rotate = 179.61] [fill={rgb, 255:red, 0; green, 0; blue, 0 }  ][line width=0.08]  [draw opacity=0] (8.93,-4.29) -- (0,0) -- (8.93,4.29) -- cycle    ;
\draw    (275,125) -- (325,125) ;
\draw [shift={(305,125)}, rotate = 179.61] [fill={rgb, 255:red, 0; green, 0; blue, 0 }  ][line width=0.08]  [draw opacity=0] (8.93,-4.29) -- (0,0) -- (8.93,4.29) -- cycle    ;
\draw  [dash pattern={on 0.84pt off 2.51pt}]  (275,145) -- (325,145) ;
\draw [shift={(305,145)}, rotate = 179.61] [fill={rgb, 255:red, 0; green, 0; blue, 0 }  ][line width=0.08]  [draw opacity=0] (8.93,-4.29) -- (0,0) -- (8.93,4.29) -- cycle    ;
\draw  [dash pattern={on 0.84pt off 2.51pt}]  (275,165) -- (325,165) ;
\draw [shift={(305,165)}, rotate = 180] [fill={rgb, 255:red, 0; green, 0; blue, 0 }  ][line width=0.08]  [draw opacity=0] (8.93,-4.29) -- (0,0) -- (8.93,4.29) -- cycle    ;
\draw  [dash pattern={on 4.5pt off 4.5pt}] (275,182) -- (325,182) ;
\draw [shift={(305,182)}, rotate = 180] [fill={rgb, 255:red, 0; green, 0; blue, 0 }  ][line width=0.08]  [draw opacity=0] (8.93,-4.29) -- (0,0) -- (8.93,4.29) -- cycle    ;

\draw (89,73.9) node [anchor=north west][inner sep=0.75pt] {$\gamma \left(\left(\mathcal{E}_{i}^{\mathcal{A}} ,\mathcal{E}_{i}^{\mathcal{B}}\right) ,\left(\mathcal{E}_{j}^{\mathcal{A}} ,\mathcal{E}_{j}^{\mathcal{B}}\right)\right) =\begin{cases}
\mathcal{E}_{i}^{\mathcal{B}} \ \ \ \ \ \ \ \ \ \ \ \ \ \mathcal{E}_{j}^{\mathcal{A}} & \text{if }\ \mathcal{E}_{i}^{\mathcal{B}} =\mathcal{S}_{\mathcal{B}}^{( 1)} \ \& \ \mathcal{E}_j^\mathcal{A} = \mathcal{S}_\mathcal{A}^{(1)}  \\
\mathcal{E}_{i}^{\mathcal{A}} \ \ \ \ \ \ \ \ \ \ \ \ \ \mathcal{E}_{j}^{\mathcal{A}} & \text{if }\ \mathcal{S}_{\mathcal{A}}^{( i)} =\mathcal{S}_{\mathcal{A}}^{( j-1)}\\
\mathcal{E}_{i}^{\mathcal{A}} \ \ \ \ \ \ \ \ \ \ \ \ \ \mathcal{E}_{j}^{\mathcal{B}} & \text{if }\ \mathcal{E}_{i}^\mathcal{A} = \mathcal{S}_{\mathcal{A}}^{(N)} \ \&\ \mathcal{E}_{j}^{\mathcal{B}} = \mathcal{S}_{\mathcal{B}}^{( N)}\\
\mathcal{E}_{i}^{\mathcal{B}} \ \ \ \ \ \ \ \ \ \ \ \ \ \mathcal{E}_{i}^{\mathcal{A}} & \text{if }\ \mathcal{E}_{i}^{\mathcal{B}} =\mathcal{S}_{\mathcal{B}}^{( N)} \\
\mathcal{E}_{i}^{\mathcal{A}} \ \ \ \ \ \ \ \ \ \ \ \ \ \mathcal{E}_{j}^{\mathcal{A}} & \text{if }\ \mathcal{E}_{j}^{\mathcal{B}} =\mathcal{S}_{\mathcal{B}}^{(\alpha)} \ \&\ \mathcal{E}_{i}^{\mathcal{B}} =\mathcal{S}_{\mathcal{B}}^{(\alpha+1)} \text{ for some } \alpha \in \{1,...,N-1\} \ \\
\mathcal{E}_{i}^{\mathcal{A}} \ \ \ \ \ \ \ \ \ \ \ \ \ \mathcal{E}_{i}^{\mathcal{B}} & \text{if }\ \mathcal{E}_{i}^{\mathcal{B}} = \mathcal{S}_{\mathcal{B}}^{( 1)}
\end{cases}$};
\end{tikzpicture}
\end{widetext}

Here, $\mathcal{S}_{\mathcal{X}}^{(i)}$, $\mathcal{X} = \mathcal{A},\mathcal{B}$ denotes the $i$-th element of the ordered sequence $\mathcal{S}_{\mathcal{X}}$. \sam{We project} all events of a given spacetime on one line -- spacetime $\mathcal{M}_\mathcal{A}$ will be the top line, while spacetime $\mathcal{M}_\mathcal{B}$ will be the bottom line. The intuition behind the above prescription is as follows (see also Figs.~\ref{fig:(2,2) definite causal order} and \ref{fig:(2,2) indefinite causal order} for an illustration in a concrete example):  
Start from the first entry of $\mathcal{S}_\mathcal{B}$ and draw a solid arrow to the first entry of $\mathcal{S}_\mathcal{A}$. Connect all events in $\mathcal{S}_\mathcal{A}$ with solid arrows according to their causal order until you reach $\mathcal{S}_\mathcal{A}^{(N)}$. Go back down to $\mathcal{S}_\mathcal{B}^{(N)}$ with a final solid arrow, connecting the two branches. Next, draw a \sam{short}-dashed arrow from $\mathcal{S}_\mathcal{B}^{(N)}$ to its counterpart in $\mathcal{S}_\mathcal{A}$ (recalling that $\mathcal{E}_i^\mathcal{A} = \mathcal{E}_i^\mathcal{B}$ for all $i$). Then, go ``backwards'' in the causal order prescribed by $\mathcal{S}_\mathcal{B}$ until you reach the counterpart of $\mathcal{S}_{\mathcal{B}}^{(1)}$. At this point, draw a final \sam{long}-dashed arrow back to the first entry of $\mathcal{S}_{\mathcal{B}}$ to close the diagram. 

\sam{
Formally, this construction yields a tricolored directed graph $G = (\mathcal{S_A} \cup \mathcal{S_B} \cup \tilde{S}, E_{\mathcal{AB}}, \gamma)$, where $E_{\mathcal{AB}}$ are the edges of the graph and $\tilde{S}$ is the set of all intersections of the short-dashed arrows with both short-dashed and long-dashed arrows. It is tricolored as there are three distinct types of edges (solid, small-dashed and long-dashed), and directed as each edge is endowed with a direction. Moreover, the graph can be ordered (as a directed graph), with the starting point being the vertex corresponding to $\mathcal{S}^{(1)}_{\mathcal{B}}$.}

By construction, each diagram is uniquely defined for each $\mathcal{S_{AB}}$\sam{, though the diagrammatic representation of $\mathcal{S_{AB}}$ will generally be different to that of $\mathcal{S_{BA}}$. We will come back to that point in the following section.} For example, if $\mathcal{S}_\mathcal{A} = \{A \prec B\}$ while $\mathcal{S}_\mathcal{B} = \{A \prec B\}$ (i.e.~definite causal order), the associated diagram is given by Fig.~\ref{fig:(2,2) definite causal order}, whilst if $\mathcal{S}_\mathcal{A} = \{A \prec B\}$ while $\mathcal{S}_\mathcal{B} = \{B \prec A\}$ (i.e.~indefinite causal order), the resulting diagram is given by Fig.~\ref{fig:(2,2) indefinite causal order}.
\begin{figure}[b!]
    \centering
    \subfloat[Definite causal order for two spacetimes and two events. \label{fig:(2,2) definite causal order}]{\begin{tikzpicture}[x=0.75pt,y=0.75pt,yscale=-0.6,xscale=0.6]
\draw    (250.97,78.27) -- (481.62,76.8) ;
\draw [shift={(371.29,77.51)}, rotate = 179.64] [fill={rgb, 255:red, 0; green, 0; blue, 0 }  ][line width=0.08]  [draw opacity=0] (8.93,-4.29) -- (0,0) -- (8.93,4.29) -- cycle    ;
\draw    (247.91,183.77) -- (250.97,78.27) ;
\draw [shift={(249.58,126.02)}, rotate = 91.66] [fill={rgb, 255:red, 0; green, 0; blue, 0 }  ][line width=0.08]  [draw opacity=0] (8.93,-4.29) -- (0,0) -- (8.93,4.29) -- cycle    ;
\draw    (481.62,76.8) -- (478.56,183.77) ;
\draw [shift={(479.95,135.28)}, rotate = 271.64] [fill={rgb, 255:red, 0; green, 0; blue, 0 }  ][line width=0.08]  [draw opacity=0] (8.93,-4.29) -- (0,0) -- (8.93,4.29) -- cycle    ;
\draw  [dash pattern={on 0.84pt off 2.51pt}]  (478.56,183.77) .. controls (449.54,144.94) and (452.59,114.17) .. (481.62,76.8) ;
\draw [shift={(458.77,124.57)}, rotate = 94.46] [fill={rgb, 255:red, 0; green, 0; blue, 0 }  ][line width=0.08]  [draw opacity=0] (8.93,-4.29) -- (0,0) -- (8.93,4.29) -- cycle    ;
\draw  [dash pattern={on 4.5pt off 4.5pt}]  (250.97,78.27) .. controls (286.1,100.98) and (267.77,190.36) .. (247.91,183.77) ;
\draw [shift={(270.41,138.33)}, rotate = 273.18] [fill={rgb, 255:red, 0; green, 0; blue, 0 }  ][line width=0.08]  [draw opacity=0] (8.93,-4.29) -- (0,0) -- (8.93,4.29) -- cycle    ;
\draw  [dash pattern={on 0.84pt off 2.51pt}]  (481.62,76.8) .. controls (421,125.5) and (304,125.5) .. (250.97,78.27) ;
\draw [shift={(360.84,113.51)}, rotate = 359.76] [fill={rgb, 255:red, 0; green, 0; blue, 0 }  ][line width=0.08]  [draw opacity=0] (8.93,-4.29) -- (0,0) -- (8.93,4.29) -- cycle    ;

\draw (231.22,54.38) node [anchor=north west][inner sep=0.75pt]   [align=left] {A};
\draw (229.69,184.79) node [anchor=north west][inner sep=0.75pt]   [align=left] {A};
\draw (486.31,54.38) node [anchor=north west][inner sep=0.75pt]   [align=left] {B};
\draw (484.78,192.12) node [anchor=north west][inner sep=0.75pt]   [align=left] {B};

\end{tikzpicture}}
    \bigskip
    \subfloat[Indefinite causal order for two spacetimes and two events. \label{fig:(2,2) indefinite causal order}]{\begin{tikzpicture}[x=0.75pt,y=0.75pt,yscale=-0.6,xscale=0.6]

\draw    (242.45,93.56) -- (469.3,92.14) ;
\draw [shift={(360.87,92.82)}, rotate = 179.64] [fill={rgb, 255:red, 0; green, 0; blue, 0 }  ][line width=0.08]  [draw opacity=0] (8.93,-4.29) -- (0,0) -- (8.93,4.29) -- cycle    ;
\draw    (239.45,195.84) -- (242.45,93.56) ;
\draw [shift={(241.1,139.7)}, rotate = 91.68] [fill={rgb, 255:red, 0; green, 0; blue, 0 }  ][line width=0.08]  [draw opacity=0] (8.93,-4.29) -- (0,0) -- (8.93,4.29) -- cycle    ;
\draw    (469.3,92.14) -- (466.29,195.84) ;
\draw [shift={(467.65,148.99)}, rotate = 271.66] [fill={rgb, 255:red, 0; green, 0; blue, 0 }  ][line width=0.08]  [draw opacity=0] (8.93,-4.29) -- (0,0) -- (8.93,4.29) -- cycle    ;
\draw  [dash pattern={on 0.84pt off 2.51pt}]  (466.29,195.84) -- (242.45,93.56) ;
\draw [shift={(349.82,142.62)}, rotate = 24.56] [fill={rgb, 255:red, 0; green, 0; blue, 0 }  ][line width=0.08]  [draw opacity=0] (8.93,-4.29) -- (0,0) -- (8.93,4.29) -- cycle    ;
\draw  [dash pattern={on 0.84pt off 2.51pt}]  (242.45,93.56) .. controls (313.38,126.61) and (403.31,112.15) .. (469.3,92.14) ;
\draw [shift={(361.29,112.55)}, rotate = 177.04] [fill={rgb, 255:red, 0; green, 0; blue, 0 }  ][line width=0.08]  [draw opacity=0] (8.93,-4.29) -- (0,0) -- (8.93,4.29) -- cycle    ;
\draw  [dash pattern={on 4.5pt off 4.5pt}]  (469.3,92.14) -- (239.45,195.84) ;
\draw [shift={(354.37,143.99)}, rotate = 335.72] [fill={rgb, 255:red, 0; green, 0; blue, 0 }  ][line width=0.08]  [draw opacity=0] (8.93,-4.29) -- (0,0) -- (8.93,4.29) -- cycle    ;

\draw (225.19,66.1) node [anchor=north west][inner sep=0.75pt]   [align=left] {$\displaystyle A$};
\draw (473.81,68.73) node [anchor=north west][inner sep=0.75pt]   [align=left] {B};
\draw (468.56,195.36) node [anchor=north west][inner sep=0.75pt]   [align=left] {$\displaystyle A$};
\draw (227.44,199.41) node [anchor=north west][inner sep=0.75pt]   [align=left] {B};

\end{tikzpicture}}
    
    \caption{\justifying Diagrams for the causal order of two events for two spacetimes in superposition.}
    \label{fig:diag (2,2)}
\end{figure}

\sam{
In fact, the resulting graph is a knot in the graph-theoretic sense: it is a collection of vertices and edges with the property that every vertex in the knot has outgoing edges, and all outgoing edges from vertices in the knot terminate at other vertices in the knot. This is interesting for us as it will allow us to link combinatorial considerations (related to quantifiers of causal indefiniteness) with topological considerations (related to invariant knot polynomials), and will thus provide a way to ensure that some notion of ``topological protection" holds. This motivates the following knot construction, which connects the diagrammatic representation of the causal ordering of events to knots in a natural \sam{and, importantly, unambiguous fashion.}
}

\subsubsection{Fine-grained knot construction}

\vik{Informally, the construction of a knot from the diagrammatic representation of causal order can be understood in terms of the following drawing rules: Start by following the arrows of the directed graph. If a small-dashed arrow meets a solid-dashed arrow (a) \emph{wind under and then over} the vertex if it continues as a small-dashed arrow and (b) \emph{stay under} the vertex if it turns into a long-dashed arrow. As for the crossings outside of the events, i.e.~at the vertices in $\tilde{\mathcal{S}}$, we can distinguish two cases. If two small-dashed arrows meet, \emph{stay over} the vertex, and if a long-dashed arrow meets a short-dashed arrow, \emph{stay under} the vertex. This is illustrated for the examples of definite and indefinite causal order between two events in Fig.~\ref{Unknot definite causal order} and Fig.~\ref{Trefoil indefinite causal order}, respectively. We now provide a formal construction thereof. The quick reader may wish to skip the precise steps of the algorithm laid down below. The ensuing results may be understood without this additional layer of rigour.}

\vik{Formally, one can encode the winding of the lines in the so-called Dowker-Thistlethwaite (DT) word \cite{dowker_classification_1983}. This is a sequence of numbers which relates knots and graphs in the following manner: starting from a knot, take an arbitrary projection thereof, thus generating a graph. Pick an arbitrary starting point, start a counter $c=1$, and label each of the crossings with $c = 1,\cdots,2n$ following a direction (here, associated with the arrows). If the label is an even number and the strand followed crosses \emph{over} at the crossing, then change the sign on the label to be negative. At the end of this procedure, each crossing will be labelled \acd{with} a pair of integers, one even and one odd. The DT word is the ordered sequence of the even numbers generated like this. Examples of DT words are given in Appendix \ref{sec:App conj}.\\
Using DT words, we can formalise the connection between the diagrammatic representation of the causal order with a particular knot as follows.}

\begin{algorithm}
\label{alg:knots}
\vik{Start from the directed graph $G$, choose $V_1 = \mathcal{S}_\mathcal{B}^{(1)}$ as a starting point to obtain an ordered graph, and label the remaining vertices by $V_i$ in ascending order.
\begin{enumerate}
\item The crossings occur at those vertices with four emerging edges. Associate a pair of integers $(a_i,b_i)$ to each of the vertices $V_i \in G$ that involve only dashed arrows.
\item The vertices, involving both dashed and solid arrows, correspond to those crossings where the lines wind around or underneath each other, thus doubling the crossing points in the projection of the corresponding knot. Therefore, associate to the vertices $V_i$ satisfying this criterion \emph{two} tuples of integers, denoted by $((a_i,c_i),(b_i,d_i))$.
\item Start a counter $c=1$ at $V_1$ and follow the arrows of the graph $G$. At every passing of a vertex, fill up the first entry of each pair with the counter, then set $c \mapsto c+1$. For those vertices involving two pairs, replace first $a_i$ and then $b_i$ such that the tuple reads $((c,c_i),(c+1,d_i))$. Repeat this procedure after the first round, without resetting the counter, replacing $c_i$ (and then $d_i$) until all values have been fixed.
\item Change the sign of the even numbers for those vertices at which the corresponding strand crosses \emph{over} the other at the crossing. In terms of graph labels this corresponds to changing the sign of the \emph{even} numbers as follows:
\begin{enumerate}
\item If the vertex has two small-dashed and two solid edges, change the sign of $a_i$ and $d_i$ (if they are even). This corresponds to \emph{winding under and then over} the vertex during the second round.
\item If the vertex has a small-dashed, a long-dashed, and two solid edges, change the sign of $a_i$ or $b_i$, (whichever one is even). This corresponds to \emph{staying under} the vertex during the second round.
\item If the vertex has four small-dashed edges, change the sign of $b_i$ (if even). This corresponds to \emph{staying over} the vertex in the second round.
\item If the vertex has two small-dashed and two long-dashed edges, change the sign of $a_i$ (if even). This corresponds to \emph{staying under} the vertex in the second round.
\end{enumerate}
\item This provides a sequence of numbers $\{1,n_1\},\{3,n_2\},\{5,n_3\}, \dots$ for $n_i \in 2\mathbb{Z}$. The DT code is then $n_1 n_2 n_3 \dots$.
\end{enumerate}
}
\end{algorithm}

\begin{figure*}
    \includegraphics[scale=0.45]{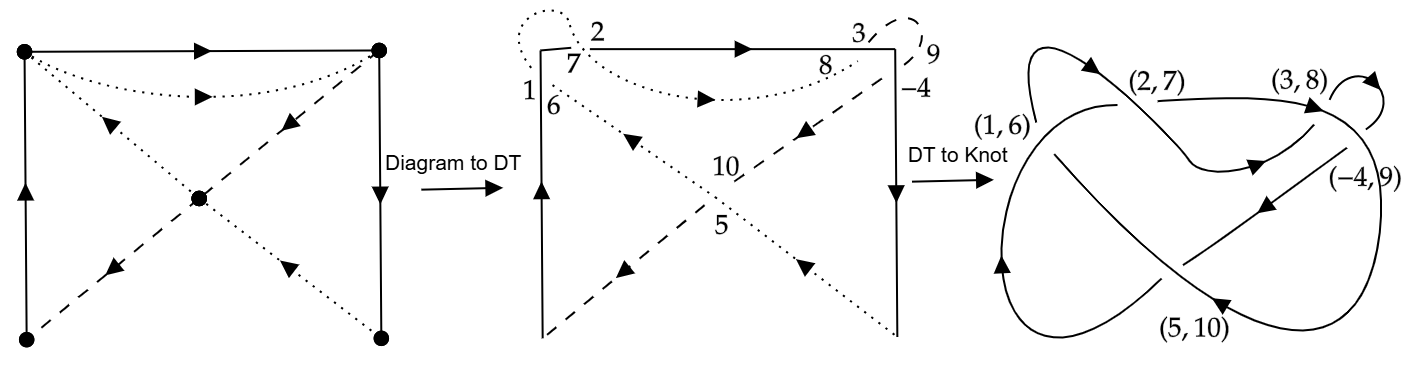}
    \caption{\justifying {Passing from a diagrammatic construction to a knot in the case of AB-BA. We first assign the DT numbers to each vertex by following the arrows and remembering the winding rules for the sign of each even number. From the DT code, the knot can be reconstructed: in this case, the DT word (following 1,3,5,7,9) is 6,8,10,2,-4 which yields a trefoil knot.}}
    \label{fig:informal}
\end{figure*}

\sam{A simple example of this procedure is given in Fig. \ref{fig:informal}. This algorithm provides rules to associate DT letters to each crossing of the graph seen as a specific projection of a knot. This projection is a choice, but fixing the rules makes the resulting knot construction unambiguous, and highlights that one can deform the (now DT-numbered) diagram in any way desired without changing the resulting knot. In a way, this is seeing the diagrammatic representation as a projection of the knot representation, i.e.~one could very well map a causal sequence directly to a (unique) knot, and only then choose a specific projection as a diagrammatic representation. One question remains, however: is the resulting knot independent of the choice of a base spacetime, that is, of the spacetime chosen as $\mathcal{S}_\mathcal{A}$? Numerically, it seems so -- and though we are unable to provide a definite proof for this, we postulate this for the following.}

\sam{

\begin{conjecture}
    \label{conj:renaming}
    The knot built from the fine-grained diagram of the causal sequences using Algorithm \ref{alg:knots} is independent of the choice of reference spacetime $\mathcal{S}_\mathcal{A}$.
\end{conjecture}

There is evidence for the validity of this conjecture, which is shown to hold for all pairs of sequences of $N \leq 4$ events in Appendix \ref{sec:App conj} (that is, for the first $2! + 3! + 4! = 32$ knots). If this conjecture holds, it is equivalent to saying that the rules associating DT letters to each crossing are consistent, and the DT words of the \enquote{renamed} knots (which arise from choosing the other base spacetime) give the same knot as the DT words of the original knots. Note that if this conjecture holds true, then about half of the causal sequences for any number $N$ of events will be associated to another causal sequence, i.e.~we should really see such pairs of causal sequences as \enquote{equivalent} in some respect (which, of course, is required by consistency: there is no physical meaning to choosing a preferred base spacetime). For example, ABC-BCA and ABC-CAB represent the same physical situation: let $\text{A}':=\text{B}$, $\text{B}':=\text{C}$ and $\text{C}':= \text{A}$, then ABC-BCA is just C'A'B'-A'B'C', i.e.~ABC-CAB.
}

\sam{Note that pairs of diagrams which are identical upon a reflection around a vertical axis (up to long-dashed arrows being identified as small-dashed arrows) trivially lead to the same knot: this corresponds to \enquote{flipping} the knot \acd{by} 180 degrees. The identification of a long-dashed arrow to a small-dashed arrow \acd{that} terminates at a vertex with an incoming or outgoing solid arrow is equivalent to choosing a different starting point in the DT knot construction, which does not affect the resulting knot. Thus pairs of causal sequences which present such a symmetry, such as ABC-BCA and ABC-CAB, as well as ABCD-CBDA and ABCD-DBAC respectively, yield the same knot and verify Conjecture \ref{conj:renaming}.}

We may then start categorising the type of knot associated with each causal order.

\begin{theorem}
    For a collection of two spacetimes $\{(\mathcal{M}_\mathcal{A},g_{\mathcal{A}}),(\mathcal{M}_\mathcal{B},g_{\mathcal{B}})\}$ with two events, the knot associated with definite causal order is an unknot.
\end{theorem}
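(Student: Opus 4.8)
The plan is to instantiate the drawing map $\gamma$ and the fine-grained construction explicitly for the definite configuration $\mathcal{S}_\mathcal{A} = \{A \prec B\} = \mathcal{S}_\mathcal{B}$, read off the resulting knot diagram, and show that its only crossing data is locally removable, so that the underlying loop is planar-isotopic to a round circle. First I would trace the solid arrows: by the first three cases of $\gamma$ they run $A^\mathcal{B} \to A^\mathcal{A} \to B^\mathcal{A} \to B^\mathcal{B}$, i.e.\ up the left edge, across the top, and down the right edge. By the last three cases the dashed arrows then run $B^\mathcal{B} \to B^\mathcal{A} \to A^\mathcal{A} \to A^\mathcal{B}$, closing the loop into a single closed curve. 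The essential observation is that, because the two orderings \emph{coincide}, the backward dashed traversal over the top line visits $B^\mathcal{A}, A^\mathcal{A}$ in exactly the reverse order of the forward solid traversal $A^\mathcal{A}, B^\mathcal{A}$; consequently every dashed segment can be drawn as an arc running alongside its solid partner rather than across it, reproducing the crossing-free picture of Fig.~\ref{fig:(2,2) definite causal order}. This is in sharp contrast to the indefinite case of Fig.~\ref{fig:(2,2) indefinite causal order}, where the order reversal forces the dashed and solid strands onto transverse diagonals.

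Second, I would feed this diagram through the fine-grained rules. Following the solid arrows from $A^\mathcal{B}$ to $B^\mathcal{B}$ and then switching to the dashed path, the only vertices at which the curve meets itself are $B^\mathcal{A}$ and $A^\mathcal{A}$. At $A^\mathcal{A}$, whose label $A$ is the first event of the second sequence $\mathcal{S}_\mathcal{B}$, rule~2(a)(ii) prescribes no winding, so the two passes are smoothed into parallel arcs with no crossing. At $B^\mathcal{A}$, rule~2(a)(i) inserts a single wind, i.e.\ a small self-crossing loop of the curve realised as a pair of crossings. Since a single wind about one vertex is a kink, these two crossings bound a disk containing no other arc, and are eliminated by one Reidemeister~I move. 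After their removal the diagram has no crossings and is a single planar loop, which is by definition the unknot; inverting the reference spacetime only reverses orientation and does not affect this conclusion.

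The step I expect to be the main obstacle is the second paragraph: rigorously arguing that the winding forced at $B^\mathcal{A}$ produces only a removable kink rather than an essential crossing. This amounts to checking that the two crossings created by the single wind are adjacent along the strand, with no other arc threaded between them, so that the Reidemeister~I move is unobstructed. The underlying reason is again $\mathcal{S}_\mathcal{A} = \mathcal{S}_\mathcal{B}$: it is precisely the absence of any order reversal that prevents a second strand from passing through the winding region and locking the crossings in place. Once this local analysis is secured, the global claim follows immediately, since there is a single such vertex and all remaining arcs are crossing-free.
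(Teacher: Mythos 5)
Your proposal takes essentially the same route as the paper's proof: instantiate the drawing map for $\mathcal{S}_\mathcal{A}=\mathcal{S}_\mathcal{B}=\{A\prec B\}$, apply the fine-grained rules (a wind at $B^\mathcal{A}$, since $B$ is not the first event of the second sequence, and no wind at $A^\mathcal{A}$, since $A$ is), and then remove the resulting crossings by Reidemeister moves to exhibit the unknot. This is precisely what the paper does, stated tersely by reference to its figure, and your tracing of the solid and dashed arrows and your application of rules 2(a)(i)--(ii) are both correct.

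One technical correction to your second paragraph, which you yourself flagged as the delicate step: a Reidemeister I move removes a \emph{single} crossing, so the pair of crossings created by the wind at $B^\mathcal{A}$ cannot be ``eliminated by one Reidemeister~I move.'' In the paper's figure the two crossings appear as two separate curls and are removed by \emph{two} type~I moves; the paper remarks that this pair constitutes a type~I$'$ move, which is the relevant point because a writhe-cancelling pair of type~I moves preserves regular isotopy, so the identification with the unknot holds even at that finer level of diagram equivalence. Your underlying geometric justification --- that because the two orderings coincide, no other arc threads the region bounded by the wind, so the local moves are unobstructed --- is exactly the right reason and is the substantive content of the argument; only the name and count of the local moves needs fixing (two RI moves, i.e.\ one type~I$'$ move, rather than one RI move removing two crossings).
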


\begin{proof}
    This follows by construction and by the fact that two type I Reidermeister moves \cite{Simon2023} (type I') relate regular isotopic knots, as is shown in Fig.~\ref{Unknot definite causal order}.
    \begin{figure}[h!]
        \centering
        \begin{tikzpicture}[x=0.75pt,y=0.75pt,yscale=-1,xscale=1]

\draw    (281.79,19.84) -- (399.66,19.16) ;
\draw [shift={(345.73,19.47)}, rotate = 179.67] [fill={rgb, 255:red, 0; green, 0; blue, 0 }  ][line width=0.08]  [draw opacity=0] (8.93,-4.29) -- (0,0) -- (8.93,4.29) -- cycle    ;
\draw    (280.23,68.42) -- (281.79,19.84) ;
\draw [shift={(281.17,39.13)}, rotate = 91.84] [fill={rgb, 255:red, 0; green, 0; blue, 0 }  ][line width=0.08]  [draw opacity=0] (8.93,-4.29) -- (0,0) -- (8.93,4.29) -- cycle    ;
\draw    (399.66,19.16) -- (398.1,68.42) ;
\draw [shift={(398.72,48.79)}, rotate = 271.82] [fill={rgb, 255:red, 0; green, 0; blue, 0 }  ][line width=0.08]  [draw opacity=0] (8.93,-4.29) -- (0,0) -- (8.93,4.29) -- cycle    ;
\draw  [dash pattern={on 0.84pt off 2.51pt}]  (398.1,68.42) .. controls (383.27,50.54) and (384.83,36.37) .. (399.66,19.16) ;
\draw [shift={(388.49,38.55)}, rotate = 97.3] [fill={rgb, 255:red, 0; green, 0; blue, 0 }  ][line width=0.08]  [draw opacity=0] (8.93,-4.29) -- (0,0) -- (8.93,4.29) -- cycle    ;
\draw  [dash pattern={on 4.5pt off 4.5pt}]  (281.79,19.84) .. controls (299.75,30.29) and (290.38,71.46) .. (280.23,68.42) ;
\draw [shift={(291.42,50.19)}, rotate = 274.65] [fill={rgb, 255:red, 0; green, 0; blue, 0 }  ][line width=0.08]  [draw opacity=0] (8.93,-4.29) -- (0,0) -- (8.93,4.29) -- cycle    ;
\draw  [dash pattern={on 0.84pt off 2.51pt}]  (399.66,19.16) .. controls (367,45.75) and (306,31.75) .. (281.79,19.84) ;
\draw [shift={(336.22,33.83)}, rotate = 3.62] [fill={rgb, 255:red, 0; green, 0; blue, 0 }  ][line width=0.08]  [draw opacity=0] (8.93,-4.29) -- (0,0) -- (8.93,4.29) -- cycle    ;
\draw    (344.02,83.26) -- (344.02,103.58)(341.02,83.26) -- (341.02,103.58) ;
\draw [shift={(342.52,111.58)}, rotate = 270] [color={rgb, 255:red, 0; green, 0; blue, 0 }  ][line width=0.75]    (10.93,-3.29) .. controls (6.95,-1.4) and (3.31,-0.3) .. (0,0) .. controls (3.31,0.3) and (6.95,1.4) .. (10.93,3.29)   ;
\draw    (417.5,140.25) .. controls (433.1,108.65) and (398.5,138.75) .. (393.5,149.25) .. controls (388.5,159.75) and (326,164.25) .. (296,134.75) ;
\draw [shift={(406.92,133.87)}, rotate = 321.14] [fill={rgb, 255:red, 0; green, 0; blue, 0 }  ][line width=0.08]  [draw opacity=0] (8.93,-4.29) -- (0,0) -- (8.93,4.29) -- cycle    ;
\draw [shift={(338.09,154.95)}, rotate = 9.72] [fill={rgb, 255:red, 0; green, 0; blue, 0 }  ][line width=0.08]  [draw opacity=0] (8.93,-4.29) -- (0,0) -- (8.93,4.29) -- cycle    ;
\draw    (262.33,145.67) .. controls (260.33,121) and (370.6,138.85) .. (395,139.25) ;
\draw [shift={(331.65,133.82)}, rotate = 183.72] [fill={rgb, 255:red, 0; green, 0; blue, 0 }  ][line width=0.08]  [draw opacity=0] (8.93,-4.29) -- (0,0) -- (8.93,4.29) -- cycle    ;
\draw    (349.75,192.33) -- (349.75,212.65)(346.75,192.33) -- (346.75,212.65) ;
\draw [shift={(348.25,220.65)}, rotate = 270] [color={rgb, 255:red, 0; green, 0; blue, 0 }  ][line width=0.75]    (10.93,-3.29) .. controls (6.95,-1.4) and (3.31,-0.3) .. (0,0) .. controls (3.31,0.3) and (6.95,1.4) .. (10.93,3.29)   ;
\draw    (340.2,277.27) .. controls (286.2,280.47) and (293,226.87) .. (336.2,226.87) .. controls (379.4,226.87) and (385.8,237.27) .. (385.8,254.47) .. controls (385.8,272.06) and (357.34,276.26) .. (338.95,277.34) ;
\draw [shift={(342.16,277.13)}, rotate = 176.27] [fill={rgb, 255:red, 0; green, 0; blue, 0 }  ][line width=0.08]  [draw opacity=0] (8.93,-4.29) -- (0,0) -- (8.93,4.29) -- cycle    ;
\draw    (405,141.25) .. controls (467.67,157.92) and (376.5,239.75) .. (412.5,148.25) ;
\draw [shift={(414.9,188.67)}, rotate = 317.87] [fill={rgb, 255:red, 0; green, 0; blue, 0 }  ][line width=0.08]  [draw opacity=0] (8.93,-4.29) -- (0,0) -- (8.93,4.29) -- cycle    ;
\draw    (295,129.25) .. controls (297.5,107.75) and (269.5,111.75) .. (278.5,132.25) ;
\draw [shift={(279.62,117.29)}, rotate = 336.11] [fill={rgb, 255:red, 0; green, 0; blue, 0 }  ][line width=0.08]  [draw opacity=0] (8.93,-4.29) -- (0,0) -- (8.93,4.29) -- cycle    ;
\draw    (279,138.25) .. controls (297.5,244.25) and (257,180.25) .. (262.33,145.67) ;
\draw [shift={(276.45,195.44)}, rotate = 355.64] [fill={rgb, 255:red, 0; green, 0; blue, 0 }  ][line width=0.08]  [draw opacity=0] (8.93,-4.29) -- (0,0) -- (8.93,4.29) -- cycle    ;

\draw (268.77,4.25) node [anchor=north west][inner sep=0.75pt]   [align=left] {A};
\draw (267.99,64.3) node [anchor=north west][inner sep=0.75pt]   [align=left] {A};
\draw (399.13,4.25) node [anchor=north west][inner sep=0.75pt]   [align=left] {B};
\draw (398.35,67.68) node [anchor=north west][inner sep=0.75pt]   [align=left] {B};
\draw (351.32,84.66) node [anchor=north west][inner sep=0.75pt]   [align=left] {{\footnotesize Knot}};

\end{tikzpicture}
    \caption{Definite causal order between two events corresponds to the unknot.}
    \label{Unknot definite causal order}
    \end{figure}
\end{proof}

\begin{theorem}
    For a collection of two spacetimes $\{(\mathcal{M}_\mathcal{A},g_{\mathcal{A}}),(\mathcal{M}_\mathcal{B},g_{\mathcal{B}})\}$ with two events, the diagram is a trefoil knot if and only if the causal order is indefinite.
\end{theorem}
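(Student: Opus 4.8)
The plan is to prove the two directions separately, exploiting that with exactly two events there are only two possible causal configurations: the orderings $\mathcal{S}_\mathcal{A}$ and $\mathcal{S}_\mathcal{B}$ either agree (definite) or are reversed (indefinite). The bulk of the work lies in the direction ``indefinite $\Rightarrow$ trefoil''; the converse then follows almost for free from the preceding theorem.

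For the forward direction, I would start from the indefinite diagram of Fig.~\ref{fig:(2,2) indefinite causal order} and apply the fine-grained knot construction step by step, just as was done for the unknot in the previous proof. The construction turns the diagram into a single closed oriented curve whose crossings are dictated by the winding rules: each dashed--solid intersection whose vertex is \emph{not} the first event of the second sequence contributes a winding (hence two crossings), whereas the central dashed--dashed intersection of the two diagonals and the distinguished first-event vertex contribute no winding. I would enumerate the intersections of the indefinite diagram, classify each as dashed--solid or dashed--dashed, read off the crossings produced, and then reduce the trivially removable windings by type~I Reidemeister moves exactly as in the definite case. The expected surviving diagram is a reduced, alternating three-crossing diagram, i.e.\ the standard trefoil picture.

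To certify that this three-crossing diagram is genuinely the trefoil rather than a disguised unknot, I would compute its Alexander--Conway polynomial via the skein relation, obtaining $\nabla(z)=z^2+1$ (equivalently $\Delta(t)=t-1+t^{-1}$), which differs from the unknot value $\nabla(z)=1$. Since the diagram has three crossings and a nontrivial invariant, it cannot be reduced below three crossings, so it realises the unique knot of crossing number three, namely the trefoil. A more pedestrian alternative would be to exhibit an explicit sequence of Reidemeister moves carrying the constructed diagram to the standard trefoil; I would nonetheless keep the polynomial computation as the primary argument, since it also supplies the invariant emphasised elsewhere in the paper.

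For the converse, I would argue by elimination. With two events the only alternative to indefinite order is definite order, and the previous theorem establishes that definite order yields the unknot. Because the unknot and the trefoil are distinguished by any nontrivial invariant --- e.g.\ their Alexander--Conway polynomials, or equivalently tricolourability --- a diagram that is a trefoil cannot arise from the definite case; hence a trefoil forces the causal order to be indefinite, completing the equivalence. The main obstacle throughout is the first direction: one must track the winding prescription carefully enough to be sure the three surviving crossings neither cancel nor reduce, which is precisely where invoking the Alexander--Conway polynomial, rather than relying on a picture, makes the argument rigorous.
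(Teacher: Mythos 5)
Your proposal is correct and takes essentially the same approach as the paper: the paper's entire proof is to run the fine-grained knot construction on the indefinite configuration $AB$--$BA$ and read off the trefoil from the resulting three-crossing figure, with the converse left implicit by the dichotomy with the definite/unknot case. Your additions --- certifying the constructed diagram via $\nabla(z)=z^{2}+1$ combined with the fact that no nontrivial knot has crossing number below three, and spelling out the converse by elimination --- merely make rigorous what the paper treats pictorially.
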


\begin{proof}
    This follows by construction, as can be seen in Fig.~\ref{Trefoil indefinite causal order}.
    \begin{figure}[h!]
        \centering
        \begin{tikzpicture}[x=0.75pt,y=0.75pt,yscale=-1,xscale=1]

\draw    (306.15,22.68) -- (400.99,22.09) ;
\draw [shift={(358.57,22.36)}, rotate = 179.64] [fill={rgb, 255:red, 0; green, 0; blue, 0 }  ][line width=0.08]  [draw opacity=0] (8.93,-4.29) -- (0,0) -- (8.93,4.29) -- cycle    ;
\draw    (304.89,65.14) -- (306.15,22.68) ;
\draw [shift={(305.67,38.91)}, rotate = 91.69] [fill={rgb, 255:red, 0; green, 0; blue, 0 }  ][line width=0.08]  [draw opacity=0] (8.93,-4.29) -- (0,0) -- (8.93,4.29) -- cycle    ;
\draw    (400.99,22.09) -- (399.73,65.14) ;
\draw [shift={(400.22,48.61)}, rotate = 271.67] [fill={rgb, 255:red, 0; green, 0; blue, 0 }  ][line width=0.08]  [draw opacity=0] (8.93,-4.29) -- (0,0) -- (8.93,4.29) -- cycle    ;
\draw  [dash pattern={on 0.84pt off 2.51pt}]  (399.73,65.14) -- (306.15,22.68) ;
\draw [shift={(348.39,41.84)}, rotate = 24.4] [fill={rgb, 255:red, 0; green, 0; blue, 0 }  ][line width=0.08]  [draw opacity=0] (8.93,-4.29) -- (0,0) -- (8.93,4.29) -- cycle    ;
\draw  [dash pattern={on 0.84pt off 2.51pt}]  (306.15,22.68) .. controls (335.8,36.4) and (373.4,30.4) .. (400.99,22.09) ;
\draw [shift={(358.8,30.36)}, rotate = 176.59] [fill={rgb, 255:red, 0; green, 0; blue, 0 }  ][line width=0.08]  [draw opacity=0] (8.93,-4.29) -- (0,0) -- (8.93,4.29) -- cycle    ;
\draw  [dash pattern={on 4.5pt off 4.5pt}]  (400.99,22.09) -- (304.89,65.14) ;
\draw [shift={(352.94,43.61)}, rotate = 335.87] [fill={rgb, 255:red, 0; green, 0; blue, 0 }  ][line width=0.08]  [draw opacity=0] (8.93,-4.29) -- (0,0) -- (8.93,4.29) -- cycle    ;
\draw    (350.61,75.64) -- (350.53,93.34)(347.61,75.63) -- (347.53,93.33) ;
\draw [shift={(349,101.33)}, rotate = 270.24] [color={rgb, 255:red, 0; green, 0; blue, 0 }  ][line width=0.75]    (10.93,-3.29) .. controls (6.95,-1.4) and (3.31,-0.3) .. (0,0) .. controls (3.31,0.3) and (6.95,1.4) .. (10.93,3.29)   ;
\draw    (333.53,284.57) .. controls (310.65,294.29) and (299.68,236.62) .. (328.51,232.86) .. controls (357.35,229.1) and (362.68,229.1) .. (384.31,231.92) ;
\draw [shift={(311.2,256.31)}, rotate = 87.53] [fill={rgb, 255:red, 0; green, 0; blue, 0 }  ][line width=0.08]  [draw opacity=0] (8.93,-4.29) -- (0,0) -- (8.93,4.29) -- cycle    ;
\draw [shift={(361.39,229.91)}, rotate = 178.82] [fill={rgb, 255:red, 0; green, 0; blue, 0 }  ][line width=0.08]  [draw opacity=0] (8.93,-4.29) -- (0,0) -- (8.93,4.29) -- cycle    ;
\draw    (388.7,233.49) .. controls (412.52,233.8) and (412.21,282.07) .. (387.76,282.69) .. controls (363.31,283.32) and (326.63,253.55) .. (324.75,236) ;
\draw [shift={(405.85,263.42)}, rotate = 274.73] [fill={rgb, 255:red, 0; green, 0; blue, 0 }  ][line width=0.08]  [draw opacity=0] (8.93,-4.29) -- (0,0) -- (8.93,4.29) -- cycle    ;
\draw [shift={(345.58,264.42)}, rotate = 37.36] [fill={rgb, 255:red, 0; green, 0; blue, 0 }  ][line width=0.08]  [draw opacity=0] (8.93,-4.29) -- (0,0) -- (8.93,4.29) -- cycle    ;
\draw    (323.19,230.35) .. controls (322.56,208.26) and (392.77,206.54) .. (386.82,231.92) .. controls (380.86,257.31) and (369.26,262.01) .. (357.35,270.16) ;
\draw [shift={(361.49,213.47)}, rotate = 181.16] [fill={rgb, 255:red, 0; green, 0; blue, 0 }  ][line width=0.08]  [draw opacity=0] (8.93,-4.29) -- (0,0) -- (8.93,4.29) -- cycle    ;
\draw [shift={(373.39,258.45)}, rotate = 311.38] [fill={rgb, 255:red, 0; green, 0; blue, 0 }  ][line width=0.08]  [draw opacity=0] (8.93,-4.29) -- (0,0) -- (8.93,4.29) -- cycle    ;
\draw    (352.34,273.29) -- (333.53,284.57) ;
\draw [shift={(338.65,281.5)}, rotate = 329.04] [fill={rgb, 255:red, 0; green, 0; blue, 0 }  ][line width=0.08]  [draw opacity=0] (8.93,-4.29) -- (0,0) -- (8.93,4.29) -- cycle    ;
\draw    (354.5,175.99) -- (354.58,196.41)(351.5,176.01) -- (351.58,196.42) ;
\draw [shift={(353.11,204.42)}, rotate = 269.78] [color={rgb, 255:red, 0; green, 0; blue, 0 }  ][line width=0.75]    (10.93,-3.29) .. controls (6.95,-1.4) and (3.31,-0.3) .. (0,0) .. controls (3.31,0.3) and (6.95,1.4) .. (10.93,3.29)   ;
\draw    (352.33,154.67) .. controls (280.33,174.67) and (283.67,112.67) .. (334.33,113.33) ;
\draw [shift={(297.39,141.64)}, rotate = 70.08] [fill={rgb, 255:red, 0; green, 0; blue, 0 }  ][line width=0.08]  [draw opacity=0] (8.93,-4.29) -- (0,0) -- (8.93,4.29) -- cycle    ;
\draw    (351,111.33) .. controls (387.67,110) and (422.33,109.33) .. (417,139.33) .. controls (411.67,169.33) and (373,166.67) .. (314.33,123.33) ;
\draw [shift={(397.82,113.1)}, rotate = 190.44] [fill={rgb, 255:red, 0; green, 0; blue, 0 }  ][line width=0.08]  [draw opacity=0] (8.93,-4.29) -- (0,0) -- (8.93,4.29) -- cycle    ;
\draw [shift={(362.05,151.86)}, rotate = 21.85] [fill={rgb, 255:red, 0; green, 0; blue, 0 }  ][line width=0.08]  [draw opacity=0] (8.93,-4.29) -- (0,0) -- (8.93,4.29) -- cycle    ;
\draw    (307.67,114) .. controls (316.33,78) and (353,120.67) .. (357,124.67) .. controls (361,128.67) and (382.33,128) .. (396.33,117.33) ;
\draw [shift={(336.33,105.8)}, rotate = 211.52] [fill={rgb, 255:red, 0; green, 0; blue, 0 }  ][line width=0.08]  [draw opacity=0] (8.93,-4.29) -- (0,0) -- (8.93,4.29) -- cycle    ;
\draw [shift={(382.42,124.38)}, rotate = 165.58] [fill={rgb, 255:red, 0; green, 0; blue, 0 }  ][line width=0.08]  [draw opacity=0] (8.93,-4.29) -- (0,0) -- (8.93,4.29) -- cycle    ;
\draw    (403.67,108) .. controls (403,90.67) and (427.67,96) .. (414.33,115.33) ;
\draw [shift={(418.03,102.74)}, rotate = 231.43] [fill={rgb, 255:red, 0; green, 0; blue, 0 }  ][line width=0.08]  [draw opacity=0] (8.93,-4.29) -- (0,0) -- (8.93,4.29) -- cycle    ;
\draw    (407,124) .. controls (395.67,138) and (376.33,143.33) .. (366.33,147.33) ;
\draw [shift={(383.7,140.56)}, rotate = 334.26] [fill={rgb, 255:red, 0; green, 0; blue, 0 }  ][line width=0.08]  [draw opacity=0] (8.93,-4.29) -- (0,0) -- (8.93,4.29) -- cycle    ;

\draw (294.56,6.02) node [anchor=north west][inner sep=0.75pt]   [align=left] {$\displaystyle A$};
\draw (399.39,7.4) node [anchor=north west][inner sep=0.75pt]   [align=left] {B};
\draw (396.32,59.67) node [anchor=north west][inner sep=0.75pt]   [align=left] {$\displaystyle A$};
\draw (296.38,61.65) node [anchor=north west][inner sep=0.75pt]   [align=left] {B};
\draw (356.19,78.26) node [anchor=north west][inner sep=0.75pt]   [align=left] {{\footnotesize Knot}};

\end{tikzpicture}
    \caption{Indefinite causal order between two events corresponds to the trefoil knot.}
    \label{Trefoil indefinite causal order}
    \end{figure}
\end{proof}

Thus, for the case of two events in two spacetimes in superposition, there are two possible knots: the unknot associated with definite causal order, and the trefoil knot associated with indefinite causal order.

We will use the Alexander-Conway polynomial as a knot invariant to discuss properties of causal orderings. The Alexander-Conway polynomial is a polynomial function associated with knots and links that can be can be constructed recursively from the following \textit{skein relations}:
\begin{itemize}
    \item $\nabla(\text{O}) = 1$ where $\text{O}$ denotes the unknot,
    \item $\nabla(L_{+}) = \nabla(L_{-}) + z \nabla(L_{0})$
\end{itemize}
where $L_+$, $L_-$ and $L_0$ are the crossings given in Fig.~\ref{fig:skein relations}. More concretely, starting from a given knot, one ``zooms in" on each of the crossings and considers the knots obtained by swapping said crossings for the remaining two options depicted in Fig.~\ref{fig:skein relations}. The second skein relation then tells us how to relate the Conway polynomial of the given knot to that of the modified knots. By iterating this procedure, one will eventually reach the unknot, whose Conway polynomial is simply $\nabla(\text{O}) =1$. This way, we obtain a recursive definition of the Conway polynomial of all other knots. For example, the Hopf link (two circles linked together exactly once) has Alexander-Conway polynomial $\nabla(\text{Hopf link}) = z$, while the trefoil knot has $\nabla(\text{Trefoil}) = z^2 + 1$. Let us also stress that the usefulness of the Alexander-Conway polynomial lies in the fact that it is a knot-invariant -- that is, it remains the same under deformations of the knot (Reidemeister moves) and is thus the same, independently of which two-dimensional projection we choose to represent it with. We refer the reader to \cite{Kauffman2001} for a more detailed introduction to knot invariants.

\begin{figure}[t!]
    \centering
    \begin{tikzpicture}[x=0.75pt,y=0.75pt,yscale=-0.8,xscale=0.8]
\draw    (166,198.5) -- (261.65,93.98) ;
\draw [shift={(263,92.5)}, rotate = 132.46] [color={rgb, 255:red, 0; green, 0; blue, 0 }  ][line width=0.75]    (10.93,-3.29) .. controls (6.95,-1.4) and (3.31,-0.3) .. (0,0) .. controls (3.31,0.3) and (6.95,1.4) .. (10.93,3.29)   ;
\draw    (219,148.5) -- (261,194.5) ;
\draw    (210,140) -- (168.37,95.95) ;
\draw [shift={(167,94.5)}, rotate = 46.62] [color={rgb, 255:red, 0; green, 0; blue, 0 }  ][line width=0.75]    (10.93,-3.29) .. controls (6.95,-1.4) and (3.31,-0.3) .. (0,0) .. controls (3.31,0.3) and (6.95,1.4) .. (10.93,3.29)   ;
\draw    (395,191.67) -- (305.38,97.78) ;
\draw [shift={(304,96.33)}, rotate = 46.33] [color={rgb, 255:red, 0; green, 0; blue, 0 }  ][line width=0.75]    (10.93,-3.29) .. controls (6.95,-1.4) and (3.31,-0.3) .. (0,0) .. controls (3.31,0.3) and (6.95,1.4) .. (10.93,3.29)   ;
\draw    (349,151) -- (312.33,192.33) ;
\draw    (355.33,144) -- (396.35,97.17) ;
\draw [shift={(397.67,95.67)}, rotate = 131.21] [color={rgb, 255:red, 0; green, 0; blue, 0 }  ][line width=0.75]    (10.93,-3.29) .. controls (6.95,-1.4) and (3.31,-0.3) .. (0,0) .. controls (3.31,0.3) and (6.95,1.4) .. (10.93,3.29)   ;
\draw    (440.67,192.33) .. controls (479.94,173.19) and (476.41,127.92) .. (440.11,99.2) ;
\draw [shift={(439,98.33)}, rotate = 37.52] [color={rgb, 255:red, 0; green, 0; blue, 0 }  ][line width=0.75]    (10.93,-3.29) .. controls (6.95,-1.4) and (3.31,-0.3) .. (0,0) .. controls (3.31,0.3) and (6.95,1.4) .. (10.93,3.29)   ;
\draw    (509.67,192.33) .. controls (481.43,167.38) and (477.13,117.85) .. (516.5,100.44) ;
\draw [shift={(518.33,99.67)}, rotate = 158.04] [color={rgb, 255:red, 0; green, 0; blue, 0 }  ][line width=0.75]    (10.93,-3.29) .. controls (6.95,-1.4) and (3.31,-0.3) .. (0,0) .. controls (3.31,0.3) and (6.95,1.4) .. (10.93,3.29)   ;
\draw (204,192.73) node [anchor=north west][inner sep=0.75pt]    {$L_{+}$};
\draw (348.67,193.4) node [anchor=north west][inner sep=0.75pt]    {$L_{-}$};
\draw (470,192.07) node [anchor=north west][inner sep=0.75pt]    {$L_{0}$};
\end{tikzpicture}
    \caption{\justifying The three possible crossings $L_+, L_-$, and $L_0$ of two lines in a knot, used in the recursive construction of the Alexander-Conway polynomial.}
    \label{fig:skein relations}
\end{figure}

Let us now consider all the possibilities for the case $N=3,M=2$. The resulting knots are shown in Fig.~\ref{fig:N=3,M=2}. Note that the knot \ref{ABC-ABC} for definite causal order in $N=3,M=2$ is still an unknot. Further note that the knots \ref{ABC-ACB} of ABC-ACB and \ref{ABC-BAC} of ABC-BAC are both equivalent to that for indefinite causal order for \ref{fig:(2,2) indefinite causal order} -- this just highlights the fact that appending an additional event before or after the series of events of interest does not change anything for the analysis of relative causal order. Moreover, note that knots \ref{ABC-BCA} and $\ref{ABC-CAB}$ have the same causal indefiniteness $\delta = 2$ -- it turns out that these knots are equivalent: they are both the three-twist knot ($5_2$ knot in Alexander-Briggs notation) as they have Alexander-Conway polynomial $\nabla(z) = 1+2z^2$, while they are inequivalent to knot \ref{ABC-CBA} with $\delta = 3$, which is a cinquefoil knot ($5_1$ knot in Alexander-Briggs notation, a (2,5)-torus knot) as its Alexander-Conway polynomial is $\nabla(z) = 1+3z^2 + z^4$. There is thus a clear link to be made between measures of causal order and topological invariants.

\begin{figure}[b!]
    \centering
    \subfloat[\centering ABC-ABC, \newline  $\delta = 0$ \label{ABC-ABC}]{\begin{tikzpicture}[x=0.75pt,y=0.75pt,yscale=-1,xscale=1]
\draw    (364.2,210) .. controls (310.2,213.2) and (317,159.6) .. (360.2,159.6) .. controls (403.4,159.6) and (409.8,170) .. (409.8,187.2) .. controls (409.8,203.54) and (385.25,208.33) .. (367.02,209.79) ;
\draw [shift={(364.2,210)}, rotate = 356.27] [fill={rgb, 255:red, 0; green, 0; blue, 0 }  ][line width=0.08]  [draw opacity=0] (8.93,-4.29) -- (0,0) -- (8.93,4.29) -- cycle    ;
\end{tikzpicture}}
    \subfloat[\centering ABC-ACB, \newline  $\delta = 1$ \label{ABC-ACB}]{\begin{tikzpicture}[x=0.75pt,y=0.75pt,yscale=-0.5,xscale=0.5]
\draw    (289.5,212.25) .. controls (253,227.75) and (235.5,135.75) .. (281.5,129.75) .. controls (327.5,123.75) and (336,123.75) .. (370.5,128.25) ;
\draw [shift={(253.85,170.47)}, rotate = 86.4] [fill={rgb, 255:red, 0; green, 0; blue, 0 }  ][line width=0.08]  [draw opacity=0] (8.93,-4.29) -- (0,0) -- (8.93,4.29) -- cycle    ;
\draw [shift={(331.04,125.05)}, rotate = 178.4] [fill={rgb, 255:red, 0; green, 0; blue, 0 }  ][line width=0.08]  [draw opacity=0] (8.93,-4.29) -- (0,0) -- (8.93,4.29) -- cycle    ;
\draw    (377.5,130.75) .. controls (415.5,131.25) and (415,208.25) .. (376,209.25) .. controls (337,210.25) and (278.5,162.75) .. (275.5,134.75) ;
\draw [shift={(405.26,175.63)}, rotate = 273.58] [fill={rgb, 255:red, 0; green, 0; blue, 0 }  ][line width=0.08]  [draw opacity=0] (8.93,-4.29) -- (0,0) -- (8.93,4.29) -- cycle    ;
\draw [shift={(311.06,181.96)}, rotate = 36.98] [fill={rgb, 255:red, 0; green, 0; blue, 0 }  ][line width=0.08]  [draw opacity=0] (8.93,-4.29) -- (0,0) -- (8.93,4.29) -- cycle    ;
\draw    (273,125.75) .. controls (272,90.5) and (384,87.75) .. (374.5,128.25) .. controls (365,168.75) and (346.5,176.25) .. (327.5,189.25) ;
\draw [shift={(331.72,98.69)}, rotate = 180.75] [fill={rgb, 255:red, 0; green, 0; blue, 0 }  ][line width=0.08]  [draw opacity=0] (8.93,-4.29) -- (0,0) -- (8.93,4.29) -- cycle    ;
\draw [shift={(355.05,168.54)}, rotate = 310.28] [fill={rgb, 255:red, 0; green, 0; blue, 0 }  ][line width=0.08]  [draw opacity=0] (8.93,-4.29) -- (0,0) -- (8.93,4.29) -- cycle    ;
\draw    (319.5,194.25) -- (289.5,212.25) ;
\draw [shift={(300.21,205.82)}, rotate = 329.04] [fill={rgb, 255:red, 0; green, 0; blue, 0 }  ][line width=0.08]  [draw opacity=0] (8.93,-4.29) -- (0,0) -- (8.93,4.29) -- cycle    ;\end{tikzpicture}}
    \subfloat[\centering ABC-BAC, $\delta = 1$ \label{ABC-BAC}]{\begin{tikzpicture}[x=0.75pt,y=0.75pt,yscale=-0.5,xscale=0.5]
\draw    (289.5,212.25) .. controls (253,227.75) and (235.5,135.75) .. (281.5,129.75) .. controls (327.5,123.75) and (336,123.75) .. (370.5,128.25) ;
\draw [shift={(253.85,170.47)}, rotate = 86.4] [fill={rgb, 255:red, 0; green, 0; blue, 0 }  ][line width=0.08]  [draw opacity=0] (8.93,-4.29) -- (0,0) -- (8.93,4.29) -- cycle    ;
\draw [shift={(331.04,125.05)}, rotate = 178.4] [fill={rgb, 255:red, 0; green, 0; blue, 0 }  ][line width=0.08]  [draw opacity=0] (8.93,-4.29) -- (0,0) -- (8.93,4.29) -- cycle    ;
\draw    (377.5,130.75) .. controls (415.5,131.25) and (415,208.25) .. (376,209.25) .. controls (337,210.25) and (278.5,162.75) .. (275.5,134.75) ;
\draw [shift={(405.26,175.63)}, rotate = 273.58] [fill={rgb, 255:red, 0; green, 0; blue, 0 }  ][line width=0.08]  [draw opacity=0] (8.93,-4.29) -- (0,0) -- (8.93,4.29) -- cycle    ;
\draw [shift={(311.06,181.96)}, rotate = 36.98] [fill={rgb, 255:red, 0; green, 0; blue, 0 }  ][line width=0.08]  [draw opacity=0] (8.93,-4.29) -- (0,0) -- (8.93,4.29) -- cycle    ;
\draw    (273,125.75) .. controls (272,90.5) and (384,87.75) .. (374.5,128.25) .. controls (365,168.75) and (346.5,176.25) .. (327.5,189.25) ;
\draw [shift={(331.72,98.69)}, rotate = 180.75] [fill={rgb, 255:red, 0; green, 0; blue, 0 }  ][line width=0.08]  [draw opacity=0] (8.93,-4.29) -- (0,0) -- (8.93,4.29) -- cycle    ;
\draw [shift={(355.05,168.54)}, rotate = 310.28] [fill={rgb, 255:red, 0; green, 0; blue, 0 }  ][line width=0.08]  [draw opacity=0] (8.93,-4.29) -- (0,0) -- (8.93,4.29) -- cycle    ;
\draw    (319.5,194.25) -- (289.5,212.25) ;
\draw [shift={(300.21,205.82)}, rotate = 329.04] [fill={rgb, 255:red, 0; green, 0; blue, 0 }  ][line width=0.08]  [draw opacity=0] (8.93,-4.29) -- (0,0) -- (8.93,4.29) -- cycle    ;\end{tikzpicture}} \\
    \subfloat[\centering ABC-BCA, \newline $\delta = 2$ \label{ABC-BCA}]{\includegraphics[scale=0.11]{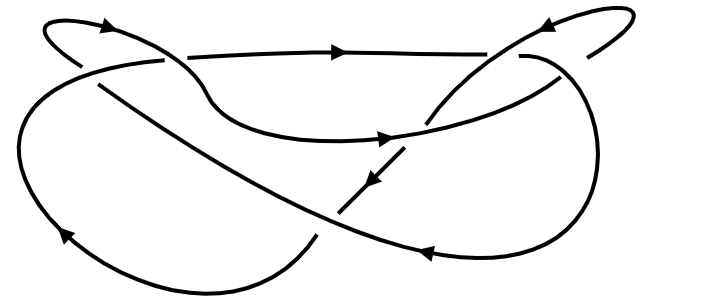}
}
    \subfloat[\centering ABC-CAB, \newline $\delta = 2$ \label{ABC-CAB}]{\includegraphics[scale=0.16]{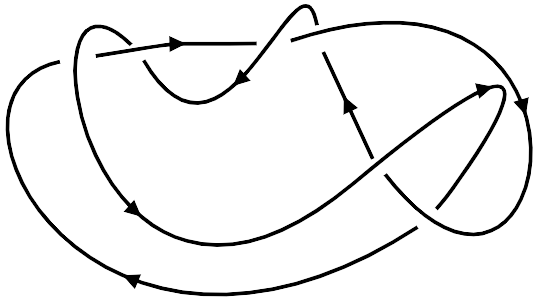}
}
    \subfloat[\centering ABC-CBA,\newline $\delta = 3$ \label{ABC-CBA}]{\begin{tikzpicture}[x=0.75pt,y=0.75pt,yscale=-0.5,xscale=0.5]

\draw    (288.33,196.33) .. controls (248.5,193.25) and (251,115.75) .. (271,111.75) .. controls (291,107.75) and (293,109.75) .. (308,110.25) ;
\draw [shift={(257.18,153.78)}, rotate = 83.06] [fill={rgb, 255:red, 0; green, 0; blue, 0 }  ][line width=0.08]  [draw opacity=0] (8.93,-4.29) -- (0,0) -- (8.93,4.29) -- cycle    ;
\draw [shift={(294.54,109.43)}, rotate = 180.77] [fill={rgb, 255:red, 0; green, 0; blue, 0 }  ][line width=0.08]  [draw opacity=0] (8.93,-4.29) -- (0,0) -- (8.93,4.29) -- cycle    ;
\draw    (320,108.75) -- (390.33,109) ;
\draw [shift={(360.17,108.89)}, rotate = 180.2] [fill={rgb, 255:red, 0; green, 0; blue, 0 }  ][line width=0.08]  [draw opacity=0] (8.93,-4.29) -- (0,0) -- (8.93,4.29) -- cycle    ;
\draw    (397.67,109) .. controls (443,109.67) and (443,194.33) .. (399.67,195.67) .. controls (356.33,197) and (275.67,141) .. (281.67,115) ;
\draw [shift={(431.76,157.21)}, rotate = 271.11] [fill={rgb, 255:red, 0; green, 0; blue, 0 }  ][line width=0.08]  [draw opacity=0] (8.93,-4.29) -- (0,0) -- (8.93,4.29) -- cycle    ;
\draw [shift={(324.42,167.11)}, rotate = 32.98] [fill={rgb, 255:red, 0; green, 0; blue, 0 }  ][line width=0.08]  [draw opacity=0] (8.93,-4.29) -- (0,0) -- (8.93,4.29) -- cycle    ;
\draw    (283,107) .. controls (301.67,81.67) and (310.17,99.08) .. (315.5,111.75) .. controls (320.83,124.42) and (360.33,127.67) .. (377,113) ;
\draw [shift={(305.88,96.2)}, rotate = 201.69] [fill={rgb, 255:red, 0; green, 0; blue, 0 }  ][line width=0.08]  [draw opacity=0] (8.93,-4.29) -- (0,0) -- (8.93,4.29) -- cycle    ;
\draw [shift={(350.85,122.46)}, rotate = 177.83] [fill={rgb, 255:red, 0; green, 0; blue, 0 }  ][line width=0.08]  [draw opacity=0] (8.93,-4.29) -- (0,0) -- (8.93,4.29) -- cycle    ;
\draw    (380,103.67) .. controls (385.67,93) and (398.33,86.33) .. (393.67,111.67) .. controls (389,137) and (360.33,158.33) .. (343,173.67) ;
\draw [shift={(394.54,99.68)}, rotate = 243.33] [fill={rgb, 255:red, 0; green, 0; blue, 0 }  ][line width=0.08]  [draw opacity=0] (8.93,-4.29) -- (0,0) -- (8.93,4.29) -- cycle    ;
\draw [shift={(369.45,150.11)}, rotate = 314.34] [fill={rgb, 255:red, 0; green, 0; blue, 0 }  ][line width=0.08]  [draw opacity=0] (8.93,-4.29) -- (0,0) -- (8.93,4.29) -- cycle    ;
\draw    (337,178.33) .. controls (317.67,191.67) and (306.33,196.33) .. (288.33,196.33) ;
\draw [shift={(309.25,193.35)}, rotate = 339.42] [fill={rgb, 255:red, 0; green, 0; blue, 0 }  ][line width=0.08]  [draw opacity=0] (8.93,-4.29) -- (0,0) -- (8.93,4.29) -- cycle    ;\end{tikzpicture}}
    \caption{\justifying Fine-grained knot classification of the causal order for $N=3,M=2$.}
    \label{fig:N=3,M=2}
\end{figure}

\subsubsection{Coarse-grained knot construction}

One may argue that repeated subsequences of events do not characterise the indefiniteness of the situation, e.g.~ABCDEFG-GBCDEFA is completely analogous to AHG-GHA from a topological point of view, where H=BCDEF. In practice, this means we need to coarse-grain the sequences of events: repeated subsequences can just be thought of as one big event as far as indefinite causal order goes. For example, if the event ``Alice looks at her microscope in the lab" occurs in both spacetimes, it makes sense to talk about this macroscopic event rather than (``photon 1 reaches Alice's eye" $\prec$ ``Alice's eye molecules react and send a signal to her brain" etc.). In practice, this is reached with the following coarse-grained ordered set:
\begin{widetext}
    \begin{equation}
    \slashed{\mathcal{S}}_\mathcal{AB} := \{(\mathcal{E}_i^{\mathcal{A}},\mathcal{E}_i^{\mathcal{B}}) \pprec (\mathcal{E}_j^{\mathcal{A}},\mathcal{E}_j^{\mathcal{B}}) \pprec ...: (\mathcal{E}_i^{\mathcal{A}},\mathcal{E}_i^{\mathcal{B}}) \in \mathcal{S}_\mathcal{AB} \text{ } \& \text{ } (\forall i>1, \forall j\neq i, \{\mathcal{E}_{i-1}^{\mathcal{B}}\prec \mathcal{E}_{i}^{\mathcal{B}}\} \neq \{\mathcal{E}_{j-1}^{\mathcal{A}}\prec \mathcal{E}_{j}^{\mathcal{A}}\})\}.
\end{equation}
\end{widetext}

Elements of $\slashed{\mathcal{S}}_\mathcal{AB}$ do not share two consecutive elements of $\mathcal{S}_\mathcal{AB}$, i.e.~we coarse-grained over the redundant subsequences of events. For example, $\text{ABCD-CABD} \equiv \text{ECD-CED} \equiv \text{EC-CE} \equiv \text{AB-BA}$, so the resulting knot ought to be a trefoil (and indeed now is). In this example, we call E (and C) a \textit{coarse-grained event}.

\begin{definition}[Effective causal order]
    The \textit{effective causal order} of \textit{coarse-grained events} $\mathcal{E}_{a}^\mathcal{X}$ and $\mathcal{E}_{b}^\mathcal{X}$ of spacetime $(\mathcal{M}_\mathcal{X},g_\mathcal{X})$ are defined as
    \begin{equation}
        s^{\mathcal{M_X}}_{\text{eff},ab} := \begin{cases}
        +1 \text{ if } \mathcal{E}_{a}^{\mathcal{X}}\prec \mathcal{E}_{b}^{\mathcal{X}} \\
        -1 \text{ otherwise}
        \end{cases}
    \end{equation}
    respectively, with convention that for $\mathcal{X} = \mathcal{A},\mathcal{B}$, $[s^{\mathcal{M_X}}_{\text{eff}}] \equiv \begin{pmatrix}
        0 && 1 \\ -1 && 0
    \end{pmatrix}$ if $\slashed{\mathcal{S}}_\mathcal{AB} = \varnothing$, which corresponds to the case of definite causal order for two coarse-grained events.
\end{definition}
All quantifiers of causal order introduced in Sec.~\ref{sec:quantifiers} straightforwardly follow for the effective causal order, and the diagrammatic and knot constructions can be performed straightforwardly in this case. For example, the coarse-grained knot classification of the causal order for $N=3,M=2$ is presented in Fig.~$\ref{fig:N=3,M=2 CG}$. We now see that cases \ref{ABC-ACB CG}, \ref{ABC-BAC CG}, \ref{ABC-BCA CG}, and \ref{ABC-CAB CG} are all equivalent after coarse-graining: they are all trefoil knots. 

\begin{figure}[b!]
    \centering
    \subfloat[\centering ABC-ABC, $\delta_{\text{eff}} = 0$ \label{ABC-ABC CG}]{\begin{tikzpicture}[x=0.75pt,y=0.75pt,yscale=-1,xscale=1]
\draw    (364.2,210) .. controls (310.2,213.2) and (317,159.6) .. (360.2,159.6) .. controls (403.4,159.6) and (409.8,170) .. (409.8,187.2) .. controls (409.8,203.54) and (385.25,208.33) .. (367.02,209.79) ;
\draw [shift={(364.2,210)}, rotate = 356.27] [fill={rgb, 255:red, 0; green, 0; blue, 0 }  ][line width=0.08]  [draw opacity=0] (8.93,-4.29) -- (0,0) -- (8.93,4.29) -- cycle    ;
\end{tikzpicture}}
    \subfloat[\centering ABC-ACB, $\delta_{\text{eff}} = 1$ \label{ABC-ACB CG}]{\begin{tikzpicture}[x=0.75pt,y=0.75pt,yscale=-0.5,xscale=0.5]
\draw    (289.5,212.25) .. controls (253,227.75) and (235.5,135.75) .. (281.5,129.75) .. controls (327.5,123.75) and (336,123.75) .. (370.5,128.25) ;
\draw [shift={(253.85,170.47)}, rotate = 86.4] [fill={rgb, 255:red, 0; green, 0; blue, 0 }  ][line width=0.08]  [draw opacity=0] (8.93,-4.29) -- (0,0) -- (8.93,4.29) -- cycle    ;
\draw [shift={(331.04,125.05)}, rotate = 178.4] [fill={rgb, 255:red, 0; green, 0; blue, 0 }  ][line width=0.08]  [draw opacity=0] (8.93,-4.29) -- (0,0) -- (8.93,4.29) -- cycle    ;
\draw    (377.5,130.75) .. controls (415.5,131.25) and (415,208.25) .. (376,209.25) .. controls (337,210.25) and (278.5,162.75) .. (275.5,134.75) ;
\draw [shift={(405.26,175.63)}, rotate = 273.58] [fill={rgb, 255:red, 0; green, 0; blue, 0 }  ][line width=0.08]  [draw opacity=0] (8.93,-4.29) -- (0,0) -- (8.93,4.29) -- cycle    ;
\draw [shift={(311.06,181.96)}, rotate = 36.98] [fill={rgb, 255:red, 0; green, 0; blue, 0 }  ][line width=0.08]  [draw opacity=0] (8.93,-4.29) -- (0,0) -- (8.93,4.29) -- cycle    ;
\draw    (273,125.75) .. controls (272,90.5) and (384,87.75) .. (374.5,128.25) .. controls (365,168.75) and (346.5,176.25) .. (327.5,189.25) ;
\draw [shift={(331.72,98.69)}, rotate = 180.75] [fill={rgb, 255:red, 0; green, 0; blue, 0 }  ][line width=0.08]  [draw opacity=0] (8.93,-4.29) -- (0,0) -- (8.93,4.29) -- cycle    ;
\draw [shift={(355.05,168.54)}, rotate = 310.28] [fill={rgb, 255:red, 0; green, 0; blue, 0 }  ][line width=0.08]  [draw opacity=0] (8.93,-4.29) -- (0,0) -- (8.93,4.29) -- cycle    ;
\draw    (319.5,194.25) -- (289.5,212.25) ;
\draw [shift={(300.21,205.82)}, rotate = 329.04] [fill={rgb, 255:red, 0; green, 0; blue, 0 }  ][line width=0.08]  [draw opacity=0] (8.93,-4.29) -- (0,0) -- (8.93,4.29) -- cycle    ;\end{tikzpicture}}
    \subfloat[\centering ABC-BAC, $\delta_{\text{eff}} = 1$ \label{ABC-BAC CG}]{\begin{tikzpicture}[x=0.75pt,y=0.75pt,yscale=-0.5,xscale=0.5]
\draw    (289.5,212.25) .. controls (253,227.75) and (235.5,135.75) .. (281.5,129.75) .. controls (327.5,123.75) and (336,123.75) .. (370.5,128.25) ;
\draw [shift={(253.85,170.47)}, rotate = 86.4] [fill={rgb, 255:red, 0; green, 0; blue, 0 }  ][line width=0.08]  [draw opacity=0] (8.93,-4.29) -- (0,0) -- (8.93,4.29) -- cycle    ;
\draw [shift={(331.04,125.05)}, rotate = 178.4] [fill={rgb, 255:red, 0; green, 0; blue, 0 }  ][line width=0.08]  [draw opacity=0] (8.93,-4.29) -- (0,0) -- (8.93,4.29) -- cycle    ;
\draw    (377.5,130.75) .. controls (415.5,131.25) and (415,208.25) .. (376,209.25) .. controls (337,210.25) and (278.5,162.75) .. (275.5,134.75) ;
\draw [shift={(405.26,175.63)}, rotate = 273.58] [fill={rgb, 255:red, 0; green, 0; blue, 0 }  ][line width=0.08]  [draw opacity=0] (8.93,-4.29) -- (0,0) -- (8.93,4.29) -- cycle    ;
\draw [shift={(311.06,181.96)}, rotate = 36.98] [fill={rgb, 255:red, 0; green, 0; blue, 0 }  ][line width=0.08]  [draw opacity=0] (8.93,-4.29) -- (0,0) -- (8.93,4.29) -- cycle    ;
\draw    (273,125.75) .. controls (272,90.5) and (384,87.75) .. (374.5,128.25) .. controls (365,168.75) and (346.5,176.25) .. (327.5,189.25) ;
\draw [shift={(331.72,98.69)}, rotate = 180.75] [fill={rgb, 255:red, 0; green, 0; blue, 0 }  ][line width=0.08]  [draw opacity=0] (8.93,-4.29) -- (0,0) -- (8.93,4.29) -- cycle    ;
\draw [shift={(355.05,168.54)}, rotate = 310.28] [fill={rgb, 255:red, 0; green, 0; blue, 0 }  ][line width=0.08]  [draw opacity=0] (8.93,-4.29) -- (0,0) -- (8.93,4.29) -- cycle    ;
\draw    (319.5,194.25) -- (289.5,212.25) ;
\draw [shift={(300.21,205.82)}, rotate = 329.04] [fill={rgb, 255:red, 0; green, 0; blue, 0 }  ][line width=0.08]  [draw opacity=0] (8.93,-4.29) -- (0,0) -- (8.93,4.29) -- cycle    ;\end{tikzpicture}} \\
    \subfloat[\centering ABC-BCA, $\delta_{\text{eff}} = 1$ \label{ABC-BCA CG}]{\begin{tikzpicture}[x=0.75pt,y=0.75pt,yscale=-0.5,xscale=0.5]
\draw    (289.5,212.25) .. controls (253,227.75) and (235.5,135.75) .. (281.5,129.75) .. controls (327.5,123.75) and (336,123.75) .. (370.5,128.25) ;
\draw [shift={(253.85,170.47)}, rotate = 86.4] [fill={rgb, 255:red, 0; green, 0; blue, 0 }  ][line width=0.08]  [draw opacity=0] (8.93,-4.29) -- (0,0) -- (8.93,4.29) -- cycle    ;
\draw [shift={(331.04,125.05)}, rotate = 178.4] [fill={rgb, 255:red, 0; green, 0; blue, 0 }  ][line width=0.08]  [draw opacity=0] (8.93,-4.29) -- (0,0) -- (8.93,4.29) -- cycle    ;
\draw    (377.5,130.75) .. controls (415.5,131.25) and (415,208.25) .. (376,209.25) .. controls (337,210.25) and (278.5,162.75) .. (275.5,134.75) ;
\draw [shift={(405.26,175.63)}, rotate = 273.58] [fill={rgb, 255:red, 0; green, 0; blue, 0 }  ][line width=0.08]  [draw opacity=0] (8.93,-4.29) -- (0,0) -- (8.93,4.29) -- cycle    ;
\draw [shift={(311.06,181.96)}, rotate = 36.98] [fill={rgb, 255:red, 0; green, 0; blue, 0 }  ][line width=0.08]  [draw opacity=0] (8.93,-4.29) -- (0,0) -- (8.93,4.29) -- cycle    ;
\draw    (273,125.75) .. controls (272,90.5) and (384,87.75) .. (374.5,128.25) .. controls (365,168.75) and (346.5,176.25) .. (327.5,189.25) ;
\draw [shift={(331.72,98.69)}, rotate = 180.75] [fill={rgb, 255:red, 0; green, 0; blue, 0 }  ][line width=0.08]  [draw opacity=0] (8.93,-4.29) -- (0,0) -- (8.93,4.29) -- cycle    ;
\draw [shift={(355.05,168.54)}, rotate = 310.28] [fill={rgb, 255:red, 0; green, 0; blue, 0 }  ][line width=0.08]  [draw opacity=0] (8.93,-4.29) -- (0,0) -- (8.93,4.29) -- cycle    ;
\draw    (319.5,194.25) -- (289.5,212.25) ;
\draw [shift={(300.21,205.82)}, rotate = 329.04] [fill={rgb, 255:red, 0; green, 0; blue, 0 }  ][line width=0.08]  [draw opacity=0] (8.93,-4.29) -- (0,0) -- (8.93,4.29) -- cycle    ;\end{tikzpicture}}
    \subfloat[\centering ABC-CAB, $\delta_{\text{eff}} = 1$ \label{ABC-CAB CG}]{\begin{tikzpicture}[x=0.75pt,y=0.75pt,yscale=-0.5,xscale=0.5]
\draw    (289.5,212.25) .. controls (253,227.75) and (235.5,135.75) .. (281.5,129.75) .. controls (327.5,123.75) and (336,123.75) .. (370.5,128.25) ;
\draw [shift={(253.85,170.47)}, rotate = 86.4] [fill={rgb, 255:red, 0; green, 0; blue, 0 }  ][line width=0.08]  [draw opacity=0] (8.93,-4.29) -- (0,0) -- (8.93,4.29) -- cycle    ;
\draw [shift={(331.04,125.05)}, rotate = 178.4] [fill={rgb, 255:red, 0; green, 0; blue, 0 }  ][line width=0.08]  [draw opacity=0] (8.93,-4.29) -- (0,0) -- (8.93,4.29) -- cycle    ;
\draw    (377.5,130.75) .. controls (415.5,131.25) and (415,208.25) .. (376,209.25) .. controls (337,210.25) and (278.5,162.75) .. (275.5,134.75) ;
\draw [shift={(405.26,175.63)}, rotate = 273.58] [fill={rgb, 255:red, 0; green, 0; blue, 0 }  ][line width=0.08]  [draw opacity=0] (8.93,-4.29) -- (0,0) -- (8.93,4.29) -- cycle    ;
\draw [shift={(311.06,181.96)}, rotate = 36.98] [fill={rgb, 255:red, 0; green, 0; blue, 0 }  ][line width=0.08]  [draw opacity=0] (8.93,-4.29) -- (0,0) -- (8.93,4.29) -- cycle    ;
\draw    (273,125.75) .. controls (272,90.5) and (384,87.75) .. (374.5,128.25) .. controls (365,168.75) and (346.5,176.25) .. (327.5,189.25) ;
\draw [shift={(331.72,98.69)}, rotate = 180.75] [fill={rgb, 255:red, 0; green, 0; blue, 0 }  ][line width=0.08]  [draw opacity=0] (8.93,-4.29) -- (0,0) -- (8.93,4.29) -- cycle    ;
\draw [shift={(355.05,168.54)}, rotate = 310.28] [fill={rgb, 255:red, 0; green, 0; blue, 0 }  ][line width=0.08]  [draw opacity=0] (8.93,-4.29) -- (0,0) -- (8.93,4.29) -- cycle    ;
\draw    (319.5,194.25) -- (289.5,212.25) ;
\draw [shift={(300.21,205.82)}, rotate = 329.04] [fill={rgb, 255:red, 0; green, 0; blue, 0 }  ][line width=0.08]  [draw opacity=0] (8.93,-4.29) -- (0,0) -- (8.93,4.29) -- cycle    ;\end{tikzpicture}}
    \subfloat[\centering ABC-CBA,\newline $\delta_{\text{eff}} = 3$ \label{ABC-CBA CG}]{\begin{tikzpicture}[x=0.75pt,y=0.75pt,yscale=-0.5,xscale=0.5]
\draw    (288.33,196.33) .. controls (235.67,195.67) and (232.33,109.67) .. (294.33,109) .. controls (356.33,108.33) and (313.67,109) .. (343.67,109) ;
\draw [shift={(249.08,144.9)}, rotate = 96.61] [fill={rgb, 255:red, 0; green, 0; blue, 0 }  ][line width=0.08]  [draw opacity=0] (8.93,-4.29) -- (0,0) -- (8.93,4.29) -- cycle    ;
\draw [shift={(324.29,108.72)}, rotate = 179.58] [fill={rgb, 255:red, 0; green, 0; blue, 0 }  ][line width=0.08]  [draw opacity=0] (8.93,-4.29) -- (0,0) -- (8.93,4.29) -- cycle    ;
\draw    (350.67,109.33) -- (390.33,109) ;
\draw [shift={(375.5,109.12)}, rotate = 179.52] [fill={rgb, 255:red, 0; green, 0; blue, 0 }  ][line width=0.08]  [draw opacity=0] (8.93,-4.29) -- (0,0) -- (8.93,4.29) -- cycle    ;
\draw    (397.67,109) .. controls (443,109.67) and (443,194.33) .. (399.67,195.67) .. controls (356.33,197) and (275.67,141) .. (281.67,115) ;
\draw [shift={(431.76,157.21)}, rotate = 271.11] [fill={rgb, 255:red, 0; green, 0; blue, 0 }  ][line width=0.08]  [draw opacity=0] (8.93,-4.29) -- (0,0) -- (8.93,4.29) -- cycle    ;
\draw [shift={(324.42,167.11)}, rotate = 32.98] [fill={rgb, 255:red, 0; green, 0; blue, 0 }  ][line width=0.08]  [draw opacity=0] (8.93,-4.29) -- (0,0) -- (8.93,4.29) -- cycle    ;
\draw    (283,107) .. controls (301.67,81.67) and (341,93) .. (346.33,105.67) .. controls (351.67,118.33) and (360.33,127.67) .. (377,113) ;
\draw [shift={(319.67,92.08)}, rotate = 182.22] [fill={rgb, 255:red, 0; green, 0; blue, 0 }  ][line width=0.08]  [draw opacity=0] (8.93,-4.29) -- (0,0) -- (8.93,4.29) -- cycle    ;
\draw [shift={(364.22,120.24)}, rotate = 183.54] [fill={rgb, 255:red, 0; green, 0; blue, 0 }  ][line width=0.08]  [draw opacity=0] (8.93,-4.29) -- (0,0) -- (8.93,4.29) -- cycle    ;
\draw    (380,103.67) .. controls (385.67,93) and (398.33,86.33) .. (393.67,111.67) .. controls (389,137) and (360.33,158.33) .. (343,173.67) ;
\draw [shift={(394.54,99.68)}, rotate = 243.33] [fill={rgb, 255:red, 0; green, 0; blue, 0 }  ][line width=0.08]  [draw opacity=0] (8.93,-4.29) -- (0,0) -- (8.93,4.29) -- cycle    ;
\draw [shift={(369.45,150.11)}, rotate = 314.34] [fill={rgb, 255:red, 0; green, 0; blue, 0 }  ][line width=0.08]  [draw opacity=0] (8.93,-4.29) -- (0,0) -- (8.93,4.29) -- cycle    ;
\draw    (337,178.33) .. controls (317.67,191.67) and (306.33,196.33) .. (288.33,196.33) ;
\draw [shift={(309.25,193.35)}, rotate = 339.42] [fill={rgb, 255:red, 0; green, 0; blue, 0 }  ][line width=0.08]  [draw opacity=0] (8.93,-4.29) -- (0,0) -- (8.93,4.29) -- cycle    ; \end{tikzpicture}}
    \caption{\justifying Coarse-grained knot classification of the causal order for $N=3,M=2$.}
    \label{fig:N=3,M=2 CG}
\end{figure}

\begin{definition}[Coarse-grainability]
    A pair of causal sequences of events is called \textit{coarse-grainable} if $S_{\mathcal{AB}} \neq \slashed{S}_{\mathcal{AB}}$.
\end{definition}

A related but inequivalent notion is that of reducibility. Consider a pair of causal sequences containing groups (that we shall call \textit{causal subsequences}) of events -- this pair of causal sequences are thus concatenations of causal subsequences. Within each subsequence, the causal order between events may be definite or indefinite. An example of this is ABCD-BADC, which is the concatenation of AB-BA with CD-DC.

\begin{definition}[Irreducible causal orderings]
    A pair of causal sequences of events is called \textit{reducible} if they contain non-trivial (i.e.~not themselves nor the empty set) causal subsequences, and \textit{irreducible} otherwise.
\end{definition}

Coarse-graining is, in a sense, equivalent to renaming events and thus does not bear any effect on the subsequent theorems -- it reflects some sort of redundancy in the sequences, and simply encapsulates different ways of looking at the same problem. On the other hand, reducibility is less trivial: if one concatenates subsequences, one expects some sort of additivity on a topological level. This is expressed in the following lemma and subsequent theorems.

\begin{lemma}
    \label{Prime knots irreducible subsequences}
    Knots associated to irreducible causal orderings are either prime knots or the unknot.
\end{lemma}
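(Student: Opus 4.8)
The plan is to prove the contrapositive: a causal ordering whose associated knot is composite (a non-trivial connected sum) must be reducible. Since every knot is the unknot, prime, or composite, this immediately gives that an irreducible ordering yields a prime knot or the unknot. The conceptual core is a dictionary between the two notions of decomposition in play -- concatenation of causal subsequences on the combinatorial side, and connected sum $\#$ on the knot-theoretic side.

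First I would establish the easy direction of this dictionary. Suppose an ordering factors as a concatenation $\mathcal{O} = \mathcal{O}_1 \cdot \mathcal{O}_2$, meaning every event of the first block precedes every event of the second block in \emph{both} spacetimes $\mathcal{A}$ and $\mathcal{B}$. Applying the drawing map $\gamma$ and the fine-grained knot construction, the crossings generated within $\mathcal{O}_1$ and those within $\mathcal{O}_2$ never interleave: the only strands joining the two blocks are the single solid arrow from the last $\mathcal{A}$-event of $\mathcal{O}_1$ to the first $\mathcal{A}$-event of $\mathcal{O}_2$ together with the matching dashed return along the $\mathcal{B}$-line. A single strand connecting two otherwise-disjoint knotted regions is precisely the defining picture of a connected sum, so $K(\mathcal{O}) = K(\mathcal{O}_1) \# K(\mathcal{O}_2)$. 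Iterating over the unique factorisation of $\mathcal{O}$ into maximal irreducible blocks $\mathcal{O}_1,\dots,\mathcal{O}_k$ then gives $K(\mathcal{O}) = K(\mathcal{O}_1)\# \cdots \# K(\mathcal{O}_k)$.

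The substantive step, and the main obstacle, is the converse: showing that if $\mathcal{O}$ is irreducible then $K(\mathcal{O})$ admits no non-trivial connected-sum decomposition. The difficulty is that ``composite'' is a three-dimensional topological property -- the existence of an essential $2$-sphere meeting the knot in two points, defined only up to isotopy -- whereas irreducibility is a rigid combinatorial feature of our particular diagram, so one must show that any decomposing sphere can be pushed back to a block boundary. I would pursue two complementary routes. The more self-contained one observes that an irreducible ordering corresponds to a sum-indecomposable permutation $\pi$ (no proper prefix $\{1,\dots,\ell\}$ with $\pi(\{1,\dots,\ell\})=\{1,\dots,\ell\}$); I would argue that any simple closed curve meeting the diagram in exactly two points separates the crossings into two nonempty groups, and that the interlocking structure of the winding crossings forces such a separation to preserve a prefix, which is impossible for $\ell$ non-trivial. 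The cleaner but less robust route is to verify that the construction on an irreducible ordering produces a $2$-bridge (rational) knot -- the diagram is a single solid strand threaded against a single dashed strand -- and then invoke Schubert's theorem that every $2$-bridge knot is prime; I would flag that this route must be checked, since higher-$N$ permutations could in principle yield prime knots of bridge number exceeding two, in which case only the direct tangle argument applies. Either route shows that an irreducible $\mathcal{O}$ yields a knot that is not a non-trivial connected sum, i.e.\ prime or the unknot, which completes the proof.
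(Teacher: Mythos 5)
Your proposal takes the same route as the paper's own proof: both argue by contraposition, via the dictionary between concatenation of causal subsequences and connected sum of knots. Your ``easy direction'' (a concatenated ordering yields the knot sum of the blocks' knots, the two blocks being joined only by a single solid strand and its dashed return) is exactly what the paper illustrates in Fig.~\ref{fig:reducible sequences} for ABCD-BADC $\cong$ (AB-BA)\,\#\,(CD-DC). Where the two treatments differ is in the converse, which is the direction the lemma actually needs: that a \emph{composite} knot type forces a \emph{combinatorial} splitting of the ordering. The paper dispatches this in one sentence --- if the knot is a knot sum then ``by construction'' there exist subsequences delimited by the composition, with no crossings between the parts --- which is an assertion rather than an argument: compositeness is the existence of an essential decomposing sphere up to isotopy, while the construction outputs one particular diagram, and nothing is said about why a decomposing sphere can be isotoped to respect that diagram's block structure. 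This is precisely the step you single out as the main obstacle, so your proposal in effect reproduces the paper's proof while making explicit the gap that the paper glosses over.

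Neither of your two proposed repairs is carried out, so your proposal does not close that gap either, but your hedging is sound. Route (b) (show the diagrams are 2-bridge and invoke Schubert's primality theorem) is rightly flagged as fragile: the small-$N$ examples (trefoil, $5_1$, $5_2$) and the maximally indefinite $(2,2N-1)$-torus knots happen to be 2-bridge, but nothing in the construction bounds the bridge number of the knot of a general irreducible braided ordering. Route (a) --- sum-indecomposability of the permutation together with an argument that any decomposing sphere can be pushed back to a block boundary --- is the right shape of argument, and is essentially what would be required to make the paper's ``by construction'' rigorous. In short: same approach, same gap; the difference is that you name the gap, which the published proof does not.
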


\begin{proof}
    Suppose that the knot is not prime nor the unknot, i.e.~it is the knot sum of several knots. By construction, this implies that there exist subsequences of causal orders that are delimited by such a composition, in the sense that there will be no crossing between different parts of the knot during the construction. This is shown in Fig.~\ref{fig:reducible sequences} for the example ABCD-BADC.
    \begin{figure*}
        \centering
        \begin{tikzpicture}[x=0.75pt,y=0.75pt,yscale=-1,xscale=1]

\draw    (48.15,31) -- (142.99,30.41) ;
\draw [shift={(100.57,30.67)}, rotate = 179.64] [fill={rgb, 255:red, 0; green, 0; blue, 0 }  ][line width=0.08]  [draw opacity=0] (8.93,-4.29) -- (0,0) -- (8.93,4.29) -- cycle    ;
\draw    (46.89,73.45) -- (48.15,31) ;
\draw [shift={(47.67,47.23)}, rotate = 91.69] [fill={rgb, 255:red, 0; green, 0; blue, 0 }  ][line width=0.08]  [draw opacity=0] (8.93,-4.29) -- (0,0) -- (8.93,4.29) -- cycle    ;
\draw    (142.99,30.41) -- (141.73,73.45) ;
\draw [shift={(142.22,56.93)}, rotate = 271.67] [fill={rgb, 255:red, 0; green, 0; blue, 0 }  ][line width=0.08]  [draw opacity=0] (8.93,-4.29) -- (0,0) -- (8.93,4.29) -- cycle    ;
\draw  [dash pattern={on 0.84pt off 2.51pt}]  (141.73,73.45) -- (48.15,31) ;
\draw [shift={(90.39,50.16)}, rotate = 24.4] [fill={rgb, 255:red, 0; green, 0; blue, 0 }  ][line width=0.08]  [draw opacity=0] (8.93,-4.29) -- (0,0) -- (8.93,4.29) -- cycle    ;
\draw  [dash pattern={on 0.84pt off 2.51pt}]  (48.15,31) .. controls (78.33,40.33) and (112.33,43) .. (142.99,30.41) ;
\draw [shift={(100.71,38.91)}, rotate = 179.93] [fill={rgb, 255:red, 0; green, 0; blue, 0 }  ][line width=0.08]  [draw opacity=0] (8.93,-4.29) -- (0,0) -- (8.93,4.29) -- cycle    ;
\draw  [dash pattern={on 4.5pt off 4.5pt}]  (142.99,30.41) -- (46.89,73.45) ;
\draw [shift={(94.94,51.93)}, rotate = 335.87] [fill={rgb, 255:red, 0; green, 0; blue, 0 }  ][line width=0.08]  [draw opacity=0] (8.93,-4.29) -- (0,0) -- (8.93,4.29) -- cycle    ;
\draw    (197.48,30.33) -- (292.32,29.74) ;
\draw [shift={(249.9,30.01)}, rotate = 179.64] [fill={rgb, 255:red, 0; green, 0; blue, 0 }  ][line width=0.08]  [draw opacity=0] (8.93,-4.29) -- (0,0) -- (8.93,4.29) -- cycle    ;
\draw    (196.22,72.78) -- (197.48,30.33) ;
\draw [shift={(197,46.56)}, rotate = 91.69] [fill={rgb, 255:red, 0; green, 0; blue, 0 }  ][line width=0.08]  [draw opacity=0] (8.93,-4.29) -- (0,0) -- (8.93,4.29) -- cycle    ;
\draw    (292.32,29.74) -- (291.07,72.78) ;
\draw [shift={(291.55,56.26)}, rotate = 271.67] [fill={rgb, 255:red, 0; green, 0; blue, 0 }  ][line width=0.08]  [draw opacity=0] (8.93,-4.29) -- (0,0) -- (8.93,4.29) -- cycle    ;
\draw  [dash pattern={on 0.84pt off 2.51pt}]  (291.07,72.78) -- (197.48,30.33) ;
\draw [shift={(239.72,49.49)}, rotate = 24.4] [fill={rgb, 255:red, 0; green, 0; blue, 0 }  ][line width=0.08]  [draw opacity=0] (8.93,-4.29) -- (0,0) -- (8.93,4.29) -- cycle    ;
\draw  [dash pattern={on 0.84pt off 2.51pt}]  (197.48,30.33) .. controls (223,42.33) and (264.33,43) .. (292.32,29.74) ;
\draw [shift={(250.11,39.38)}, rotate = 178.75] [fill={rgb, 255:red, 0; green, 0; blue, 0 }  ][line width=0.08]  [draw opacity=0] (8.93,-4.29) -- (0,0) -- (8.93,4.29) -- cycle    ;
\draw  [dash pattern={on 4.5pt off 4.5pt}]  (292.32,29.74) -- (196.22,72.78) ;
\draw [shift={(244.27,51.26)}, rotate = 335.87] [fill={rgb, 255:red, 0; green, 0; blue, 0 }  ][line width=0.08]  [draw opacity=0] (8.93,-4.29) -- (0,0) -- (8.93,4.29) -- cycle    ;
\draw    (169.47,91.6) -- (168.18,167.67) ;
\draw [shift={(168.13,170.67)}, rotate = 270.97] [fill={rgb, 255:red, 0; green, 0; blue, 0 }  ][line width=0.08]  [draw opacity=0] (8.93,-4.29) -- (0,0) -- (8.93,4.29) -- cycle    ;
\draw    (53.48,203) -- (148.32,202.41) ;
\draw [shift={(105.9,202.67)}, rotate = 179.64] [fill={rgb, 255:red, 0; green, 0; blue, 0 }  ][line width=0.08]  [draw opacity=0] (8.93,-4.29) -- (0,0) -- (8.93,4.29) -- cycle    ;
\draw    (52.22,245.45) -- (53.48,203) ;
\draw [shift={(53,219.23)}, rotate = 91.69] [fill={rgb, 255:red, 0; green, 0; blue, 0 }  ][line width=0.08]  [draw opacity=0] (8.93,-4.29) -- (0,0) -- (8.93,4.29) -- cycle    ;
\draw  [dash pattern={on 0.84pt off 2.51pt}]  (53.48,203) .. controls (79,215.67) and (117.67,215.67) .. (148.32,202.41) ;
\draw [shift={(106.02,212.2)}, rotate = 177.98] [fill={rgb, 255:red, 0; green, 0; blue, 0 }  ][line width=0.08]  [draw opacity=0] (8.93,-4.29) -- (0,0) -- (8.93,4.29) -- cycle    ;
\draw  [dash pattern={on 4.5pt off 4.5pt}]  (148.32,202.41) -- (52.22,245.45) ;
\draw [shift={(100.27,223.93)}, rotate = 335.87] [fill={rgb, 255:red, 0; green, 0; blue, 0 }  ][line width=0.08]  [draw opacity=0] (8.93,-4.29) -- (0,0) -- (8.93,4.29) -- cycle    ;
\draw    (192.99,202.41) -- (286.99,201.08) ;
\draw [shift={(244.99,201.67)}, rotate = 179.19] [fill={rgb, 255:red, 0; green, 0; blue, 0 }  ][line width=0.08]  [draw opacity=0] (8.93,-4.29) -- (0,0) -- (8.93,4.29) -- cycle    ;
\draw    (286.99,201.08) -- (285.73,244.12) ;
\draw [shift={(286.22,227.59)}, rotate = 271.67] [fill={rgb, 255:red, 0; green, 0; blue, 0 }  ][line width=0.08]  [draw opacity=0] (8.93,-4.29) -- (0,0) -- (8.93,4.29) -- cycle    ;
\draw  [dash pattern={on 0.84pt off 2.51pt}]  (285.73,244.12) -- (192.99,202.41) ;
\draw [shift={(234.8,221.21)}, rotate = 24.21] [fill={rgb, 255:red, 0; green, 0; blue, 0 }  ][line width=0.08]  [draw opacity=0] (8.93,-4.29) -- (0,0) -- (8.93,4.29) -- cycle    ;
\draw  [dash pattern={on 0.84pt off 2.51pt}]  (192.99,202.41) .. controls (222.33,217) and (263,214.33) .. (286.99,201.08) ;
\draw [shift={(245.52,212.03)}, rotate = 178.06] [fill={rgb, 255:red, 0; green, 0; blue, 0 }  ][line width=0.08]  [draw opacity=0] (8.93,-4.29) -- (0,0) -- (8.93,4.29) -- cycle    ;
\draw    (148.32,202.41) -- (192.99,202.41) ;
\draw [shift={(175.66,202.41)}, rotate = 180] [fill={rgb, 255:red, 0; green, 0; blue, 0 }  ][line width=0.08]  [draw opacity=0] (8.93,-4.29) -- (0,0) -- (8.93,4.29) -- cycle    ;
\draw  [dash pattern={on 0.84pt off 2.51pt}]  (286.99,201.08) .. controls (246.8,247.6) and (101.47,247.6) .. (53.48,203) ;
\draw [shift={(164.63,236.13)}, rotate = 0.67] [fill={rgb, 255:red, 0; green, 0; blue, 0 }  ][line width=0.08]  [draw opacity=0] (8.93,-4.29) -- (0,0) -- (8.93,4.29) -- cycle    ;
\draw    (313.33,46.67) -- (400.2,47.7) ;
\draw [shift={(403.2,47.73)}, rotate = 180.68] [fill={rgb, 255:red, 0; green, 0; blue, 0 }  ][line width=0.08]  [draw opacity=0] (8.93,-4.29) -- (0,0) -- (8.93,4.29) -- cycle    ;
\draw    (311.33,222) -- (398.2,223.03) ;
\draw [shift={(401.2,223.07)}, rotate = 180.68] [fill={rgb, 255:red, 0; green, 0; blue, 0 }  ][line width=0.08]  [draw opacity=0] (8.93,-4.29) -- (0,0) -- (8.93,4.29) -- cycle    ;
\draw    (557.3,88.27) -- (621.5,174.15) ;
\draw [shift={(623.3,176.55)}, rotate = 233.22] [fill={rgb, 255:red, 0; green, 0; blue, 0 }  ][line width=0.08]  [draw opacity=0] (8.93,-4.29) -- (0,0) -- (8.93,4.29) -- cycle    ;
\draw    (456.03,82.7) .. controls (433.15,92.42) and (422.18,34.75) .. (451.01,30.99) .. controls (479.85,27.23) and (485.18,27.23) .. (506.81,30.05) ;
\draw [shift={(433.7,54.44)}, rotate = 87.53] [fill={rgb, 255:red, 0; green, 0; blue, 0 }  ][line width=0.08]  [draw opacity=0] (8.93,-4.29) -- (0,0) -- (8.93,4.29) -- cycle    ;
\draw [shift={(483.89,28.04)}, rotate = 178.82] [fill={rgb, 255:red, 0; green, 0; blue, 0 }  ][line width=0.08]  [draw opacity=0] (8.93,-4.29) -- (0,0) -- (8.93,4.29) -- cycle    ;
\draw    (511.2,31.62) .. controls (535.02,31.93) and (534.71,80.19) .. (510.26,80.82) .. controls (485.81,81.45) and (449.13,51.67) .. (447.25,34.12) ;
\draw [shift={(528.35,61.55)}, rotate = 274.73] [fill={rgb, 255:red, 0; green, 0; blue, 0 }  ][line width=0.08]  [draw opacity=0] (8.93,-4.29) -- (0,0) -- (8.93,4.29) -- cycle    ;
\draw [shift={(468.08,62.54)}, rotate = 37.36] [fill={rgb, 255:red, 0; green, 0; blue, 0 }  ][line width=0.08]  [draw opacity=0] (8.93,-4.29) -- (0,0) -- (8.93,4.29) -- cycle    ;
\draw    (445.69,28.48) .. controls (445.06,6.39) and (515.27,4.66) .. (509.32,30.05) .. controls (503.36,55.43) and (491.76,60.14) .. (479.85,68.28) ;
\draw [shift={(483.99,11.6)}, rotate = 181.16] [fill={rgb, 255:red, 0; green, 0; blue, 0 }  ][line width=0.08]  [draw opacity=0] (8.93,-4.29) -- (0,0) -- (8.93,4.29) -- cycle    ;
\draw [shift={(495.89,56.57)}, rotate = 311.38] [fill={rgb, 255:red, 0; green, 0; blue, 0 }  ][line width=0.08]  [draw opacity=0] (8.93,-4.29) -- (0,0) -- (8.93,4.29) -- cycle    ;
\draw    (474.84,71.42) -- (456.03,82.7) ;
\draw [shift={(461.15,79.63)}, rotate = 329.04] [fill={rgb, 255:red, 0; green, 0; blue, 0 }  ][line width=0.08]  [draw opacity=0] (8.93,-4.29) -- (0,0) -- (8.93,4.29) -- cycle    ;
\draw    (601.53,82.7) .. controls (578.65,92.42) and (567.68,34.75) .. (596.51,30.99) .. controls (625.35,27.23) and (630.68,27.23) .. (652.31,30.05) ;
\draw [shift={(579.2,54.44)}, rotate = 87.53] [fill={rgb, 255:red, 0; green, 0; blue, 0 }  ][line width=0.08]  [draw opacity=0] (8.93,-4.29) -- (0,0) -- (8.93,4.29) -- cycle    ;
\draw [shift={(629.39,28.04)}, rotate = 178.82] [fill={rgb, 255:red, 0; green, 0; blue, 0 }  ][line width=0.08]  [draw opacity=0] (8.93,-4.29) -- (0,0) -- (8.93,4.29) -- cycle    ;
\draw    (656.7,31.62) .. controls (680.52,31.93) and (680.21,80.19) .. (655.76,80.82) .. controls (631.31,81.45) and (594.63,51.67) .. (592.75,34.12) ;
\draw [shift={(673.85,61.55)}, rotate = 274.73] [fill={rgb, 255:red, 0; green, 0; blue, 0 }  ][line width=0.08]  [draw opacity=0] (8.93,-4.29) -- (0,0) -- (8.93,4.29) -- cycle    ;
\draw [shift={(613.58,62.54)}, rotate = 37.36] [fill={rgb, 255:red, 0; green, 0; blue, 0 }  ][line width=0.08]  [draw opacity=0] (8.93,-4.29) -- (0,0) -- (8.93,4.29) -- cycle    ;
\draw    (591.19,28.48) .. controls (590.56,6.39) and (660.77,4.66) .. (654.82,30.05) .. controls (648.86,55.43) and (637.26,60.14) .. (625.35,68.28) ;
\draw [shift={(629.49,11.6)}, rotate = 181.16] [fill={rgb, 255:red, 0; green, 0; blue, 0 }  ][line width=0.08]  [draw opacity=0] (8.93,-4.29) -- (0,0) -- (8.93,4.29) -- cycle    ;
\draw [shift={(641.39,56.57)}, rotate = 311.38] [fill={rgb, 255:red, 0; green, 0; blue, 0 }  ][line width=0.08]  [draw opacity=0] (8.93,-4.29) -- (0,0) -- (8.93,4.29) -- cycle    ;
\draw    (620.34,71.42) -- (601.53,82.7) ;
\draw [shift={(606.65,79.63)}, rotate = 329.04] [fill={rgb, 255:red, 0; green, 0; blue, 0 }  ][line width=0.08]  [draw opacity=0] (8.93,-4.29) -- (0,0) -- (8.93,4.29) -- cycle    ;
\draw    (584.82,259.2) .. controls (570.98,268.92) and (552.3,209.35) .. (581.79,207.49) .. controls (611.28,205.63) and (604.3,207.05) .. (614.3,203.55) ;
\draw [shift={(566.13,229.29)}, rotate = 85.17] [fill={rgb, 255:red, 0; green, 0; blue, 0 }  ][line width=0.08]  [draw opacity=0] (8.93,-4.29) -- (0,0) -- (8.93,4.29) -- cycle    ;
\draw [shift={(603.35,206.29)}, rotate = 176.96] [fill={rgb, 255:red, 0; green, 0; blue, 0 }  ][line width=0.08]  [draw opacity=0] (8.93,-4.29) -- (0,0) -- (8.93,4.29) -- cycle    ;
\draw    (656.01,247.65) .. controls (647.99,257.15) and (622.34,253.82) .. (610.8,254.45) .. controls (599.26,255.08) and (580.4,228.17) .. (579.51,210.62) ;
\draw [shift={(629.12,254.52)}, rotate = 358.44] [fill={rgb, 255:red, 0; green, 0; blue, 0 }  ][line width=0.08]  [draw opacity=0] (8.93,-4.29) -- (0,0) -- (8.93,4.29) -- cycle    ;
\draw [shift={(586.92,232.66)}, rotate = 58.39] [fill={rgb, 255:red, 0; green, 0; blue, 0 }  ][line width=0.08]  [draw opacity=0] (8.93,-4.29) -- (0,0) -- (8.93,4.29) -- cycle    ;
\draw    (578.57,204.98) .. controls (578.19,182.89) and (620.66,181.16) .. (617.06,206.55) .. controls (613.46,231.93) and (606.44,236.64) .. (599.24,244.78) ;
\draw [shift={(604.38,188.53)}, rotate = 185.39] [fill={rgb, 255:red, 0; green, 0; blue, 0 }  ][line width=0.08]  [draw opacity=0] (8.93,-4.29) -- (0,0) -- (8.93,4.29) -- cycle    ;
\draw [shift={(609.59,231.97)}, rotate = 295.73] [fill={rgb, 255:red, 0; green, 0; blue, 0 }  ][line width=0.08]  [draw opacity=0] (8.93,-4.29) -- (0,0) -- (8.93,4.29) -- cycle    ;
\draw    (596.2,247.92) -- (584.82,259.2) ;
\draw [shift={(586.96,257.08)}, rotate = 315.24] [fill={rgb, 255:red, 0; green, 0; blue, 0 }  ][line width=0.08]  [draw opacity=0] (8.93,-4.29) -- (0,0) -- (8.93,4.29) -- cycle    ;
\draw    (679.54,203.62) .. controls (698.13,203.93) and (697.88,252.19) .. (678.81,252.82) .. controls (659.73,253.45) and (631.11,223.67) .. (629.64,206.12) ;
\draw [shift={(692.92,233.55)}, rotate = 273.47] [fill={rgb, 255:red, 0; green, 0; blue, 0 }  ][line width=0.08]  [draw opacity=0] (8.93,-4.29) -- (0,0) -- (8.93,4.29) -- cycle    ;
\draw [shift={(644.22,232.76)}, rotate = 45.83] [fill={rgb, 255:red, 0; green, 0; blue, 0 }  ][line width=0.08]  [draw opacity=0] (8.93,-4.29) -- (0,0) -- (8.93,4.29) -- cycle    ;
\draw    (628.42,200.48) .. controls (627.93,178.39) and (679.95,176.46) .. (675.3,201.85) .. controls (670.65,227.24) and (669.59,236.2) .. (660.3,244.35) ;
\draw [shift={(658.54,183.68)}, rotate = 182.75] [fill={rgb, 255:red, 0; green, 0; blue, 0 }  ][line width=0.08]  [draw opacity=0] (8.93,-4.29) -- (0,0) -- (8.93,4.29) -- cycle    ;
\draw [shift={(669.6,229.37)}, rotate = 284.8] [fill={rgb, 255:red, 0; green, 0; blue, 0 }  ][line width=0.08]  [draw opacity=0] (8.93,-4.29) -- (0,0) -- (8.93,4.29) -- cycle    ;
\draw    (620.06,204.05) .. controls (636.58,202.05) and (655.13,202.45) .. (672.3,203.45) ;
\draw [shift={(651.25,202.65)}, rotate = 180.26] [fill={rgb, 255:red, 0; green, 0; blue, 0 }  ][line width=0.08]  [draw opacity=0] (8.93,-4.29) -- (0,0) -- (8.93,4.29) -- cycle    ;
\draw    (423.32,261.2) .. controls (409.48,270.92) and (390.8,211.35) .. (420.29,209.49) .. controls (449.78,207.63) and (440.96,205.73) .. (454.04,208.55) ;
\draw [shift={(404.63,231.29)}, rotate = 85.17] [fill={rgb, 255:red, 0; green, 0; blue, 0 }  ][line width=0.08]  [draw opacity=0] (8.93,-4.29) -- (0,0) -- (8.93,4.29) -- cycle    ;
\draw [shift={(442.21,207.4)}, rotate = 172.81] [fill={rgb, 255:red, 0; green, 0; blue, 0 }  ][line width=0.08]  [draw opacity=0] (8.93,-4.29) -- (0,0) -- (8.93,4.29) -- cycle    ;
\draw    (525.3,223.05) .. controls (517.27,232.55) and (460.84,255.82) .. (449.3,256.45) .. controls (437.76,257.08) and (418.9,230.17) .. (418.01,212.62) ;
\draw [shift={(484.15,244.96)}, rotate = 337.1] [fill={rgb, 255:red, 0; green, 0; blue, 0 }  ][line width=0.08]  [draw opacity=0] (8.93,-4.29) -- (0,0) -- (8.93,4.29) -- cycle    ;
\draw [shift={(425.42,234.66)}, rotate = 58.39] [fill={rgb, 255:red, 0; green, 0; blue, 0 }  ][line width=0.08]  [draw opacity=0] (8.93,-4.29) -- (0,0) -- (8.93,4.29) -- cycle    ;
\draw    (417.07,206.98) .. controls (416.69,184.89) and (459.16,183.16) .. (455.56,208.55) .. controls (451.96,233.93) and (444.94,238.64) .. (437.74,246.78) ;
\draw [shift={(442.88,190.53)}, rotate = 185.39] [fill={rgb, 255:red, 0; green, 0; blue, 0 }  ][line width=0.08]  [draw opacity=0] (8.93,-4.29) -- (0,0) -- (8.93,4.29) -- cycle    ;
\draw [shift={(448.09,233.97)}, rotate = 295.73] [fill={rgb, 255:red, 0; green, 0; blue, 0 }  ][line width=0.08]  [draw opacity=0] (8.93,-4.29) -- (0,0) -- (8.93,4.29) -- cycle    ;
\draw    (434.7,249.92) -- (423.32,261.2) ;
\draw [shift={(425.46,259.08)}, rotate = 315.24] [fill={rgb, 255:red, 0; green, 0; blue, 0 }  ][line width=0.08]  [draw opacity=0] (8.93,-4.29) -- (0,0) -- (8.93,4.29) -- cycle    ;
\draw    (458.56,206.05) .. controls (475.08,204.05) and (471.63,204.55) .. (488.8,205.55) ;
\draw [shift={(478.74,204.9)}, rotate = 182.07] [fill={rgb, 255:red, 0; green, 0; blue, 0 }  ][line width=0.08]  [draw opacity=0] (8.93,-4.29) -- (0,0) -- (8.93,4.29) -- cycle    ;
\draw    (530.8,223.05) .. controls (536.37,233.63) and (504.8,255.55) .. (498.8,242.55) ;
\draw [shift={(515,244)}, rotate = 329.54] [fill={rgb, 255:red, 0; green, 0; blue, 0 }  ][line width=0.08]  [draw opacity=0] (8.93,-4.29) -- (0,0) -- (8.93,4.29) -- cycle    ;
\draw    (493.3,234.05) .. controls (487.3,227.05) and (484.8,224.05) .. (476.3,209.05) ;
\draw [shift={(481.33,217.69)}, rotate = 57.78] [fill={rgb, 255:red, 0; green, 0; blue, 0 }  ][line width=0.08]  [draw opacity=0] (8.93,-4.29) -- (0,0) -- (8.93,4.29) -- cycle    ;
\draw    (472.3,200.55) .. controls (459.8,173.05) and (489.8,205.05) .. (496.3,211.05) .. controls (502.8,217.05) and (513.8,221.05) .. (516.3,211.55) ;
\draw [shift={(483.45,198.66)}, rotate = 221.43] [fill={rgb, 255:red, 0; green, 0; blue, 0 }  ][line width=0.08]  [draw opacity=0] (8.93,-4.29) -- (0,0) -- (8.93,4.29) -- cycle    ;
\draw [shift={(511.91,216.84)}, rotate = 180.99] [fill={rgb, 255:red, 0; green, 0; blue, 0 }  ][line width=0.08]  [draw opacity=0] (8.93,-4.29) -- (0,0) -- (8.93,4.29) -- cycle    ;
\draw    (518.3,205.05) .. controls (526.8,195.05) and (551.3,199.55) .. (525.3,223.05) ;
\draw [shift={(536.03,209.8)}, rotate = 275.74] [fill={rgb, 255:red, 0; green, 0; blue, 0 }  ][line width=0.08]  [draw opacity=0] (8.93,-4.29) -- (0,0) -- (8.93,4.29) -- cycle    ;
\draw    (494.8,206.55) .. controls (511.32,204.55) and (522.3,206.05) .. (526.3,215.55) ;
\draw [shift={(517.38,207.37)}, rotate = 192.5] [fill={rgb, 255:red, 0; green, 0; blue, 0 }  ][line width=0.08]  [draw opacity=0] (8.93,-4.29) -- (0,0) -- (8.93,4.29) -- cycle    ;

\draw (36.56,14.33) node [anchor=north west][inner sep=0.75pt]   [align=left] {$\displaystyle A$};
\draw (141.39,15.72) node [anchor=north west][inner sep=0.75pt]   [align=left] {B};
\draw (138.32,67.99) node [anchor=north west][inner sep=0.75pt]   [align=left] {$\displaystyle A$};
\draw (38.38,69.96) node [anchor=north west][inner sep=0.75pt]   [align=left] {B};
\draw (185.9,13.67) node [anchor=north west][inner sep=0.75pt]   [align=left] {$\displaystyle C$};
\draw (290.72,15.05) node [anchor=north west][inner sep=0.75pt]   [align=left] {D};
\draw (287.65,67.32) node [anchor=north west][inner sep=0.75pt]   [align=left] {$\displaystyle C$};
\draw (187.71,69.3) node [anchor=north west][inner sep=0.75pt]   [align=left] {D};
\draw (41.9,183.67) node [anchor=north west][inner sep=0.75pt]   [align=left] {$\displaystyle A$};
\draw (144.05,181.67) node [anchor=north west][inner sep=0.75pt]   [align=left] {B};
\draw (185.9,179) node [anchor=north west][inner sep=0.75pt]   [align=left] {$\displaystyle C$};
\draw (283.39,180.33) node [anchor=north west][inner sep=0.75pt]   [align=left] {D};
\draw (192.72,246.33) node [anchor=north west][inner sep=0.75pt]   [align=left] {D};
\draw (279.23,249) node [anchor=north west][inner sep=0.75pt]   [align=left] {$\displaystyle C$};
\draw (138.98,245.32) node [anchor=north west][inner sep=0.75pt]   [align=left] {$\displaystyle A$};
\draw (45.04,245.3) node [anchor=north west][inner sep=0.75pt]   [align=left] {B};
\draw (173.3,123) node [anchor=north west][inner sep=0.75pt]   [align=left] {Concatenation};
\draw (339.8,199.5) node [anchor=north west][inner sep=0.75pt]   [align=left] {Knot};
\draw (339.8,56.1) node [anchor=north west][inner sep=0.75pt]   [align=left] {Knot};
\draw (594.8,115.45) node [anchor=north west][inner sep=0.75pt]   [align=left] {\#};
\draw (536.8,221.3) node [anchor=north west][inner sep=0.75pt]    {$\cong $};

\end{tikzpicture}

        \caption{\justifying Concatenation of two sequences AB-BA and CD-DC into ABCD-BADC. The knot representation of the disjoint sequences gives two trefoil knots, which under a knot sum (denoted $\#$) gives the same knot as the knot representation of ABCD-BADC.}
        \label{fig:reducible sequences}
    \end{figure*}
    By contraposition, this means that if the knot is associated to an irreducible sequence of causal orders, then the knot is prime or the unknot.
\end{proof}

\begin{lemma}
    \label{Maximally indefinite knots}
    Knots associated to maximally indefinite causal order for superpositions of $M=2$ spacetimes are $(2,2N-1)$-torus knots. 
\end{lemma}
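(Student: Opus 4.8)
The plan is to first pin down what ``maximally indefinite'' means combinatorially and then to identify the resulting knot diagram as a standard closed two-strand braid. By the definition of maximally indefinite longitudinal causal order, $\ayin^{\mathcal{A}\mathcal{B}} = -\binom{N}{2}$, so $\tsadi^{\mathcal{A}\mathcal{B}}_{ij} = -1$ for every pair, i.e.\ $s^{\mathcal{A}}_{ij} = -s^{\mathcal{B}}_{ij}$ for all $i<j$. I would record the elementary fact that a linear order on $N$ elements whose every pairwise relation is flipped relative to a reference order must be the exact reverse of that reference order: if any two events retained their relative order, the corresponding $\tsadi^{\mathcal{A}\mathcal{B}}_{ij}$ would equal $+1$, contradicting maximal indefiniteness. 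Hence, without loss of generality, $\mathcal{S}_\mathcal{A} = \{\mathcal{E}_1 \prec \dots \prec \mathcal{E}_N\}$ and $\mathcal{S}_\mathcal{B} = \{\mathcal{E}_N \prec \dots \prec \mathcal{E}_1\}$. The base case $N=2$ is exactly the sequence $AB$--$BA$, whose knot is the trefoil by the preceding theorem; since $2N-1 = 3$ and the trefoil is the $(2,3)$-torus knot, the claim holds for $N=2$.

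For the main step I would argue that, applied to the fully reversed sequence, the drawing map $\gamma$ together with the fine-grained winding rule produces, up to planar isotopy, a closed two-strand braid. The key structural observation is that the solid path traverses $\mathcal{S}_\mathcal{A}$ monotonically while the dashed path retraces $\mathcal{S}_\mathcal{B}$ monotonically in the opposite direction, so the two strands wind around one another without ever separating into independent sub-tangles; because every pair of events is indefinite, each successive encounter produces a crossing of the same handedness. I would then show that the winding prescription contributes these crossings so that the resulting braid word is $\sigma_1^{2N-1}$ in $\mathbf{B}_2$. Its closure $\widehat{\sigma_1^{2N-1}}$ is by definition the $(2,2N-1)$-torus knot, and it is a single-component knot (rather than a two-component link) precisely because $2N-1$ is odd.

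Concretely I would carry this out as an induction on $N$: assuming the reversed sequence on $N$ events yields $\widehat{\sigma_1^{2N-1}}$, I would show that inserting the $(N+1)$-th event at the top of $\mathcal{S}_\mathcal{A}$ and the bottom of $\mathcal{S}_\mathcal{B}$ augments the two-strand region by exactly one full twist, sending $\sigma_1^{2N-1}$ to $\sigma_1^{2N+1}$ and hence the $(2,2N-1)$-torus knot to the $(2,2N+1)$-torus knot. The hard part is precisely this local diagrammatic bookkeeping: I must verify that the two crossings introduced by the new vertex are nested and carry the same sign as the existing twists, so that no Reidemeister~II move collapses them, and that the remainder of the diagram is left untouched. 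As an independent safeguard against a miscount, I would cross-check the identification by computing the Alexander--Conway polynomial recursively, resolving one crossing of $\widehat{\sigma_1^{2N-1}}$ via the skein relation $\nabla(L_+) = \nabla(L_-) + z\,\nabla(L_0)$, which relates $\nabla$ of $\widehat{\sigma_1^{2N-1}}$ to those of $\widehat{\sigma_1^{2N-2}}$ and $\widehat{\sigma_1^{2N-3}}$ and reproduces the known values $\nabla(z)=z^2+1$ for $N=2$ and $\nabla(z)=z^4+3z^2+1$ for $N=3$, confirming that the knots are genuinely the distinct $(2,2N-1)$-torus knots.
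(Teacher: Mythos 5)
Your proposal is correct and takes essentially the same route as the paper: both arguments observe that maximal indefiniteness forces the fully reversed ordering and that the resulting diagram is, by construction, a two-strand twist region with $2(N-1)$ crossings from winding at each event plus one final closing crossing, i.e.\ the closure of $\sigma_1^{2N-1}$, which is the $(2,2N-1)$-torus knot. The paper states this as a direct winding count, whereas your explicit reduction to order reversal, inductive bookkeeping, and Alexander--Conway cross-check merely formalise the same idea rather than introducing a different one.
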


\begin{proof}
    A $(p,q)$ torus knot is a knot that lies on the surface of a torus and is created by looping a string $q$ times through the hole and $p$ times around the axis of rotational symmetry of the torus. By construction, knots associated to maximally indefinite causal order are just intertwining at every node with $2$ windings around the axis of rotational symmetry, thus resulting to a $(2,q)$-torus knot. For $N$ events, we get $2N-1$ windings, where we have a factor of $2(N-1)$ coming from winding around every event except the last, and $+1$ from the final crossing between the last event of the first sequence and the first event of the second sequence.
\end{proof}

\begin{notation}
    We write $\nabla_{T_{p,q}}(z) \equiv \nabla_{(p,q)}(z)$, where $T_{p,q}$ is the $(p,q)$-torus knot.
\end{notation}

\begin{lemma}
    For $(2,q)$ torus knots where $q \equiv 1 \bmod 2$, \begin{equation}
        \nabla_{(2,q)}(z)_2 = \frac{(q-1)(q+1)}{8}.
    \end{equation} 
\end{lemma}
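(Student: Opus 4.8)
The plan is to derive the Conway polynomials of the $(2,q)$-torus knots recursively from the skein relation and then track only the coefficient of $z^2$. First I would fix the convention that $\nabla_{(2,q)}(z)_2$ denotes the coefficient of $z^2$ in $\nabla_{(2,q)}(z)$, and recall that $T_{2,q}$ is presented as the closure of the $2$-braid $\sigma_1^q$: for odd $q$ this is a knot, for even $q$ a two-component link. Applying the skein relation $\nabla(L_+) = \nabla(L_-) + z\,\nabla(L_0)$ at a single crossing of the twist region, the crossing change reduces the number of twists by two and the oriented smoothing removes one twist, so that
\begin{equation}
    \nabla_{(2,q)} = \nabla_{(2,q-2)} + z\,\nabla_{(2,q-1)},
\end{equation}
with the base data $\nabla_{(2,1)} = 1$ (unknot) and $\nabla_{(2,2)} = z$ (Hopf link) already recorded in the text. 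One checks that this reproduces $\nabla_{(2,3)} = 1 + z^2$ and $\nabla_{(2,5)} = 1 + 3z^2 + z^4$, matching the trefoil and cinquefoil values quoted above.

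Next I would exploit the parity structure. By induction on the recursion, $\nabla_{(2,q)}$ is an even polynomial in $z$ when $q$ is odd and an odd polynomial when $q$ is even. Writing $a_k^{(q)}$ for the coefficient of $z^{2k}$ (odd $q$) and $b_k^{(m)}$ for the coefficient of $z^{2k+1}$ (even $m$), the recursion splits into two coupled relations. Matching coefficients gives, for even $m$,
\begin{equation}
    b_0^{(m)} = b_0^{(m-2)} + a_0^{(m-1)} = b_0^{(m-2)} + 1,
\end{equation}
since the constant term $a_0^{(q)} = 1$ for every odd $q$; with $b_0^{(2)} = 1$ this solves to $b_0^{(m)} = m/2$.

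The key step is then the recursion for the $z^2$-coefficient in the odd case. Since $z\,\nabla_{(2,q-1)}$ is even for odd $q$, the coefficient of $z^2$ in it equals $b_0^{(q-1)}$, and matching the $z^2$ terms yields
\begin{equation}
    a_1^{(q)} = a_1^{(q-2)} + b_0^{(q-1)} = a_1^{(q-2)} + \frac{q-1}{2},
\end{equation}
using $b_0^{(q-1)} = (q-1)/2$ from the previous step. Starting from $a_1^{(1)} = 0$ and writing $q = 2n+1$, this telescopes into the triangular sum $a_1^{(q)} = \sum_{i=1}^{n} i = n(n+1)/2$ with $n = (q-1)/2$, so that
\begin{equation}
    \nabla_{(2,q)}(z)_2 = a_1^{(q)} = \frac{1}{2}\cdot\frac{q-1}{2}\cdot\frac{q+1}{2} = \frac{(q-1)(q+1)}{8},
\end{equation}
as claimed.

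I expect the only genuine subtlety to be the bookkeeping in the first paragraph: correctly identifying which of $L_-$ and $L_0$ yields $T_{2,q-2}$ versus $T_{2,q-1}$, and checking that the crossing change cancels a pair of twists via a Reidemeister~II move rather than producing a different link. Everything after the recursion is elementary, as the parity observation and the two telescoping sums require no more than matching coefficients, so the heart of the argument is the clean reduction to the triangular-number identity.
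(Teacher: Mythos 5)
Your proof is correct, but it takes a genuinely different route from the paper's. The paper's own proof is essentially a citation: it invokes the known identification $\nabla_{(2,q)}(z) = F_q(z)$ with the Fibonacci polynomials \cite{Rowland2008,Agle2012}, together with the closed-form coefficient formula $F(q,k) = \binom{(q+k-1)/2}{k}$ for $q \not\equiv k \pmod 2$, and simply reads off the $k=2$ term. You instead derive everything from the skein relation itself: the crossing-change/smoothing computation on the closure of $\sigma_1^q$ gives the recursion $\nabla_{(2,q)} = \nabla_{(2,q-2)} + z\,\nabla_{(2,q-1)}$ --- which is precisely the Fibonacci recursion that the paper's citation packages --- and you then extract the quadratic coefficient by a parity argument and two telescoping sums. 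Your bookkeeping is the standard and correct one: the crossing change turns $\sigma_1^q$ into $\sigma_1^{q-2}$ after a Reidemeister~II move, the oriented smoothing gives $\sigma_1^{q-1}$, and your base cases $\nabla_{(2,1)}=1$, $\nabla_{(2,2)}=z$ agree with the paper's sign convention for the positively oriented Hopf link; the check against the trefoil ($1+z^2$) and cinquefoil ($1+3z^2+z^4$) values quoted in the text confirms this. What your approach buys is self-containedness: it needs no external input beyond the skein relation, and it only ever computes the three lowest coefficients ($a_0$, $b_0$, $a_1$) rather than the whole polynomial. What the paper's approach buys is brevity and more information --- the full coefficient structure $F(q,k)$ of all $(2,q)$-torus knots at once --- though that is more than the lemma actually needs.
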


\begin{proof}
    For $(2,q)$-torus knots, $\nabla_{(2,q)}(z) = F_q(z)$ \cite{Agle2012} where $F_q(z)$ are Fibonacci polynomials. But \begin{equation}
        F_q(z) = \sum_{k=0}^q F(q,k) z^k
    \end{equation}
    where 
    \begin{equation}
        F(q,k) = \begin{cases}
            \begin{pmatrix}
                \frac{1}{2}(q+k-1) \\ k
            \end{pmatrix} \text{ if } q \neq k \text{ mod } 2\\
            0 \text{ otherwise.}
        \end{cases}
    \end{equation}
    We are interested in the quadratic term, for which $k=2$, $\begin{pmatrix}
        $N$ \\ 2
    \end{pmatrix} = \begin{pmatrix} N \\ 2 \end{pmatrix}$, and $q \equiv 1 \bmod 2$ so that
    \begin{equation}
        \nabla_{(2,q)}(z)_2 = F(q,2) = \frac{(q-1)(q+1)}{8},
    \end{equation}
    which concludes the proof.
\end{proof}

\begin{lemma}
    \label{Causal indefiniteness maximally indefinite knot}
    The causal indefiniteness of a superposition of two spacetimes $(\mathcal{M}_\mathcal{A},g_{\mathcal{A}})$ and $(\mathcal{M}_\mathcal{B},g_{\mathcal{B}})$ with $N$ events in either definite causal order or in maximally indefinite causal order is related to its knot representation as  
    \begin{equation}
        \nabla(z)_2 = \delta(\mathcal{A},\mathcal{B}).
    \end{equation}
\end{lemma}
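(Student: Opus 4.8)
The plan is to treat the two regimes named in the statement separately and to assemble the result directly from the facts already established, since this lemma is essentially a bookkeeping statement that matches a single Conway coefficient against a combinatorial count. In both cases the strategy is the same: read off the knot type from the preceding structural lemmas, extract its quadratic Conway coefficient, and compare with the value of $\delta(\mathcal{A},\mathcal{B})$ supplied by Lemma~\ref{Causal indefiniteness maximally indefinite}.

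First I would dispatch the definite case. By the earlier theorem that definite causal order for $M=2$ yields the unknot, together with the normalisation $\nabla(\mathrm{O}) = 1$ whose quadratic term vanishes, we immediately get $\nabla(z)_2 = 0$. On the other side, Lemma~\ref{Causal indefiniteness maximally indefinite} records $\delta(\mathcal{A},\mathcal{B}) = 0$ for definite order, so both sides agree. Next, for the maximally indefinite case, I would invoke Lemma~\ref{Maximally indefinite knots} to identify the associated knot as the $(2,2N-1)$-torus knot. Since $2N-1$ is odd for every integer $N$, the parity hypothesis $q \equiv 1 \bmod 2$ of the preceding Fibonacci-coefficient lemma is satisfied with $q = 2N-1$, and substituting gives
\begin{equation*}
    \nabla_{(2,2N-1)}(z)_2 = \frac{(2N-2)(2N)}{8} = \frac{N(N-1)}{2}.
\end{equation*}
Lemma~\ref{Causal indefiniteness maximally indefinite} then identifies $\delta(\mathcal{A},\mathcal{B}) = N(N-1)/2$ in the maximally indefinite regime, which coincides with this coefficient and closes the case.

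The main point requiring care is not a genuine obstacle but rather a consistency check: I must confirm that the parity condition of the torus-knot coefficient lemma is automatically met, which it is because $q = 2N-1$ is always odd, so we never fall into the vanishing branch of $F(q,k)$. I would also emphasise that the claim is deliberately confined to the two extremal orderings and asserts nothing about intermediate braided configurations, for which no torus-knot description is available and the clean equality $\nabla(z)_2 = \delta(\mathcal{A},\mathcal{B})$ need not be expected to follow from the same argument.
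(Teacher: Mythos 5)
Your proposal is correct and takes essentially the same approach as the paper's proof: it handles the definite case trivially via the unknot, and for the maximally indefinite case combines Lemma~\ref{Maximally indefinite knots} (the $(2,2N-1)$-torus knot) with the Fibonacci-coefficient lemma and Lemma~\ref{Causal indefiniteness maximally indefinite} to match $\binom{N}{2}$ on both sides. The only difference is that you spell out the parity check $q=2N-1$ odd and the arithmetic $\frac{(2N-2)(2N)}{8}=\frac{N(N-1)}{2}$, which the paper leaves implicit.
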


\begin{proof}
    This is trivially true for definite causal orders. By lemma \ref{Causal indefiniteness maximally indefinite}, $\delta(\mathcal{A},\mathcal{B}) = \begin{pmatrix} N \\ 2 \end{pmatrix}$ for maximally indefinite causal order. But by lemma \ref{Maximally indefinite knots}, the knot associated to this scenario is a $(2,2N-1)$ torus knot, with $\nabla_{(2,2N-1)}(z)_2 = \begin{pmatrix} N \\ 2 \end{pmatrix}$.
\end{proof}

\begin{lemma}
    \label{Monicity Conway polynomial}
    For any prime knot or the unknot, the Alexander-Conway polynomial has $\nabla(z)_0 = 1$ and $\nabla(z)_1 = 0$.
\end{lemma}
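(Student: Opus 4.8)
The plan is to deduce both identities not from the fine structure of prime knots but from two standard structural properties of the Conway polynomial that follow directly from the skein relations listed above; in fact both hold for \emph{every} knot (one-component link), and the hypothesis ``prime knot or unknot'' only singles out the cases relevant here via lemma \ref{Prime knots irreducible subsequences}. Throughout, $\nabla(z)_k$ denotes the coefficient of $z^k$ and $\mu(L)$ the number of components of a link $L$. I will use the elementary topological inputs that a crossing change $L_+\leftrightarrow L_-$ preserves $\mu$ while the smoothing $L_0$ changes it by exactly one, together with the fact that the skein relations recursively determine $\nabla$ on every diagram (any diagram reduces to unlinks by crossing changes and smoothings), with $\nabla=1$ on the unknot and $\nabla=0$ on every unlink with $\mu\geq2$.

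For $\nabla(z)_0=1$, I would evaluate the second skein relation at $z=0$. Since the $z\nabla(L_0)$ term drops out, this gives $\nabla(L_+)\big|_{z=0}=\nabla(L_-)\big|_{z=0}$, so the constant term is invariant under crossing changes. As any knot $K$ has finite unknotting number --- it can be turned into the unknot by finitely many crossing changes, each leaving it a single-component knot --- iterating this invariance yields $\nabla_K(z)_0=\nabla_O(z)_0=1$.

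For $\nabla(z)_1=0$, I would establish the parity property $\nabla_L(-z)=(-1)^{\mu(L)-1}\nabla_L(z)$. The clean route is to set
\begin{equation}
    f_L(z):=\nabla_L(z)-(-1)^{\mu(L)-1}\nabla_L(-z)
\end{equation}
and show that $f$ obeys the \emph{same} skein relation. Substituting the relation for $\nabla_{L_+}(\pm z)$ and using $\mu(L_0)=\mu(L_\pm)\pm1$, so that $(-1)^{\mu(L_0)-1}=-(-1)^{\mu(L_\pm)-1}$, a short computation gives $f_{L_+}(z)=f_{L_-}(z)+z\,f_{L_0}(z)$. Moreover $f$ vanishes on the unknot and on every unlink. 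By the uniqueness of the solution of the skein relations (the same recursion that computes $\nabla$ computes $f$), $f\equiv0$. Specialising to a knot ($\mu=1$) gives $\nabla_K(-z)=\nabla_K(z)$, so $\nabla_K$ is even and all odd coefficients vanish; in particular $\nabla(z)_1=0$.

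The main obstacle is the ``uniqueness'' invoked in the second part: one must know that the skein relations, together with the values on unlinks, determine $\nabla$ on every link, i.e.\ that $\nabla$ is well-defined and recursively computable. This is the genuinely nontrivial content underlying the Conway polynomial, and I would take it as given --- it is exactly the recursive definition already used in the text and in \cite{Kauffman2001}. The remaining ingredients --- invariance of $\mu$ under crossing changes, its change by one under smoothing, finiteness of the unknotting number, and $\nabla=0$ on unlinks with $\mu\geq2$ --- are elementary.
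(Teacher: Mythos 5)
Your proposal is correct, and it is essentially a rigorous rendering of the paper's rather informal sketch, with the second half taking a genuinely different route. For $\nabla(z)_0=1$ your argument (crossing changes preserve the constant term since the $z\nabla(L_0)$ term dies at $z=0$, plus finiteness of the unknotting number) is the precise version of the paper's claim that ``only one skein branch'' -- the one that never picks up a factor of $z$ -- survives and terminates at the unknot. For $\nabla(z)_1=0$ the paper argues directly about the skein tree: starting from a knot, any term carrying a single factor of $z$ would have to come from the constant term of the two-component link $L_0$, which vanishes; this is a component-parity bookkeeping argument left implicit. You instead prove the full parity symmetry $\nabla_L(-z)=(-1)^{\mu(L)-1}\nabla_L(z)$ by checking that $f_L(z)=\nabla_L(z)-(-1)^{\mu(L)-1}\nabla_L(-z)$ obeys the same skein relation (your sign computation using $\mu(L_0)=\mu(L_\pm)\pm1$ is correct) and vanishes on unlinks, then invoking uniqueness of the skein recursion. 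This buys a strictly stronger conclusion -- \emph{all} odd coefficients vanish, for \emph{every} knot -- at the cost of leaning on the well-definedness/uniqueness of $\nabla$ as determined by the skein relations, which you correctly flag as the one nontrivial input; since the paper itself defines $\nabla$ by this recursion, that reliance is at the same level of rigour as the original. Both your argument and the paper's show, incidentally, that the hypothesis ``prime knot or the unknot'' is superfluous: the lemma holds for arbitrary knots, and primality only matters for where the lemma is applied (lemma \ref{Additivity of quadratic term single prime knot sum} and its successors).
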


\begin{proof}
    $\nabla(z)_0 = 1$ comes from the fact that applying the skein relations, we always end up with an unknot, for which $\nabla(\text{O}) = 1$, with only one skein branch leading to this. $\nabla(z)_1 = 0$ comes from the fact that we do not work with links but with knots, so applying skein relations will never yield a term with a single factor of z (as would be the case e.g.~for the Hopf link). 
\end{proof}

\begin{lemma}
    \label{Additivity of quadratic term single prime knot sum}
    The quadratic term of the Alexander-Conway polynomial is additive under $K_1\#K_2$ knot addition where $K_1$ and $K_2$ are prime or the unknot, i.e.
    \begin{equation}
        \nabla_{K_1\#K_2}(z)_2 = \nabla_{K_1}(z)_2 + \nabla_{K_2}(z)_2.
    \end{equation}
\end{lemma}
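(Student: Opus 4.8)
The plan is to reduce the additivity of the quadratic coefficient to the well-known multiplicativity of the Alexander-Conway polynomial under connected sum, namely
\[
\nabla_{K_1 \# K_2}(z) = \nabla_{K_1}(z)\,\nabla_{K_2}(z),
\]
which holds for any pair of knots (see \cite{Kauffman2001}). First I would recall why this multiplicativity holds at the level of the skein relation: when forming the connected sum, every crossing lies entirely within one of the two summands, so the skein resolution of a crossing in $K_1$ leaves $K_2$ untouched, and vice versa. Iterating the recursion on each factor independently thus yields a product, and since the base case is $\nabla(\mathrm{O}) = 1$, the two recursions combine multiplicatively.

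Given multiplicativity, the next step is to expand the product as a power series in $z$ and isolate the coefficient of $z^2$. Writing
\[
\nabla_{K_i}(z) = \nabla_{K_i}(z)_0 + \nabla_{K_i}(z)_1\,z + \nabla_{K_i}(z)_2\,z^2 + \cdots, \quad i = 1,2,
\]
the Cauchy product gives
\[
\nabla_{K_1\#K_2}(z)_2 = \nabla_{K_1}(z)_0\,\nabla_{K_2}(z)_2 + \nabla_{K_1}(z)_1\,\nabla_{K_2}(z)_1 + \nabla_{K_1}(z)_2\,\nabla_{K_2}(z)_0.
\]

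Finally I would invoke lemma \ref{Monicity Conway polynomial}, which applies precisely because $K_1$ and $K_2$ are each prime or the unknot: it gives $\nabla_{K_i}(z)_0 = 1$ and $\nabla_{K_i}(z)_1 = 0$ for $i=1,2$. Substituting, the constant coefficients act as identities on the quadratic terms, while the middle cross-term $\nabla_{K_1}(z)_1\,\nabla_{K_2}(z)_1$ vanishes, leaving exactly
\[
\nabla_{K_1\#K_2}(z)_2 = \nabla_{K_1}(z)_2 + \nabla_{K_2}(z)_2.
\]

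The only real subtlety, and hence the main obstacle, is the vanishing of the linear cross-term. Additivity would fail for general links, where $\nabla(z)_1$ need not vanish (as for the Hopf link, with $\nabla = z$), and the $z^2$ coefficient of a product would then pick up a genuine correction. It is the restriction to knots that are prime or the unknot, encoded in lemma \ref{Monicity Conway polynomial}, that forces $\nabla(z)_1 = 0$ and thereby kills this cross-term, so that the quadratic coefficient behaves additively rather than acquiring a product contribution.
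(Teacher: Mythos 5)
Your proof is correct and takes essentially the same route as the paper's: invoke multiplicativity of the Alexander--Conway polynomial under connected sum, extract the $z^2$ coefficient via the Cauchy product, and use lemma \ref{Monicity Conway polynomial} to set $\nabla_{K_i}(z)_0 = 1$ and kill the cross-term $\nabla_{K_1}(z)_1\,\nabla_{K_2}(z)_1$. Your added sketch of why multiplicativity holds and the remark contrasting knots with links are harmless supplements, not departures from the paper's argument.
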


\begin{proof}
    Since the Alexander-Conway polynomial is multiplicative under knot addition \cite{Freyd1985}, we have
    \begin{align}
        \nabla_{K_1\#K_2}(z)_2 &= \nabla_{K_1}(z)_2 \cdot \nabla_{K_2}(z)_0 \nonumber \\ &+ \nabla_{K_1}(z)_1 \cdot \nabla_{K_2}(z)_1 \nonumber \\ &+ \nabla_{K_1}(z)_0 \cdot \nabla_{K_2}(z)_2 \\
        &= \nabla_{K_1}(z)_2\cdot1 + 0\cdot0 + 1\cdot\nabla_{K_1}(z)_2
    \end{align}
    by lemma \ref{Monicity Conway polynomial}, which concludes the proof.
\end{proof}

\begin{lemma}
    \label{Additivity of quadratic term arbitrary prime knot sum}
    The quadratic term of the Alexander-Conway polynomial of arbitrary knot sums $K_1\#K_2\#...\#K_n$ of prime knots (and unknots) $K_i$ is additive.
\end{lemma}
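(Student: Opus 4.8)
The plan is to proceed by induction on the number $n$ of prime (or unknot) summands, using Lemma \ref{Additivity of quadratic term single prime knot sum} as the engine together with the multiplicativity of the Alexander-Conway polynomial under connected sum \cite{Freyd1985}. The base case $n=2$ is precisely Lemma \ref{Additivity of quadratic term single prime knot sum}.

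The key subtlety I would address first is that one \emph{cannot} naively iterate Lemma \ref{Additivity of quadratic term single prime knot sum}: writing $K_1\#\cdots\#K_n = (K_1\#\cdots\#K_{n-1})\#K_n$, the first factor is a \emph{composite} knot, hence neither prime nor the unknot, so Lemma \ref{Monicity Conway polynomial} does not apply to it directly. The remedy is to strengthen the induction hypothesis so that it carries along, in addition to the additivity of the quadratic term, the two normalisation facts $\nabla(z)_0 = 1$ and $\nabla(z)_1 = 0$ for the partial sum $K_1\#\cdots\#K_{n-1}$.

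For the inductive step I would first check that these normalisation facts are stable under connected sum. By multiplicativity, $\nabla_{(K_1\#\cdots\#K_{n-1})\#K_n}(z) = \nabla_{K_1\#\cdots\#K_{n-1}}(z)\cdot\nabla_{K_n}(z)$; since both factors have constant term $1$ and vanishing linear term (the first by the strengthened hypothesis, the second by Lemma \ref{Monicity Conway polynomial}), the product again has $\nabla(z)_0=1$ and $\nabla(z)_1=0$. Extracting the quadratic coefficient of this product exactly as in the proof of Lemma \ref{Additivity of quadratic term single prime knot sum} then gives
\begin{equation}
    \nabla_{K_1\#\cdots\#K_n}(z)_2 = \nabla_{K_1\#\cdots\#K_{n-1}}(z)_2 + \nabla_{K_n}(z)_2 = \sum_{i=1}^{n}\nabla_{K_i}(z)_2,
\end{equation}
where the last equality uses the induction hypothesis.

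An equivalent and more transparent route, which I would mention as an alternative, bypasses induction entirely: by multiplicativity and Lemma \ref{Monicity Conway polynomial} each factor expands as $\nabla_{K_i}(z) = 1 + \nabla_{K_i}(z)_2\, z^2 + O(z^3)$, and in the full product $\prod_{i=1}^n \nabla_{K_i}(z)$ the only contributions to the $z^2$ coefficient come from choosing the $z^2$ term of a single factor and the constant term $1$ from all the others; the cross terms built from two linear factors vanish because every linear coefficient is zero. This at once yields $\sum_{i=1}^n \nabla_{K_i}(z)_2$. I expect the only genuine obstacle to be the bookkeeping of which coefficients survive connected sum — once it is seen that the linear term stays identically zero, additivity of the quadratic term is forced.
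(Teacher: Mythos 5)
Your proposal is correct and takes essentially the same approach as the paper: induction on the number of summands, using multiplicativity of the Alexander-Conway polynomial under connected sum \cite{Freyd1985} together with Lemma \ref{Monicity Conway polynomial}, and then extracting the quadratic coefficient exactly as in Lemma \ref{Additivity of quadratic term single prime knot sum}. The only minor difference is bookkeeping: you strengthen the induction hypothesis to carry $\nabla(z)_0=1$ and $\nabla(z)_1=0$ for the composite partial sum, whereas the paper only establishes $\nabla_{K_{\#k}}(z)_0 = \prod_i \nabla_{K_i}(z)_0 = 1$ and kills the cross term by using that the \emph{prime} factor $K_{k+1}$ has vanishing linear coefficient, so the composite's linear term never needs to be controlled.
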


\begin{proof}
    We prove this by induction. The base case is proven in lemma \ref{Additivity of quadratic term single prime knot sum}. Suppose that this holds for the sum of $k$ prime knots of the form above; we write this knot $K_{\#k}$ -- this is a composite knot. We want to show that this relation holds for the composite knot $K_{\#k}\#K_{k+1}$.

    Since the Alexander-Conway polynomial is multiplicative under knot addition \cite{Freyd1985}, and $\nabla_{K_i}(z)_0 = 1$ for any prime knot $K_i$ (by lemma \ref{Monicity Conway polynomial}), then $\nabla_{K_{\#k}}(z)_0 = \prod_i^k \nabla_{K_i}(z)_0 = 1$. Then
    \begin{align}
        \nabla_{K_{\#k}\#K_{k+1}}(z)_2 &= \nabla_{K_{\#k}}(z)_2 \cdot \nabla_{K_{k+1}}(z)_0 \nonumber \\ &+ \nabla_{K_{\#k}}(z)_1 \cdot \nabla_{K_{k+1}}(z)_1 \nonumber \\ &+ \nabla_{K_{\#k}}(z)_0 \cdot \nabla_{K_{k+1}}(z)_2 \\
        &= \nabla_{K_{\#k}}(z)_2\cdot1 \nonumber \\ &+ \nabla_{K_{\#k}}(z)_1\cdot0 \nonumber \\ &+ 1\cdot\nabla_{K_{k+1}}(z)_2
    \end{align}
    by lemma \ref{Monicity Conway polynomial}, so $\nabla_{T_{\#k}\#K_{k+1}} = \nabla_{T_{\#k}}(z)_2 + \nabla_{K_{k+1}}(z)_2$, which proves the inductive step. Thus, for arbitrary sums of such prime knots, the second term of the Alexander-Conway polynomial is additive. 
\end{proof}

\begin{theorem}
    \label{Causal indefiniteness maximally indefinite subsequences}
    The causal indefiniteness of a superposition of two spacetimes $(\mathcal{M}_\mathcal{A},g_{\mathcal{A}})$ and $(\mathcal{M}_\mathcal{B},g_{\mathcal{B}})$ with $N$ events whose subsequences are either in definite causal order or in maximally indefinite causal order is related to its knot representation as  
    \begin{equation}
        \nabla(z)_2 = \delta(\mathcal{A},\mathcal{B}).
    \end{equation}
\end{theorem}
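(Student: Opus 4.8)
The plan is to reduce the claim to the single-subsequence case already handled by Lemma~\ref{Causal indefiniteness maximally indefinite knot}, and then glue the pieces together using the two additivity results proved above: additivity of the causal indefiniteness under concatenation (Lemma~\ref{Additivity of indefiniteness}) and additivity of the quadratic Conway coefficient under knot sums (Lemma~\ref{Additivity of quadratic term arbitrary prime knot sum}).

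First I would decompose the pair of causal sequences into its irreducible causal subsequences $\sigma_1,\dots,\sigma_n$, so that the full ordering is their concatenation. By hypothesis each $\sigma_i$ is either in definite or in maximally indefinite causal order. Writing $K_i$ for the knot associated with $\sigma_i$ and $\delta_i$ for the causal indefiniteness of the corresponding subsequence, Lemma~\ref{Causal indefiniteness maximally indefinite knot} gives $\nabla_{K_i}(z)_2 = \delta_i$ for each $i$ directly, since each $\sigma_i$ is by assumption of one of the two admissible types.

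Next I would argue that the knot $K$ of the whole concatenation is the knot sum $K_1\#\cdots\#K_n$. This is precisely the content of the construction underlying Lemma~\ref{Prime knots irreducible subsequences}: a reducible ordering produces a diagram in which the sub-diagrams of distinct causal subsequences share no crossing, so that the total diagram factorises as a connected sum of the sub-diagrams (cf.~Fig.~\ref{fig:reducible sequences}). Moreover, a definite subsequence yields the unknot while a maximally indefinite subsequence of length $n_i$ yields the $(2,2n_i-1)$-torus knot by Lemma~\ref{Maximally indefinite knots}; in either case each $K_i$ is prime or the unknot, so the hypotheses of Lemma~\ref{Additivity of quadratic term arbitrary prime knot sum} are satisfied.

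Finally I would combine the ingredients:
\begin{equation}
    \nabla(z)_2 = \nabla_{K_1\#\cdots\#K_n}(z)_2 = \sum_{i=1}^n \nabla_{K_i}(z)_2 = \sum_{i=1}^n \delta_i = \delta(\mathcal{A},\mathcal{B}),
\end{equation}
where the second equality is Lemma~\ref{Additivity of quadratic term arbitrary prime knot sum}, the third is Lemma~\ref{Causal indefiniteness maximally indefinite knot} applied blockwise, and the last is Lemma~\ref{Additivity of indefiniteness}. The one step requiring genuine care, rather than mere bookkeeping, is the identification of the concatenated knot with the connected sum $K_1\#\cdots\#K_n$: one must confirm that the drawing prescription introduces no crossing linking distinct irreducible blocks, so that the diagram really does split as a knot sum and the multiplicativity of the Conway polynomial can legitimately be invoked. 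Everything else is an assembly of the additivity lemmas already established.
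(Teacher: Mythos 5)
Your proposal is correct and takes essentially the same route as the paper's proof: both reduce to Lemma~\ref{Causal indefiniteness maximally indefinite knot} applied blockwise and then glue the blocks using additivity of causal indefiniteness under concatenation (Lemma~\ref{Additivity of indefiniteness}) and additivity of the quadratic Conway coefficient under connected sums of prime knots and unknots (Lemma~\ref{Additivity of quadratic term arbitrary prime knot sum}), with primality of the nontrivial torus knots justifying the latter's hypotheses. Your write-up merely makes explicit the knot-sum factorisation of the concatenated diagram, which the paper leaves implicit in the construction behind Lemma~\ref{Prime knots irreducible subsequences} and Fig.~\ref{fig:reducible sequences}.
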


\begin{proof}
    This follows from lemmas \ref{Causal indefiniteness maximally indefinite knot}, \ref{Additivity of quadratic term arbitrary prime knot sum}, and \ref{Additivity of indefiniteness} where we have that $\delta(K_1\#K_2\#...\#K_n) = \delta(K_1) + \delta(K_2) + ... + \delta(K_n)$ and since non-trivial torus knots (not $(2,1)$ torus knots, i.e.~the unknot) are prime \cite{Norwood1982} -- however, the arguments above also apply for unknots.
\end{proof}

Note that theorem \ref{Causal indefiniteness maximally indefinite subsequences} is a generalisation of lemma \ref{Causal indefiniteness maximally indefinite knot}, where we allow for concatenations of maximally indefinite causal orders.

\begin{conjecture}
    \label{Causal indefiniteness prime knot}
    The causal indefiniteness of a superposition of two spacetimes $(\mathcal{M}_\mathcal{A},g_{\mathcal{A}})$ and $(\mathcal{M}_\mathcal{B},g_{\mathcal{B}})$ with $N$ events for a pair of irreducible causal sequences is related to its prime (by lemma \ref{Prime knots irreducible subsequences}) knot representation as  
    \begin{equation}
        \nabla(z)_2 = \delta(\mathcal{A},\mathcal{B}).
    \end{equation}
\end{conjecture}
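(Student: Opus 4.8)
The plan is to recognise that $\delta(\mathcal{A},\mathcal{B})$ is exactly the number of inversions of the permutation relating the two orderings. Fixing $\mathcal{S}_\mathcal{A}$ as reference, write $\mathcal{S}_\mathcal{B}=\pi(\mathcal{S}_\mathcal{A})$ for some $\pi\in S_N$; then by Proposition~\ref{Causal indefiniteness and pairwise causal order} one has $\delta(\mathcal{A},\mathcal{B})=\mathrm{inv}(\pi):=|\{(i,j):i<j,\ \pi(i)>\pi(j)\}|$. On the knot side I would use that $\nabla(z)_2=:a_2$ is the degree-two Vassiliev invariant (the Casson knot invariant), which vanishes on the unknot and obeys the crossing-change identity $a_2(K_+)-a_2(K_-)=\mathrm{lk}(L_0)$, where $L_0$ is the oriented smoothing of the switched crossing and $\mathrm{lk}$ its linking number. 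This follows at once from the skein relation $\nabla(L_+)=\nabla(L_-)+z\nabla(L_0)$ together with the fact that, for a two-component link, the $z^1$-coefficient of $\nabla$ equals the linking number: taking $[z^2]$ of the skein relation gives $a_2(K_+)-a_2(K_-)=[z^1]\nabla(L_0)=\mathrm{lk}(L_0)$, and smoothing a self-crossing of a knot indeed produces a two-component link.

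The core argument would be an induction on $\delta=\mathrm{inv}(\pi)$, reducing $\pi$ to the identity through adjacent transpositions in the weak Bruhat order: any $\pi$ is reached from the identity by exactly $\mathrm{inv}(\pi)$ swaps of position-adjacent entries of $\mathcal{S}_\mathcal{B}$, each removing one inversion. The base case $\pi=\mathrm{id}$ is definite causal order, whose knot is the unknot with $a_2=0=\delta$. For the inductive step I would verify that swapping two entries of $\mathcal{S}_\mathcal{B}$ that are adjacent in position and inverted relative to $\mathcal{S}_\mathcal{A}$ modifies the drawing map $\gamma$ only within a bounded neighbourhood of the two corresponding strands, and that in suitably chosen diagrams of $K_\pi$ and $K_{\pi'}$ this modification is exactly one crossing change $K_+\to K_-$. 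Granting this, $a_2(K_\pi)-a_2(K_{\pi'})=\mathrm{lk}(L_0)$ with $\mathrm{inv}(\pi')=\mathrm{inv}(\pi)-1$, so the whole statement collapses to the single clean identity $\mathrm{lk}(L_0)=+1$ for every such elementary move, the sign being fixed once and for all by the orientation convention of the construction (all crossings carry the same sign).

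The hard part — and, I expect, the reason this is still a conjecture — is to establish $\mathrm{lk}(L_0)=1$ \emph{intrinsically}, i.e.\ without assuming the very identity one is proving. Since $\mathrm{lk}(L_0)$ is a global quantity, the difficulty is to rule out that the two components of the smoothing link each other elsewhere in the diagram: the local clasp contributes $+1$, but a priori further contributions from the remainder of the construction could spoil this. This is exactly where I expect irreducibility to carry its weight. Lemma~\ref{Prime knots irreducible subsequences} already tells us the knot is prime (or the unknot), and I would try to leverage either primeness or, more directly, the explicit structure of $\gamma$ on an irreducible ordering to show that the two strands forming a given inversion are unlinked away from it, forcing $\mathrm{lk}(L_0)=1$. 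As a complementary check I would write a Gauss-diagram (Polyak--Viro) formula for $a_2$ as a signed count of interleaved chord pairs; because every crossing of the construction has the same sign, $a_2$ becomes an \emph{unsigned} count of linked crossing pairs, which one would then match combinatorially to $\mathrm{inv}(\pi)$.

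Finally, I would observe that the irreducibility hypothesis is not a genuine restriction on the identity itself: a reducible ordering is a concatenation of irreducible blocks, its knot is the corresponding connected sum of prime knots and unknots, and both sides are additive under it — $\delta$ by Lemma~\ref{Additivity of indefiniteness} and $a_2$ by Lemma~\ref{Additivity of quadratic term arbitrary prime knot sum}. Hence a proof of $\mathrm{lk}(L_0)=1$ for irreducible orderings would promote the conjecture to a theorem that subsumes Theorem~\ref{Causal indefiniteness maximally indefinite subsequences} and Lemma~\ref{Causal indefiniteness maximally indefinite knot} as special cases.
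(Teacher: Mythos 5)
You should first be aware that the paper does \emph{not} prove this statement: it is stated as an open conjecture, and the Discussion section only gestures at a possible route via the Seifert matrix and its relation to the quadratic coefficient of the Alexander--Conway polynomial \cite{Tsau2016}. So there is no paper proof to compare against; the only question is whether your proposal would close the gap, and as written it does not. Several of your ingredients are sound and consistent with the paper's own machinery: the identification $\delta(\mathcal{A},\mathcal{B})=\mathrm{inv}(\pi)$ follows immediately from the definition of $\delta$ and Proposition~\ref{Causal indefiniteness and pairwise causal order}; the crossing-change identity $a_2(K_+)-a_2(K_-)=\mathrm{lk}(L_0)$ for the Casson invariant is standard; and your closing observation that reducible orderings are handled by additivity (Lemmas~\ref{Additivity of indefiniteness} and~\ref{Additivity of quadratic term arbitrary prime knot sum}) mirrors exactly how the paper derives Theorem~\ref{Causal indefiniteness and knot theory} from the conjecture.

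The genuine gap is the inductive step, and you have named it yourself: the claim that swapping two position-adjacent, inverted entries of $\mathcal{S}_\mathcal{B}$ modifies the constructed diagram by exactly one crossing change, with the smoothed two-component link satisfying $\mathrm{lk}(L_0)=+1$. This is asserted, not established, and it is precisely where the difficulty of the conjecture lives, so the proposal amounts to a reformulation of the conjecture rather than a proof. Two concrete obstacles: first, the paper's drawing map $\gamma$ together with the fine-grained winding rules is \emph{global} -- the dashed return path winds around every solid vertex it meets -- so a local transposition in $\mathcal{S}_\mathcal{B}$ can alter several crossings of the diagram at once, and one must also check that the modified diagram is isotopic to the diagram $\gamma$ would assign to the new permutation $\pi'$ (the intermediate $\pi'$ may moreover be reducible or coarse-grainable even when $\pi$ is irreducible, so the induction in fact runs through the general statement, not just the irreducible one). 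Second, $\mathrm{lk}(L_0)$ is a global quantity: the local clasp contributes $+1$, but excluding further linking of the two smoothed components elsewhere in the diagram is not obviously supplied by primeness (Lemma~\ref{Prime knots irreducible subsequences}), and your auxiliary claim that all crossings produced by the construction carry the same sign is likewise unverified beyond the torus-knot cases already covered by Lemma~\ref{Causal indefiniteness maximally indefinite knot} and Theorem~\ref{Causal indefiniteness maximally indefinite subsequences}. Until the single-crossing-change and $\mathrm{lk}(L_0)=1$ claims are proven intrinsically for the paper's construction, the statement remains open -- though your strategy, or a Gauss-diagram count of interleaved crossing pairs, is a reasonable programme for attacking it, and is genuinely different from the Seifert-matrix route the authors suggest.
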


Note that we have shown that this is true for knots whose subsequences are either in definite causal orders or in maximally indefinite causal orders from theorem \ref{Causal indefiniteness maximally indefinite subsequences}. The question is then whether this holds for prime knots representing irreducible braided causal orders. \sam{To support this conjecture, we provide in Fig.~\ref{fig:conjexamples} three non-trivial examples for which it holds: ABCD-BDAC (cinquefoil knot $5_1$ in Alexander-Briggs notation with $\nabla(z) = 1 + 3z^2 + z^4$ and $\delta = 3$), ABCD-DACB ($7_3$ prime knot in Alexander-Briggs notation with $\nabla(z) = 1+4z^2+2z^4$ and $\delta = 4$), and ABCD-CBDA (also a $7_3$ prime knot in Alexander-Briggs notation with $\nabla(z) = 1+4z^2+2z^4$ and $\delta = 4$). Other examples are provided in Appendix \ref{sec:App conj}.

\begin{figure}[t!]
    \centering
    \subfloat[\centering ABCD-BDAC, $\nabla(z)_2 = \delta = 3$ \label{fig:ABCD-BDAC}]{\includegraphics[scale=0.12]{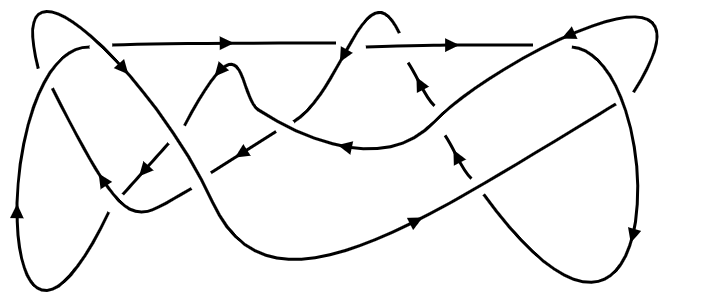}
}
    \subfloat[\centering ABCD-DACB, $\nabla(z)_2 = \delta = 4$ \label{fig:ABCD-DACB}]{\includegraphics[scale=0.12]{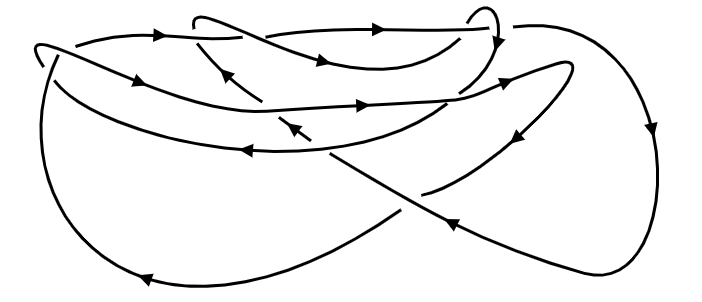}}
    \subfloat[\centering ABCD-CBDA, $\nabla(z)_2 = \delta = 4$ \label{fig:ABCD-CBDA}]{\includegraphics[scale=0.12]{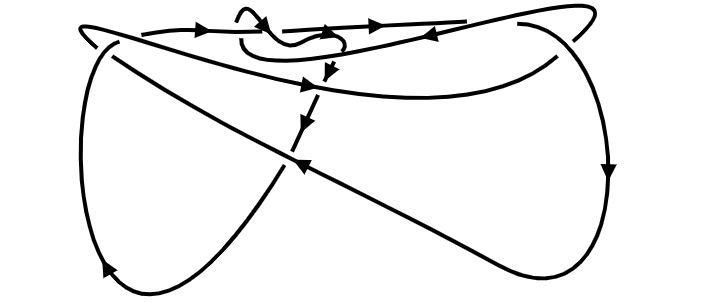}}
    \caption{\justifying \sam{Examples of irreducible causal sequences \acd{that} satisfy Conjecture \ref{Causal indefiniteness prime knot}.}}
    \label{fig:conjexamples}
\end{figure}
}

\begin{theorem}
    \label{Causal indefiniteness and knot theory}
    Given Conjecture \ref{Causal indefiniteness prime knot}, the causal indefiniteness of a superposition of two spacetimes $(\mathcal{M}_\mathcal{A},g_{\mathcal{A}})$ and $(\mathcal{M}_\mathcal{B},g_{\mathcal{B}})$ with $N$ events is given by \begin{equation}
        \nabla(z)_2 = \delta(\mathcal{A},\mathcal{B}),
    \end{equation} where $\nabla(z)_2$ is the quadratic term of the Alexander-Conway polynomial of the knot representation of the causal sequence.
\end{theorem}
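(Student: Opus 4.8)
The plan is to bootstrap from conjecture~\ref{Causal indefiniteness prime knot}, which covers irreducible causal sequences (equivalently, by lemma~\ref{Prime knots irreducible subsequences}, prime knots and the unknot), to arbitrary causal sequences by exploiting the concatenation structure. The central observation is that every pair of causal sequences of $N$ events decomposes as a concatenation of \emph{irreducible} blocks: one repeatedly splits off non-trivial causal subsequences until no further splitting is possible, so that
\begin{equation}
    (\mathcal{S}_\mathcal{A},\mathcal{S}_\mathcal{B}) = B_1 \,\|\, B_2 \,\|\, \dots \,\|\, B_n,
\end{equation}
where each $B_i$ is an irreducible pair of causal subsequences and $\|$ denotes concatenation.

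First I would establish the diagrammatic dictionary: concatenation of causal sequences corresponds to the knot sum (connected sum) of the associated knots, $K = K_1 \# K_2 \# \dots \# K_n$, where $K_i$ is the knot of $B_i$. This is precisely the geometric content illustrated in Fig.~\ref{fig:reducible sequences} and already used in the proof of lemma~\ref{Prime knots irreducible subsequences}: at each boundary between two consecutive irreducible blocks the drawing map $\gamma$ produces a solid--solid junction with no crossing linking the two blocks, so a separating two-sphere realises the whole knot as the connected sum of the blocks' knots. By lemma~\ref{Prime knots irreducible subsequences} each factor $K_i$ is prime or the unknot, placing us exactly in the setting of lemma~\ref{Additivity of quadratic term arbitrary prime knot sum}.

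Next I would invoke additivity on both sides. On the topological side, lemma~\ref{Additivity of quadratic term arbitrary prime knot sum} gives $\nabla(z)_2 = \sum_{i=1}^n \nabla_{K_i}(z)_2$, while on the causal side lemma~\ref{Additivity of indefiniteness} gives $\delta(\mathcal{A},\mathcal{B}) = \sum_{i=1}^n \delta(B_i)$, writing $\delta(B_i)$ for the causal indefiniteness of the $i$-th block. Because each block is irreducible, conjecture~\ref{Causal indefiniteness prime knot} applies term by term, $\nabla_{K_i}(z)_2 = \delta(B_i)$. Summing over $i$ and matching the two identities then yields
\begin{equation}
    \nabla(z)_2 = \sum_{i=1}^n \nabla_{K_i}(z)_2 = \sum_{i=1}^n \delta(B_i) = \delta(\mathcal{A},\mathcal{B}),
\end{equation}
as claimed.

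The main obstacle is the diagrammatic dictionary of the first step---verifying that concatenation really produces a connected sum rather than a more entangled composition. One must check that the prescription $\gamma$, together with the winding rules of the fine-grained construction, never introduces a crossing between events belonging to distinct irreducible blocks, so that the knot genuinely factors through a separating sphere at each block boundary. A secondary point, which should follow from the uniqueness of the knot associated to each $\mathcal{S}_{\mathcal{A}\mathcal{B}}$, is that the irreducible decomposition is well-defined independently of the order in which blocks are split off; this too deserves an explicit argument to make the additivity of $\delta$ and of $\nabla(z)_2$ line up block-by-block.
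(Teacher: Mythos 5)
Your proposal is correct and follows essentially the same route as the paper: decompose the pair of causal sequences into irreducible blocks, apply conjecture \ref{Causal indefiniteness prime knot} to each block, and match the additivity of the quadratic Alexander-Conway coefficient under connected sums (lemma \ref{Additivity of quadratic term arbitrary prime knot sum}) against the additivity of causal indefiniteness under concatenation (lemma \ref{Additivity of indefiniteness}). The concatenation-equals-knot-sum dictionary that you flag as the main obstacle is precisely the content the paper already established in the proof of lemma \ref{Prime knots irreducible subsequences} and Fig.~\ref{fig:reducible sequences}, so your more detailed write-up simply makes explicit what the paper's terse proof leaves implicit.
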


\begin{proof}
    Assuming Conjecture \ref{Causal indefiniteness prime knot}, this follows from lemmas \ref{Additivity of quadratic term arbitrary prime knot sum} and \ref{Additivity of indefiniteness} where we have that $\delta(K_1\#K_2\#...\#K_n) = \delta(K_1) + \delta(K_2) + ... + \delta(K_n)$ for knot representations $K_i$ of arbitrary irreducible subsequences of causal orders.
\end{proof}

It then follows that we can read off the indefiniteness of the causal order from knot diagrams. This theorem implies a topological protection of the notion of indefinite causal order: changing quantifiers of the causal order requires topological (discontinuous) changes. In \cite{Kauffman1987,Tsau2016} it was argued that the quadratic term of the Alexander-Conway polynomial of a knot has a topological meaning: it is a measure of the ``self-linking" of the knot.

\sam{Here, there is thus a notion of \enquote{topological protection} which, though not referring to the topology of the spacetime, is related to it. The \enquote{topological protection} here is at the level of the causal sequences themselves and at the level of the associated knots. Quantum diffeomorphisms cannot change the causal ordering in spacetime \cite{Delahamette2022}, and this is trivially highlighted by the fact that two sequences of events with the same ordering have the same knots and consequently the same quantifiers of indefiniteness. At the level of knots, quantum diffeomorphisms can be seen as smooth deformations of the knot, which trivially cannot change the causal ordering of the associated causal sequences. Our result here is more general: certain (though not all) permutations of events, at the level of the causal sequences (i.e.~at the level of sets) preserve the causal ordering of ensembles of events, and such permutations can be implemented as topology-preserving maps. Roughly speaking, the topological properties inherent to the notion of indefinite causal order (including spacetime considerations though not restricted to such) are encoded in the knots. This in turn explains Proposition \ref{Diffeomorphism invariance}, and in this sense we may formulate the following theorem.}

\begin{theorem} \label{thm: topology}
    Given Conjecture \ref{Causal indefiniteness prime knot}, any topology-preserving map cannot change the causal indefiniteness of sequences of events.
\end{theorem}

This is similar to the theorem proved in \cite{CastroRuiz2018} in the context of the process matrix formalism, stating that causal order is invariant under any continuous and reversible transformation. Note that the topological protection of (in-)definite causal order for $M=2$ spacetimes implies \sam{the topological protection} for more generic superpositions of spacetimes, since quantifiers of indefiniteness for such general situations depend on those for $M=2$ spacetimes. Further, note that Theorem \ref{thm: topology} does not strictly imply that the sequences of events \sam{in each branch} cannot change under topology-preserving maps: two different \sam{pairs of} sequences of events with the same causal indefiniteness can be represented by the same knot \sam{(and thus have the same Alexander-Conway polynomials)}. \sam{Thus, topology-preserving maps can in principle change the sequences of events while preserving the causal indefiniteness.}

\section{Operational encoding of the causal order}
\label{sec:operational}

Let us now turn to the operational encoding of the causal order provided in \cite{Delahamette2022} for two events in two spacetimes in superposition. The protocol given therein allows to encode the causal order between two events, $s=\pm1$, in orthogonal qubit states by sending a target system with an internal spin-$1/2$ degree of freedom through a setup, passing in superposition of opposite orders between two agents Alice and Bob, each equipped with a memory register. The set-up is tuned such that the internal degree of freedom starts in an initial factorised state, passing through the first laboratory at proper time $\tau_1^\ast$ and the second laboratory at $\tau^\ast_2$. Upon the target system's passing, each agent measures its state in the orthonormal basis $\{\ket{b_1}_T, \ket{b_2}_T\}$ and encodes $\ket{\tau_i^\ast}$ in their respective memory register. By appropriate post-processing, the branch-wise causal order is thus encoded in the memory register.

This encoding can be generalised to the case of $N$ events -- and thus $N$ agents -- and $M$ spacetimes. The target system is now taken to have an $N$-dimensional internal degree of freedom, such that there exists an orthonormal basis $\{\ket{b_i}_T\}_{i=1,..,N}$ in which each agent measures the test particle upon its passage in their lab. The crossings of the target system's worldline with that of each of the agents' laboratories is tuned in a way that in each of the $M$ branches, the target crosses the $j$-th laboratory at proper time $\tau^\ast_j$. Thus, upon measurement, agent $i$ receives the outcome corresponding to state $\ket{b_{\pi(i)}}_T$, depending on the permutation $\pi$ of the order of events in the respective branch, and encodes $\ket{\tau^\ast_{\pi(i)}}_{R_i}$ in their memory register $R_i$. Thus, given the initial joint state
\begin{equation}
    \ket{\Psi_0}=\bigg(\sum_{\mathcal{X}=1}^M c_\mathcal{X} \ket{\tilde{g}_\mathcal{X}}\bigg) \bigotimes_{i=1}^N \ket{0}_{R_i}\otimes\ket{b_0}_T,
\end{equation}
the state evolves to
\begin{equation}
    \ket{\Psi_{N}}=\bigg(\sum_{\mathcal{X}=1}^M c_\mathcal{X} \ket{\tilde{g}_\mathcal{X}} \bigotimes_{i=1}^N \ket{\tau^\ast_{\pi_\mathcal{X}(i)}}_{R_i}\bigg) \ket{b_{N}}_T,
\end{equation}
where $\pi_\mathcal{X}$ is the permutation corresponding to the order of the $N$ events in branch $\mathcal{X}$ of the superposition. During post-processing, once the target system has evolved to its final state $\ket{b_f}_T$, a referee transforms the state of all memory registers unitarily to 
\begin{equation}
    \begin{aligned}
    \small \ket{\Psi_{f}}=\bigg(\sum_{\mathcal{X}=1}^M c_\mathcal{X} \ket{\tilde{g}_\mathcal{X}} {\sqrt{N}}&\bigotimes_{\substack{i=1\\j=i+1}}^{N-1}  \ket{s^\mathcal{X}_{ij}|\tau^\ast_{{\pi_\chi(j)}}-\tau^\ast_{{\pi_\chi(i)}}|}_{R_i} \bigg) \\
    &\otimes \ket{\textstyle \sum_{k=1}^N \tau^\ast_k}_{R_N}\ket{b_{f}}_T.
\end{aligned}
\end{equation}
{The concrete transformation implementing this change is given by $\hat{U}_{rel} = \mathbb{1}_G\otimes \hat{U}_R \otimes \mathbb{1}_T$ with\footnote{Note that $\hat{U}_R$ is similar to the transformation to relative coordinates $\hat{T}_{rel}^{(M)}$ given in \cite{delaHamette2021falling}.}
\begin{align}
    \hat{U}_R = \int &\left(\prod_{n=1}^Nd\tau_n\right)\sqrt{|\det J|}\left(\bigotimes_{i=1}^{N-1}|\tau_{i+1}-\tau_i\rangle\langle\tau_i|_{R_i} \right)\nonumber\\
    &\otimes |\sum_{k=1}^N\tau_k\rangle\langle \tau_N|_{R_N}\label{eq:UR}
\end{align}
Here, $J$ is the Jacobian associated to the coordinate transformation $\tau_i\to\tau_i' =\tau_{i+1}-\tau_i$ for $i=1,\dots,N-1$ and $\tau_N\to\tau_N' = \sum_{k=1}^N \tau_k$, which simply yields $|\det J| = N$, as shown in Appendix \ref{app:unitarity}. There, we also show that \acd{this} transformation is indeed unitary.}
By tracing out the state of the $N$-th register and the target system and relabelling the state of the $i$-th register to just $\ket{s_{ij}^\mathcal{X}}$, we have
\begin{equation}
    \ket{\Psi} = \sum_{\mathcal{X}=1}^M c_\mathcal{X} \ket{\tilde{g}_\mathcal{X}} \bigotimes_{\substack{i = 1\\j=i+1}}^{N-1} \ket{s_{ij}^\mathcal{X}}.
\end{equation}
Post-selecting the control with the gravitational field on the outcome $\sum_{\mathcal{X}=1}^M \ket{\tilde{g}_\mathcal{X}}$, we get that the test particle is in the state
\begin{equation}
    \ket{\psi} = \sum_{\mathcal{X}=1}^M c_\mathcal{X} \bigotimes_{\substack{i=1\\j=i+1}}^{N-1} \ket{s_{ij}^\mathcal{X}}.\label{eq: operationalEncoding}
\end{equation}
Note that all information about the total causal order of the collection $\mathcal{S}_\mathcal{X}$ of events is contained in the causal order $s_{i(i+1)}$ between neighbouring pairs of events, which is is explicitly encoded in the state \eqref{eq: operationalEncoding}. This concludes the generalisation of the operational encoding of causal order for an arbitrary finite number of events and spacetimes in superposition. 

\section{Measures of quantum coherence in a superposition of causal orders}
\label{sec:quantumcoherence}

In this section, we want to explore quantifiers for indefinite causal order that take into account the \emph{coherence} of a quantum state. Given a pure quantum state $\ket{\psi}$ of the form \eqref{eq: operationalEncoding}, we can associate to it the density operator
\begin{equation}
    \tilde{\rho} = \ket{\psi}\bra{\psi} = \sum_{\mathcal{X,Y}=1}^M c_\mathcal{X} c_\mathcal{Y}^\ast \bigotimes_{\substack{i=1\\j=i+1}}^{N-1} \ket{s_{ij}^\mathcal{X}}\bra{s_{ij}^\mathcal{Y}}.
\end{equation}
From this, we can define the density operator $\rho$ with entries $[\rho]_{\mathcal{X}\mathcal{Y}} = c_\mathcal{X} c_\mathcal{Y}^\ast$.

\begin{definition}[$l_1$ coherence]
    The $l_1$ \textit{coherence} of a quantum state $\rho \in \mathcal{B}(\mathcal{H})$ is defined as
    \begin{equation}
        C_{l_1} := \sum_{m\neq n} \abs{\rho_{mn}}
    \end{equation}
    i.e.~as the sum of off-diagonal components of the density matrix $\rho$.
\end{definition}

We can use the coherence to further distinguish between different \enquote{strengths} of indefinite causal order. Let us start by considering the pairwise causal order between two events in a superposition of two spacetimes. In this case, $C_{l_1} = 2 |\rho_{\mathcal{A}\mathcal{B}}|$, and we can define a weighted version of the pairwise causal order.
\begin{definition}[Pairwise \emph{quantum} causal order]
   For two spacetimes $(\mathcal{M}_\mathcal{A},g_{\mathcal{A}})$ and $(\mathcal{M}_\mathcal{B},g_{\mathcal{B}})$, we define the \textit{pairwise quantum causal order} between two events $\mathcal{E}_a$ and $\mathcal{E}_b$ to be
\begin{equation}
\check{\tsadi}^{\mathcal{A}\mathcal{B}}_{a b} := 2 |\rho_{\mathcal{A}\mathcal{B}}|s^{\mathcal{A}}_{a  b} s^{\mathcal{B}}_{a b}.\label{eq: QPairwiseCO}
\end{equation}
\end{definition}
\noindent If the pairwise causal order is definite, Eq.~\eqref{eq: QPairwiseCO} reduces to the $l_1$ coherence measure. This is not relevant for our purposes. If the pairwise causal order is indefinite, however, it can be interpreted as a measure of the \enquote{strength} of the indefiniteness of the causal order -- the more coherent the superposition, the stronger the indefiniteness. The same reasoning can be applied to the longitudinal causal order.

\begin{definition}[Longitudinal \emph{quantum} causal order]
    The \textit{longitudinal quantum causal order} between two spacetimes $(\mathcal{M}_\mathcal{A},g_{\mathcal{A}})$ and $(\mathcal{M}_\mathcal{B},g_{\mathcal{B}})$ is 
\begin{equation}
    \check{\ayin}^{\mathcal{A}\mathcal{B}} := \sum_{1 \leq i < j}^N 2|\rho_{\mathcal{A}\mathcal{B}}|\tsadi^{\mathcal{A}\mathcal{B}}_{ij}.
\end{equation}    
\end{definition}
When going beyond a superposition of two spacetimes, we also want to be able to take into account differing levels of coherence between different pairs of branches. In this case, it is no longer enough to work with the total coherence $C_{l_1}$. Instead, we want to take into account the coherence $2|\rho_{\mathcal{X}\mathcal{Y}}|$ between any pair of branches $\mathcal{X}$ and $\mathcal{Y}$ separately.

\begin{definition}[Transverse \emph{quantum} causal order]
    The \textit{transverse quantum causal order} between two events $\mathcal{E}_a$ and $\mathcal{E}_b$ in a collection of $M$ spacetimes $\{(\mathcal{M}_\mathcal{X},g_{\mathcal{X}})\}_{\mathcal{X}=1}^M$ is defined as
    \begin{equation}
        \check{\lamed}_{ab} := \sum_{1 \leq \mathcal{X} < \mathcal{Y}}^M 2|\rho_{\mathcal{X}\mathcal{Y}}|\tsadi_{ab}^{\mathcal{X}\mathcal{Y}}
    \end{equation}
    with $\check{\lamed}_{ab} = \check{\lamed}_{ba} = \check{\lamed}_{(ab)}$.
\end{definition}
Naturally, these definitions give rise to a notion of \emph{quantum causal indefiniteness},
\begin{equation}
    \check{\delta}(\mathcal{A},\mathcal{B}):= 2|\rho_{\mathcal{A} \mathcal{B}}| \sum_{1 \leq i < j}^N \abs{s_{ij}^{\mathcal{A}} - s_{ij}^{\mathcal{B}}}^0,
\end{equation}
and \emph{quantum total causal indefiniteness}, 
\begin{equation}
    \check{\Delta} := \sum_{1 \leq \mathcal{X} < \mathcal{Y}}^M 2|\rho_{\mathcal{X}\mathcal{Y}}|\delta(\mathcal{X},\mathcal{Y}),
\end{equation}    
as well as a quantum version of the total causal order.

\begin{definition}[Total \emph{quantum} causal order]
    The \textit{total quantum causal order} between a collection of $M$ spacetimes $\{(\mathcal{M}_\mathcal{X},g_{\mathcal{X}})\}_{\mathcal{X}=1}^M$ is 
\begin{equation}
    \check{\mathfrak{s}}_{\mathrm{tot}} := \sum_{1 \leq \mathcal{X} < \mathcal{Y}}^M \check{\ayin}^{\mathcal{X}\mathcal{Y}} \equiv \sum_{1 \leq i < j}^N \check{\lamed}_{ij}.
\end{equation}
\end{definition}

This set of quantifiers incorporates specific quantum mechanical aspects, in particular the coherence between pairs of spacetimes, in the characterisation of indefinite causal order. Future work could explore the operational meaning of these quantum mechanical quantifiers as well as an extension of the knot theoretic or diagrammatic representation that captures them. The latter might require equipping the diagrams with additional structure, encoding the amplitude of the branches of the superposition.

\section{Discussion} \label{sec:discussion}

In this paper, we explored how the indefiniteness of causal order, for an arbitrary finite number of events and spacetimes in superposition, can be quantified and found a compelling connection to knot theory. In particular, we proposed a way to represent superpositions of different ordered sequences of events in terms of knot diagrams. Our results indicate that scenarios with maximally indefinite causal order correspond to torus knots, those with definite causal order to the simple unknot, and, more generally, irreducible causal sequences to prime knots. Moreover, we demonstrated that the indefiniteness of causal order for a superposition of two spacetimes with $N$ events, whose subsequences are either in definite or maximally indefinite causal order, is related to the quadratic term of the Alexander-Conway polynomial of the corresponding knot. This characterisation facilitates the classification of potential causal orderings by representing them as inequivalent knots and using knot invariants to differentiate them. Additionally, since the quadratic term of the Alexander-Conway polynomial is a topological invariant, it immediately follows from this characterisation that causal definiteness or maximal indefiniteness cannot be altered by any topology-preserving map. This result aligns with previous work showing the invariance of causal order under quantum diffeomorphisms \cite{Delahamette2022} and builds the connection to earlier results from the process matrix formalism that demonstrate its invariance under arbitrary continuous and reversible transformations \cite{CastroRuiz2018}.

An important open conjecture is whether the relationship between the indefiniteness of causal order and knot invariants extends to arbitrary causal sequences. This would require proving the equivalence for arbitrary irreducible causal sequences. This might be reached through an analysis of the Seifert matrix $[A_{ij}]$ associated with these prime knots, which are directly related to the quadratic coefficient of the Alexander-Conway polynomial \cite{Tsau2016}. \sam{Though Conjecture \ref{conj:renaming} does not follow from Conjecture \ref{Causal indefiniteness prime knot}, as two different knots may share the same quadratic term of their respective Alexander-Conway polynomial, proving the former would certainly be a major step in proving the latter.} It would also be interesting to determine under what conditions a given knot represents a causal ordering. Although not all knots seem to be related to causal orderings in our construction -- for instance the figure-eight knot, being a knot with four crossings, cannot be mapped to irreducible sequences -- it may be insightful to determine general properties of knots that are indeed associated to indefinite causal structures. For example, are all knots with an odd number of crossings related to some causal structures?

Another potential avenue for future research would be to connect the present work to various diagrammatic frameworks modelling indefinite causal structures. This includes, for example, the process matrix formalism (e.g.~\cite{Oreshkov_2012, Araujo_2015}), extended circuit formalisms \cite{Vanrietvelde2021routedquantum, Ormrod2023causalstructurein, Vanrietvelde2023} and, more generally, frameworks within quantum causal modelling (e.g.~\cite{Costa_2016,barrett2020quantum, Barrett2021rev}). \sam{Though quantum-controlled indefinite causal order cannot violate causal inequalities \cite{Wechs2021}, it would be interesting to understand whether the framework in this paper can be extended to more general setups, e.g.~to quantised worldlines and non-quasiclassical spacetimes which could go beyond this quantum-controlled formalism, and lead to knots and invariants which could be associated to causal inequality violations.}

\sam{An experimental realisation of the quantum 4-switch, which is a special case of our formalism for $N=4$ events, has been performed in \cite{taddei_computational_2021}. More specifically, the authors looked at \acd{the case of} $N=4,M=4$ with causal sequences $ABCD-BADC-CBDA-DACB$. This corresponds to a total causal indefiniteness of $\Delta = 21$. It is shown in \cite{Chiribella_2012,Araujo_2014,taddei_computational_2021, Renner_2021, Renner_2022} that the quantum $N$-switch process can provide a computational advantage over fixed-gate-order quantum circuits, dependent on both $N$ and $M$, for solving a particular phase-estimation problem. In future work, it would be interesting to understand whether the (total) causal indefiniteness $\delta$ and $\Delta$ can be linked to the complexity of certain protocols, and whether a bound on these quantifiers relate to a possible computational advantage. For example, if a given causal sequence in some $(N,M)$ case has $\Delta \geq \Delta_{\text{crit}}(P)$, then computational advantage could be reached for some protocol $P$.}

Moreover, it would be interesting to explore the group theoretic features of some of the constructions in the present work. This includes, on the one hand, the study of the knot groups associated to, in particular, the unknot and torus knots representing definite and maximally indefinite causal order, respectively. On the other hand, it might be fruitful to explore the link to the braid group. This connection is motivated by a theorem by Alexander \cite{Alexander1923}, which states that every knot can be represented as a closed braid. However, this correspondence is not one-to-one \cite{Birman1975}, in the sense that multiple braids can correspond to the same knot, and would thus be assigned the same causal order invariant. This could be traced back to the fact that the braiding distinguishes between different choices of the reference spacetime, whereas the knot construction does not, as we saw in Sec.~\ref{sec:knots}. Establishing the link to the braid group would open the door to explore further connections to topological quantum field theory, such as Chern-Simons theory and anyons.

Finally, the present framework can be generalised in several directions. Firstly, we may allow for infinitely many events across all spacetimes in superposition, thereby considering the continuum limit $N\to \infty$. Secondly, one may want to allow for a continuum of spacetimes in superposition, that is, take the limit $M\to\infty$ and consider states of the form 
\begin{equation}
    \ket{\Psi} = \int_\Omega c(g^\mathcal{X}) \ket{\Tilde{g}^\mathcal{X}} d\mu(g^\mathcal{X}).
\end{equation}
This, however, would require a meaningful measure $\mu(g^\mathcal{X})$ on a measurable set $\Omega$ of spacetime metrics \footnote{This might be feasible when restricting to a superposition of spacetimes $g^\mathcal{X}$ from a tractable subset of spacetime metrics. Dealing with superpositions of arbitrary spacetime metrics, on the other hand, would require finding a path integral for general (Lorentzian) metrics. This, however, is an open problem, with various approaches to quantum gravity offering different directions for solutions, such as (causal) dynamical triangulations \cite{Loll_2019} or spin foam methods \cite{Perez_2012}. \vspace{1cm}}. In both of these limits, it would be necessary to find an appropriate normalisation of the various quantifiers for indefinite causal order in order to prevent divergences.

Further generalisation could involve considering scenarios with indefinite causal occurrences, where some events may or may not be present in some spacetimes. As a simple example, consider the case of a particle in the presence of a black hole, where in one branch of the superposition, the particle falls into the black hole whereas it escapes into infinity in the other. Here, the event of the particle falling into the black hole only occurs in one branch. Currently, we exclude such scenarios by requiring that the number of events is the same across all spacetimes. An even more intriguing scenario arises when a particle in one branch causes a collapse into a black hole by falling into a massive star while this does not happen in the other branch, thus causing a change in the topology of spacetime in a single branch of the superposition. This would require a generalisation of our framework to account for a superposition of topologically inequivalent spacetimes. Lastly, one could consider other exotic topological scenarios, for instance closed timelike curves. Such cases could be naturally considered through so-called fusion diagrams, often used in the field of topological quantum matter. We believe that these are just a few avenues for the future exploration of the intersection between indefinite causal order, knot theory, and quantum gravity.

\begin{acknowledgements}
SF would like to thank Adrian Kent for interesting discussions. We would like to extend our thanks to Rob Pratt for useful insights into finding the maximal bound of $\Delta$ for general finite $M$ and $N$. SF is funded by a studentship from the Engineering and Physical Sciences Research Council. VK acknowledges support through a DOC Fellowship of the Austrian Academy of Sciences.
This research was funded in whole or in part by the Austrian Science Fund (FWF) [10.55776/F71] and [10.55776/COE1]. For open access purposes, the author has applied a CC BY public copyright license to any author accepted manuscript version arising from this submission. Funded by the European Union - NextGenerationEU. 
This publication was made possible through the financial support of the ID 62312 grant from the John Templeton Foundation, as part of The Quantum Information Structure of Spacetime (QISS) Project (qiss.fr). The opinions expressed in this publication are those of the authors and do not necessarily reflect the views of the John Templeton Foundation.
\end{acknowledgements}

\onecolumngrid 
\begin{appendix}
   \sam{ \section{Examples satisfying Conjectures \ref{conj:renaming} and \ref{Causal indefiniteness prime knot}.}

    \label{sec:App conj}

    In total, there are $2!+3!+4! = 32$ knots for $N \leq 4$ events, and $8$ pairs of knots in those which can be associated upon renaming, the others being their own renamings. For example, setting $A':=B, B':= A, C':= D$ and $D':= C$ shows that ABCD-BADC should really be understood as the same pair of event sets as B'A'D'C'-A'B'C'D', i.e.~the resulting knot for this sequence is, by definition, invariant under the choice of base spacetime. On the other hand, setting $A'':=B, B'':= C$ and $C'':= A$ shows that ABC-BCA and C''A''B''-A''B''C'' are again just renamings of the same pair of event sets, with different choices of base spacetime. We now show that those which are not invariant all satisfy Conjectures \ref{conj:renaming} (and \ref{Causal indefiniteness prime knot}) for $N \leq 4$. We write a DT sequence of the knot construction and provide the associated Alexander-Conway polynomial for each of the knots.

    \begin{enumerate}
        \item ABC-BCA and ABC-CAB.

        \begin{figure}[h!]
            \centering
            \subfloat[ABC-BCA, \newline DT: 6,-10,12,2,-4,-8]{\includegraphics[width=0.3\linewidth]{ABC-BCA.png}}
            \subfloat[ABC-CAB, \newline DT: -10,-8,12,-4,-2,6]{\includegraphics[width=0.25\linewidth]{ABC-CAB.png}}
            \caption{{1. ABC-BCA and ABC-CAB: $\nabla(z) = 1+2z^2 \Rightarrow \nabla(z)_2 = 2 = \delta$.}}
            \label{fig:ABC-BCA and ABC-CAB}
        \end{figure}

        \item ABCD-BCAD and ABCD-CABD -- clearly these are just the knots of ABC-BCA and ABC-CAB, respectively, since adding D-D at the end does not change the knot (it corresponds to performing two type I Reidermeister moves): $\nabla(z) = 1+2z^2 \Rightarrow \nabla(z)_2 = 2 = \delta$.

        \item ABCD-ACDB and ABCD-ADBC.
        
        \begin{figure}[h!]
            \centering
            \subfloat[ABCD-ACDB, \newline DT: 8,-14,-12,-16,2,-6,-4,10]{\includegraphics[width=0.3\linewidth]{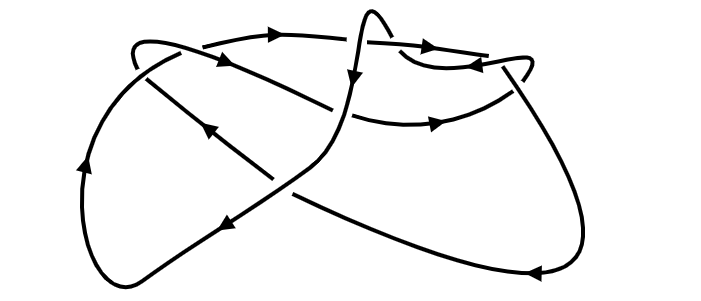}}
            \subfloat[ABCD-ADBC, \newline DT: -12,-10,14,-16,-4,-2,8,6]{\includegraphics[width=0.3\linewidth]{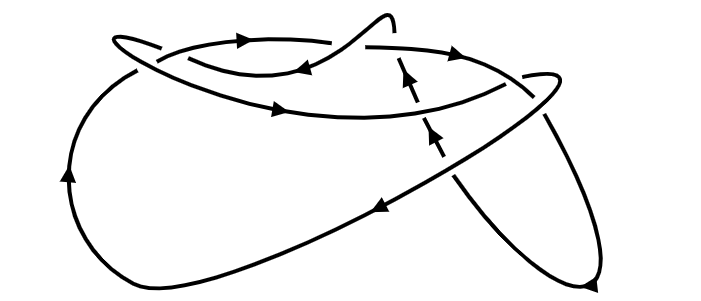}}
            \caption{{3. ABCD-ACDB and ABCD-ADBC: $\nabla(z) = 1+2z^2 \Rightarrow \nabla(z)_2 = 2 = \delta$.}}
            \label{fig:ABCD-ACDB and ABCD-ADBC}
        \end{figure}

        \item ABCD-BCDA and ABCD-DABC.

        \begin{figure}[h!]
            \centering
            \subfloat[ABCD-BCDA, \newline DT: 8, -14, -12, 16, 2, -6, -4, -10]{\includegraphics[width=0.3\linewidth]{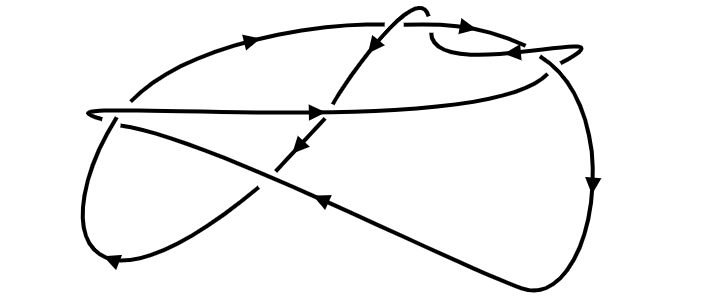}}
            \subfloat[ABCD-DABC, \newline DT: -14, -12, -10, 16, -6, -4, -2, 8]{\includegraphics[width=0.3\linewidth]{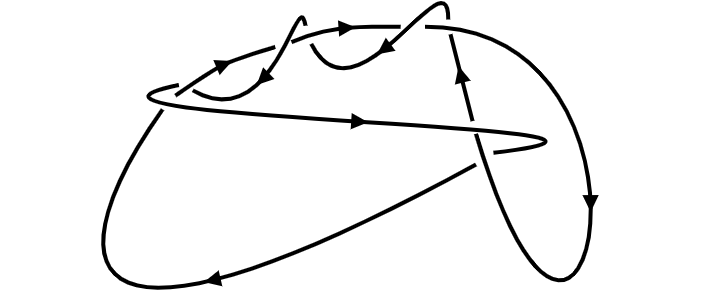}}
            \caption{{4. ABCD-BCDA and ABCD-DABC: $\nabla(z) = 1+3z^2 \Rightarrow \nabla(z)_2 = 3 = \delta$.}}
            \label{fig:ABCD-BCDA and ABCD-DABC}
        \end{figure}

        \item ABCD-BDAC and ABCD-CADB.

        \begin{figure}[h!]
            \centering
            \subfloat[ABCD-BDAC, \newline DT: 14, -10, -20, -18, -4, -22, 24, 2, 12, -6, 8, -16]{\includegraphics[width=0.3\linewidth]{ABCD-BDAC.png}}
            \subfloat[ABCD-CADB, \newline DT: 20, 10, -16, 26, -22, 4, -18, -6, 24, 8, 2, 12, -14]{\includegraphics[width=0.3\linewidth]{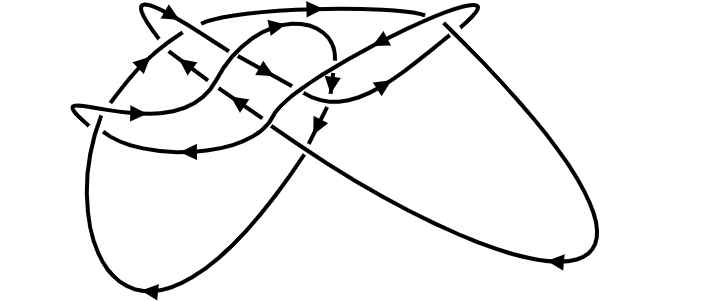}}
            \caption{{5. ABCD-BDAC and ABCD-CADB: $\nabla(z) = 1+3z^2+z^4 \Rightarrow \nabla(z)_2 = 3 = \delta$.}}
            \label{fig:ABCD-BDAC and ABCD-CADB}
        \end{figure}

        \item ABCD-BDCA and ABCD-DACB. 

        \begin{figure}[h!]
            \centering
            \subfloat[ABCD-BDCA, \newline DT: 8,12,14,18,2,-16,4,6,-10]{\includegraphics[width=0.3\linewidth]{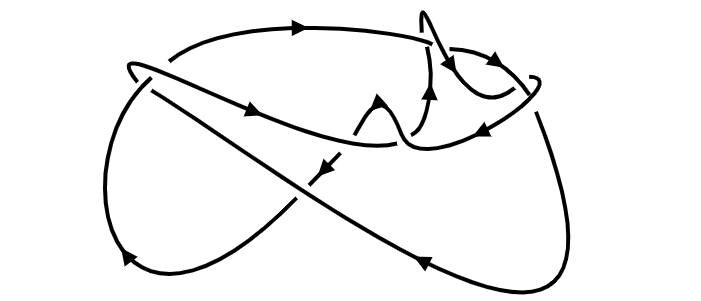}}
            \subfloat[ABCD-DACB, \newline DT: 16, 10, 12, 20, -18, 4, 6, 8, 2, 14]{\includegraphics[width=0.3\linewidth]{ABCD-DACB.png}}
            \caption{{6. ABCD-BDCA and ABCD-DACB: $\nabla(z) = 1+4z^2+2z^4 \Rightarrow \nabla(z)_2 = 4 = \delta$.}}
            \label{fig:ABCD-BDCA and ABCD-CADB}
        \end{figure}

        \item ABCD-CBDA and ABCD-DBAC.

        \begin{figure}[h!]
            \centering
            \subfloat[ABCD-CBDA, \newline DT: 8, 14, -12, 18, 2, -6, 16, 4, -10]{\includegraphics[width=0.3\linewidth]{ABCD-CBDA.png}}
            \subfloat[ABCD-DBAC, \newline DT: 12, 14, -10, 18, -6, -16, 2, 4, 8]{\includegraphics[width=0.3\linewidth]{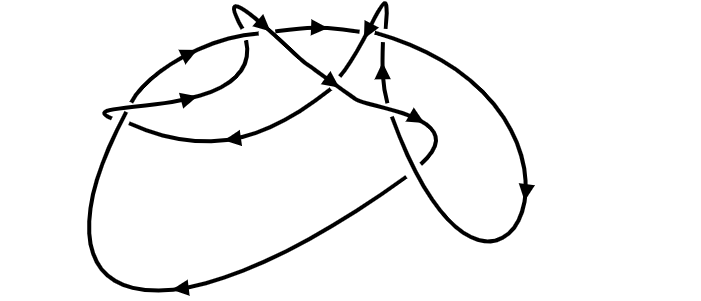}}
            \caption{{7. ABCD-CBDA and ABCD-DBAC: $\nabla(z) =  1+4z^2+2z^4 \Rightarrow \nabla(z)_2 = 4 = \delta$.}}
            \label{fig:ABCD-CBDA and ABCD-DBAC}
        \end{figure}
        
        \item ABCD-CDBA and ABCD-DCAB.
    \end{enumerate}

    \begin{figure}[h!]
            \centering
            \subfloat[ABCD-CDBA, \newline DT: 8, 10, -14, 16, 2, 4, -6, -12]{\includegraphics[width=0.3\linewidth]{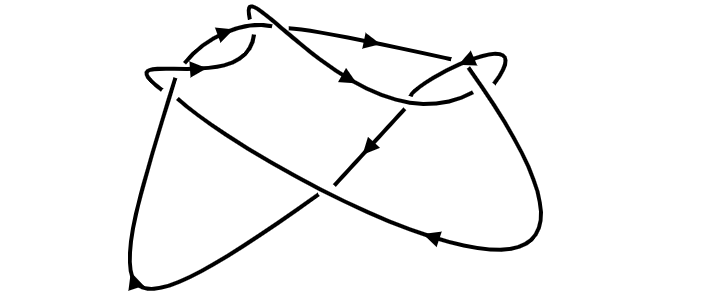}}
            \subfloat[ABCD-DCAB, \newline DT: -12, -10, 14, 16, -4, -2, 8, 6]{\includegraphics[width=0.3\linewidth]{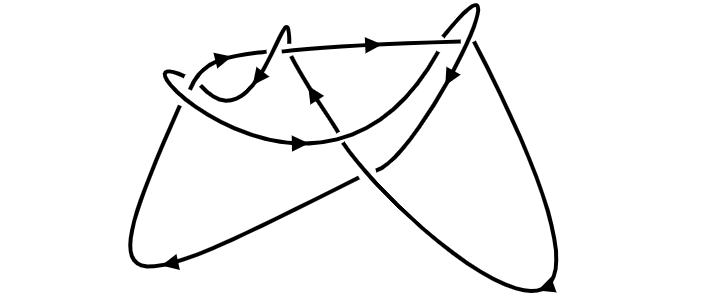}}
            \caption{{8. ABCD-CDBA and ABCD-DCAB: $\nabla(z) = 1+5z^2+2z^4 \Rightarrow \nabla(z)_2 = 5 = \delta$.}}
            \label{fig:ABCD-CDBA and ABCD-DCAB}
        \end{figure}

    The rest of the sequences with $N \leq 4$ events are their own renamings. We not only see that the quadratic coefficients of the Alexander-Conway polynomials are the same, but the knots themselves are the same. We have thus shown that Conjecture \ref{conj:renaming} holds for $N \leq 4$ events, i.e.~the choice of base spacetime is irrelevant for $N \leq 4$. It is conjectured that this holds for any $N \in \mathbb{N}$.
    }
\pagebreak

    \section{Unitarity of Transformation in Operational Encoding}
    \label{app:unitarity}
    {Here, we show that the transformation \eqref{eq:UR} is indeed unitary, that is, $\hat{U}_{\mathrm{R}} \hat{U}_{\mathrm{R}}^\dagger = \hat{U}_{\mathrm{R}}^\dagger \hat{U}_{\mathrm{R}} = \mathbb{1}_{\mathrm{R}}$. This immediately implies the unitarity of the total transformation $\hat{U}_{rel} = \mathbb{1}_G\otimes \hat{U}_R \otimes \mathbb{1}_T$ since the latter acts trivially on the gravitational part of the state and the target system. We start by showing that 	$\hat{U}_{\mathrm{R}} \hat{U}_{\mathrm{R}}^\dagger = \mathbb{1}_R$. 
	\begin{align*}
	\hat{U}_{\mathrm{R}} \hat{U}_{\mathrm{R}}^\dagger &= \int \left(\prod_{n=1}^Nd\tau_n\right)\left(\prod_{m=1}^Nd\tilde{\tau}_m\right)|\det J|\left(\bigotimes_{i=1}^{N-1}|\tau_{i+1}-\tau_i\rangle\underset{\delta(\tau_i-\tilde{\tau}_i)}{\underbrace{\langle\tau_i|\tilde{\tau}_i\rangle}}\langle \tilde{\tau}_{i+1}-\tilde{\tau}_i|_{R_i}\right) 
		\otimes |\sum_{k=1}^N\tau_k\rangle\underset{\delta(\tau_N-\tilde{\tau}_N)}{\underbrace{\langle \tau_N|\tilde{\tau}_N\rangle}}\langle \sum_{l=1}^N\tilde{\tau}_l|_{R_N}\\
		&=\int \left(\prod_{n=1}^Nd\tau_n\right) |\det J | \bigotimes_{i=1}^{N-1}|\tau_{i+1}-\tau_i\rangle\langle \tau_{i+1}-\tau_i|_{R_i} \otimes |\sum_{k=1}^N\tau_k\rangle\langle \sum_{l=1}^N\tau_l|_{R_N}
	\end{align*}
	In the first line we used that, because the coordinate transformation is linear, the Jacobian does not depend explicitly on $\tau$ resp. $\tilde{\tau}$ and thus $|\det J(\tau)| = |\det J(\tilde{\tau})|\equiv |\det J|$ (see below for the concrete expression). To show that this is indeed the identity on $\mathrm{R}$, we change the integration variables from $\tau_{i+1}-\tau_i \to \tau_i$ for $i=1,\dots,N-1$ and $\sum_{k=1}^N \tau_k\to \tau_N$. Note that this is precisely the inverse transformation of that, with respect to which the Jacobian $J$ is defined. Since the Jacobian of the inverse transformation is simply the inverse of the Jacobian, the integration measure is rescaled by $|\det (J^{-1})|=|\det J|^{-1}$. We thus obtain
	\begin{align*}
	\hat{U}_{\mathrm{R}} \hat{U}_{\mathrm{R}}^\dagger =  \int \left(\prod_{n=1}^Nd\tau'_n\right) |\det J |^{-1} |\det J | \bigotimes_{i=1}^{N-1}|\tau'_i\rangle\langle \tau'_i|_{R_i} \otimes |\tau'_N\rangle\langle\tau'_N|_{R_N}
	=	\int \left(\prod_{n=1}^Nd\tau'_n\right) \bigotimes_{i=1}^{N}|\tau'_i\rangle\langle \tau'_i|_{R_i} = \mathbb{1}_{\mathrm{R}}.
	\end{align*}
	Next, let us show $\hat{U}_{\mathrm{R}}^\dagger \hat{U}_{\mathrm{R}} = \mathbb{1}_R$. We have 
	\begin{align*}
		\hat{U}_{\mathrm{R}}^\dagger \hat{U}_{\mathrm{R}} &= \int \left(\prod_{n=1}^Nd\tau_n\right)\left(\prod_{m=1}^Nd\tilde{\tau}_m\right)|\det J|\bigotimes_{i=1}^{N-1} \left(|\tau_i\rangle\underset{\delta(\tilde{\tau}_{i+1}-\tilde{\tau}_i -\tau_{i+1} +\tau_i)}{\underbrace{\langle \tau_{i+1} -\tau_i|\tilde{\tau}_{i+1}-\tilde{\tau}_i\rangle}} \langle \tilde{\tau_i}|_{R_i} \right)\otimes |\tau_N\rangle \underset{\delta(\sum_{k=1}^N(\tilde{\tau}_k-\tau_k))}{\underbrace{\langle \sum_{k=1}^N\tau_k|\sum_{l=1}^N\tilde{\tau}_l\rangle}} \langle \tilde{\tau}_N|_{R_N}.
	\end{align*}
	We can simplify this using the following property of the multivariate delta-distribution:
	\begin{equation*}
		\delta(\vec{g}(\vec{x})) = \frac{\delta(\vec{x}-\vec{x}_0)}{|\det J_{\vec{g}}(\vec{x}_0)|}.
	\end{equation*}
	In our case, $g_i(\vec{\tilde{\tau}}) = \tilde{\tau}_{i+1}-\tilde{\tau}_i -\tau_{i+1} +\tau_i$ for $i=1,\dots, N-1$ and $g_N(\vec{\tilde{\tau}}) = \sum_{k=1}^N(\tilde{\tau}_k-\tau_k)$. When seen as a function of $\vec{\tilde{\tau}}$, the terms proportional to elements of $\vec{\tau}$ do not affect the Jacobian and we simply have $J_{\vec{g}} = J$. The zero of the function inside the delta-distribution is simply given by $\vec{\tilde{\tau}}=\vec{\tau}$. We can thus rewrite the above expression as
	\begin{align*}
		\hat{U}_{\mathrm{R}}^\dagger \hat{U}_{\mathrm{R}} &= \int \left(\prod_{n=1}^Nd\tau_n\right)\left(\prod_{m=1}^Nd\tilde{\tau}_m\right)
			|\det J|\frac{\delta(\vec{\tilde{\tau}} - \vec{\tau})}{|\det J|}\bigotimes_{i=1}^N|\tau_i\rangle\langle \tilde{\tau}_i|_{R_i}= \int \left(\prod_{n=1}^Nd\tau_n\right) \bigotimes_{i=1}^N|\tau_i\rangle\langle \tau_i|_{R_i} = \mathbb{1}_{\mathrm{R}}.
	\end{align*}
	This concludes the proof of unitarity. To simplify the expression for the state after the unitary transformation, we also compute explicitly the constant factor $\sqrt{|\det J|}$ that rescales the wavefunction. To this end, note first that for $N=2$ and $N=3$ the Jacobi determinant is
	\begin{align*}
		\det J_2 \equiv  \det \begin{pmatrix}-1 & 1 \\ 1 & 1\end{pmatrix} = -2, \hspace{0.5cm} \det J_3 \equiv  \det \begin{pmatrix} -1 & 1 & 0 \\ 0 & -1 & 1 \\ 1 & 1 &1 		\end{pmatrix} = -1 \cdot \det \begin{pmatrix}-1 & 1 \\ 1 & 1\end{pmatrix} + 1\cdot \det \begin{pmatrix}1 & 0 \\ -1 & 1\end{pmatrix} = 3.
	\end{align*}
	This generalises to arbitrary $N$ as
	\begin{equation}
		\det J_N = (-1)^N \cdot N,\label{eq:JacobiDet}
	\end{equation}
	which can be proven via induction. In particular, assuming that Eq.~\eqref{eq:JacobiDet} holds for some $N\geq 2$, we can compute the determinant for the $(N+1)$-dimensional Jacobian as follows:
	\begin{align*}
		\det J_{N+1} &= \det \begin{pmatrix}
			-1 & \begin{matrix} 1 & 0 & \dots & 0 \end{matrix} \\
			\begin{matrix} 0 \\ \vdots \\ 0 \\ 1 \end{matrix} & J_N
		\end{pmatrix}		= -1 \cdot \det J_N + (-1)^N\cdot \det \begin{pmatrix} 1 & 0 & 0 & \dots & 0 \\ -1 & 1 & 0 & \dots & 0 \\ \vdots & & & & \vdots \\ 0 & \dots & 0 & -1 & 1 \\ \end{pmatrix}\\ &= (-1)^N
		\cdot N + (-1)^N \cdot 1 = (-1)^{(N+1)-1}(N+1).
\end{align*}			
	where we used the induction hypothesis and the fact that the determinant of a triangular matrix is just the product of its diagonal entries in going from the first to the second line. }
\end{appendix}

\onecolumngrid

\bibliography{library}
\bibliographystyle{quantum.bst}

\end{document}